\documentclass[11pt,reqno]{amsart}
\usepackage{txfonts}
\usepackage{mathrsfs}
\usepackage{amsfonts}
\allowdisplaybreaks[4]
\usepackage{graphicx} 
\usepackage{epsfig}
\usepackage{dsfont}
\usepackage{braket}
\usepackage{amssymb}
\usepackage{tcolorbox}
\usepackage{tikz}
\usepackage{amsmath,xypic}
\usepackage{cite,color}
\usepackage{enumerate}
\newtheorem{theorem}{Theorem}

\newtheorem{proposition}{Proposition}
\newtheorem{corollary}{Corollary}

\newtheorem{lemma}{Lemma}

\newtheorem{example}{Example}
\newcommand{\be}{\begin{equation}}
\newcommand{\ee}{\end{equation}}
\newcommand{\bea}{\begin{eqnarray}}
\newcommand{\eea}{\end{eqnarray}}
\newcommand{\ba}{\begin{array}}
	\newcommand{\ea}{\end{array}}
\newcommand{\bean}{\begin{eqnarray*}}
	\newcommand{\eean}{\end{eqnarray*}}

\begin{document}
\title{Tau functions of the constrained matrix KP hierarchy}
\author{Xiaohan Fan$^1$, Jipeng Cheng$^{2*}$, Jinbiao Wang$^1$}
\dedicatory { $^1$ School of Mathematics, China University of
Mining and Technology, Xuzhou, Jiangsu 221116, P.\ R.\ China\\
$^2$ School of Mathematical Sciences, Huaqiao University, Quanzhou, Fujian 362021, P.\ R.\ China}
\thanks{*Corresponding author. Email: chengjp@hqu.edu.cn, chengjipeng1983@163.com.}
\begin{abstract}
  The constrained matrix KP hierarchy $(L^{k})_{<0}=\sum_{i=1}^{m}Q_{i}\partial^{-1}R_{i}^{\intercal}$ is investigated from the aspects of tau functions. Firstly, the matrix KP
  hierarchy is viewed as one special reduction of the multi-component KP hierarchy. Then bilinear equations of the constrained matrix KP hierarchy as the multi-component KP hierarchy are given in terms of tau functions. Finally based upon these results, the tau functions for the constrained matrix KP hierarchy are constructed by using the multi-component boson-fermion correspondence. Notice that the solutions of the constrained matrix KP hierarchy are derived without using quasi-determinants.\\
\textbf{Keywords}:  \ constrained matrix KP hierarchy,\ \ multi-component KP hierarchy,\ \  bilinear equation,\ \ tau function, \ \ boson-fermion correspondence.\\
\textbf{2020 MSC}: 35C08, 35Q53, 37K10, 37K40\\
\textbf{PACS}: 02.30.Ik
\end{abstract}

\maketitle

\tableofcontents

\section{Introduction}
\subsection{The matrix KP hierarchy}
The KP hierarchy \cite{harnad2021,dickey2003,date1983,mulase1994} is one of the most important research objects in integrable systems, which has many famous extensions, such as the multi-component KP hierarchy \cite{kac2023,kac1993,teo2011}, the non-commutative KP hierarchy \cite{adler1998,gilson2007,blower2026,he2011,kupershmidt2000} and so on. And these extensions of the KP hierarchy constitute the famous KP theory\cite{harnad2021,dickey2003,date1983,mulase1994}, which is widely used in mathematical physics. Here in this paper, we are more interested in the union of the multi-component and the non-commutative extensions for the KP hierarchy, that is the matrix KP hierarchy \cite{prokofev2021,pashkov2018,tacchella2011,bergvelt2009,konopel1991,huang2011}.

The matrix KP hierarchy \cite{prokofev2021,pashkov2018,tacchella2011,bergvelt2009,konopel1991,huang2011} is defined by the following Lax equation:
\begin{equation}\label{matrixKPLaxeq}
  L(\mathbf{t},\partial)_{t_{n}}=[B_{n}(\mathbf{t},\partial),L(\mathbf{t},\partial)].
\end{equation}
Here the Lax operator $L(\mathbf{t},\partial)$ is given by the following matrix pseudo-differential operator
\begin{equation}\label{matrixLaxdefination}
  L(\mathbf{t},\partial)=I_{N}\partial+U_{2}(\mathbf{t})\partial^{-1}+U_{3}(\mathbf{t})\partial^{-2}+\cdots,
\end{equation}
where $\partial=\partial_{ x}$, \( I_N \) denotes the \( N \times N \) identity matrix, and each $U_i(\mathbf{t})$ with $\mathbf{t}=(t_{1}=x, t_{2}, t_{3},\cdots)$ is an $N\times N$ matrix of the following form:
\begin{align*}
U_i(\mathbf{t})=
\begin{pmatrix}
U_{i,11}(\mathbf{t}) & \cdots & U_{i,1N}(\mathbf{t}) \\
\vdots & \ddots & \vdots \\
U_{i,N1}(\mathbf{t}) & \cdots & U_{i,NN}(\mathbf{t})
\end{pmatrix},\quad
i\geq2.
\end{align*}
And $B_{n}(\mathbf{t},\partial)=L^{n}(\mathbf{t},\partial)_{\geq0}$ with $(\sum_{i}A_{i}\partial^{i})_{\geq0}=\sum_{i\geq0}A_{i}\partial^{i}. $

Similar to the ordinary KP hierarchy, the matrix KP Lax operator $L$ can also be expressed in terms of the matrix dressing operator \cite{konopel1991,prokofev2021,huang2011}
$$W(\mathbf{t},\partial)=I_{N}+W_{1}(\mathbf{t})\partial^{-1}+W_{2}(\mathbf{t})\partial^{-2}+\cdots,$$
where $W_{i}(\mathbf{t})$ is an $N\times N$ matrix function of $\mathbf{t}$ such that the following identities hold:
\begin{align*}
  L(\mathbf{t},\partial)=W(\mathbf{t},\partial)\partial W(\mathbf{t},\partial)^{-1},\quad W(\mathbf{t},\partial)_{t_{n}}=-L(\mathbf{t},\partial)^{n}_{<0}W(\mathbf{t},\partial),
\end{align*}
where $(\sum_{i}A_{i}\partial^{i})_{<0}=\sum_{i<0}A_{i}\partial^{i}.$
Furthermore, we can define the matrix wave function $\Psi(\mathbf{t},z)$ and the matrix adjoint wave function $\Psi^{*}(\mathbf{t},z)$ as follows \cite{prokofev2021,kac2023}:
$$\Psi(\mathbf{t},z)=W(\mathbf{t},\partial)(e^{\xi(\mathbf{t},z)}), \quad \Psi^{*}(\mathbf{t},z)=W(\mathbf{t},\partial)^{-1*}(e^{-\xi(\mathbf{t},z)}),$$
where $\xi(\mathbf{t},z)=\sum_{i=1}^{+\infty}t_{i}z^{i},$ and $\Psi(\mathbf{t},z), \Psi^{*}(\mathbf{t},z)$ are $N\times N$ matrices. The adjoint symbol $``*"$ is defined by $\left( \sum_{j} A_j \partial^j \right)^* = \sum_{j} (-\partial)^j A_j^{\intercal},$
with $A^{\intercal}$ being the transpose of the matrix $A$. Moreover, the matrix wave function $\Psi(\mathbf{t},z)$ and the matrix adjoint wave function $\Psi^{*}(\mathbf{t},z)$ satisfy:
\begin{align*}
  L(\mathbf{t},\partial)^n\bigl(\Psi(\mathbf{t},z)\bigr) &= z^n \Psi(\mathbf{t},z), \quad \Psi(\mathbf{t},z)_{\mathbf{t}_n} = B_n(\mathbf{t},\partial)\bigl(\Psi(\mathbf{t},z) \bigr),\\
  L(\mathbf{t},\partial)^{n*}\bigl(\Psi^{*}(\mathbf{t},z)\bigr) &= z^n \Psi^{*}(\mathbf{t},z), \quad \Psi^{*}(\mathbf{t},z)_{\mathbf{t}_n} = -B_n(\mathbf{t},\partial)^{*}\bigl(\Psi^{*}(\mathbf{t},z) \bigr),
\end{align*}
and the following bilinear equation \cite{kac2023,kac1993}:
\begin{align}\label{ordinarybilinearequation}
  \operatorname{Res}_{z}\Psi(\mathbf{t},z)\Psi^*(\mathbf{t}',z)^\intercal=0,
\end{align}
where $\operatorname{Res}_{z}\sum_{i}A_{i}z^{i}=A_{-1}$ for $N\times N$ matrices $A_{i}$.

In this paper, we are more interested in one important reduction of the matrix KP hierarchy, called the $(k,m)$-constrained matrix KP hierarchy \cite{chvartatskyi2013,huang2011,oevel1993,kundu1995}:
\begin{align}\label{con-matrix-KP-Lax-operator}
  L(\mathbf{t},\partial)^{k}=L(\mathbf{t},\partial)^{k}_{\geq0}+\sum_{i=1}^{m}Q_{i}(\mathbf{t})\partial^{-1}R_{i}(\mathbf{t})^{\intercal}, \quad 1\leq i\leq m,
\end{align}
where $Q_{i}$ and $R_{i}$ are $N\times N$ matrix functions of $\mathbf{t}$, satisfying the following identities:
\begin{align}\label{qrderivative}
   Q_{i,t_{n}}(\mathbf{t})=B_n(\mathbf{t},\partial)(Q_{i}(\mathbf{t})),\quad R_{i,t_{n}}(\mathbf{t})=-B_n(\mathbf{t},\partial)^{*}(R_{i}(\mathbf{t})).
\end{align}
Here the $(k,m)$-constrained matrix KP hierarchy refers to the system
consisting of \eqref{matrixKPLaxeq} \eqref{con-matrix-KP-Lax-operator} and \eqref{qrderivative}. The aim of this paper is to solve the
$(k,m)$-constrained matrix KP hierarchy.

\subsection{The multi-component KP hierarchy}
In this subsection, we will show that the matrix KP hierarchy can be viewed as the reduction of the multi-component KP hierarchy, so that we can discuss the matrix KP hierarchy by tau functions of multi-component KP hierarchy.

The $N$-component KP hierarchy \cite{kac2023,kac1993,teo2011,zhang1999} is defined by $N+1$ matrix pseudo-differential operators $\mathcal{L}(\mathbf{x},D)$ and $\mathcal{C}^{(\alpha)}(\mathbf{x},D)$ with the following forms:
\begin{align*}
  \mathcal{L}(\mathbf{x},D) = I_N D + \sum_{j=1}^{\infty} \widetilde{U}_j(\mathbf{x}) D^{-j}, \quad
   \mathcal{C}^{(\alpha)}(\mathbf{x},D) = E_{\alpha\alpha} + \sum_{j=1}^{\infty} C_j^{(\alpha)}(\mathbf{x}) D^{-j}, \quad 1\leq \alpha\leq N,
\end{align*}
where $\mathbf{x}=(\mathbf{x}^{(1)},\mathbf{x}^{(2)},\cdots,\mathbf{x}^{(N)}),\,\mathbf{x}^{(\alpha)}=(x^{(\alpha)}_{1},x^{(\alpha)}_{2},\cdots)$ and \( D = \sum_{\alpha=1}^{N}\partial_{x_1^{(\alpha)} } \), $I_{N}$ is the $N\times N$ unit matrix, and \( E_{ij} \) stands for the \( N \times N \) matrix whose $(i,j)$ entry is $1$ and all other entries are zero.
Meanwhile, $\mathcal{L}$ and $C^{(\alpha)}$ also satisfy the following Lax equation:
\begin{equation}
\begin{aligned}\label{LandC}
&\mathcal{L}_{x_n^{(\beta)}} = \left[ \mathcal{B}_n^{(\beta)}, \mathcal{L} \right], \quad \mathcal{C}^{(\alpha)}_{x_n^{(\beta)}}= \left[ \mathcal{B}_n^{(\beta)}, \mathcal{C}^{(\alpha)} \right], \quad \mathcal{B}_n^{(\alpha)}=(\mathcal{C}^{(\alpha)}\mathcal{L}^{n})_{\geq0}, \quad 1\leq \alpha, \beta\leq N, \\
&\sum_{\alpha=1}^{N} \mathcal{C}^{(\alpha)} = I_N, \quad \mathcal{C}^{(\alpha)} \mathcal{L} = \mathcal{L} \mathcal{C}^{(\alpha)}, \quad \mathcal{C}^{(\alpha)} \mathcal{C}^{(\beta)} = \delta_{\alpha\beta} \mathcal{C}^{(\alpha)},
\end{aligned}
\end{equation}
where $(\sum_{i}A_{i}D^{i})_{\geq0}=\sum_{i\geq0}A_{i}D^{i}. $

The matrix KP hierarchy can be obtained from the $N$-component KP hierarchy by restricting the time variables in the following way \cite{zabrodin2019,prokofev2021}:
$$
\mathbf{x}^{(\alpha)} = \mathbf{t} \quad \text{for all } \alpha \, \text{with} \, 1\leq\alpha\leq N.
$$
In this case, $\partial_{t_{n}}=\sum_{\alpha=1}^{N}\partial_{x^{(\alpha)}_{n}}$ and in particular $\partial=D$. Thus if we set $L(\mathbf{t},\partial)=\mathcal{L}(\mathbf{x},D)|_{\mathbf{x}^{(\alpha)} = \mathbf{t}, 1\leq \alpha\leq N}$, then
\begin{align*}
L(\mathbf{t},\partial)_{t_{n}}&=\sum_{\beta=1}^N \mathcal{L}_{x^{(\beta)}_{n}}(\mathbf{x},D)|_{\mathbf{x}^{(\alpha)} = \mathbf{t}, 1\leq\alpha\leq N}=\sum_{\beta=1}^{N}[ \mathcal{B}_n^{(\beta)}(\mathbf{x},D), \mathcal{L}(\mathbf{x},D)]|_{\mathbf{x}^{(\alpha)} = \mathbf{t}, 1\leq\alpha\leq N}\\
 &=\bigg[\bigg(\mathcal{L}^{n}(\mathbf{x},D)\sum_{\beta=1}^{N}C^{(\beta)}(\mathbf{x},D)\bigg)_{\geq0}, \mathcal{L}(\mathbf{x},D)\bigg]\bigg|_{\mathbf{x}^{(\alpha)}=\mathbf{t}, 1\leq\alpha\leq N}=\bigg[\mathcal{L}^{n}(\mathbf{x},D)_{\geq0}, \mathcal{L}(\mathbf{x},D)\bigg]\bigg|_{\mathbf{x}^{(\alpha)}=\mathbf{t}, 1\leq\alpha\leq N}\\
 &=[B_{n}(\mathbf{t},\partial),L(\mathbf{t},\partial)],
\end{align*}
which is just the Lax equation of the matrix KP hierarchy. So the matrix KP hierarchy is a special reduction of the multi-component KP hierarchy.

Similarly for the $N$-component KP hierarchy, there exists a matrix dressing operator \cite{kac2023,kac1993} $$\mathcal{W}(\mathbf{x},D)= I_N + \sum_{i=1}^{\infty} \mathcal{W}_i(\mathbf{x}) D^{-i},$$
such that
\begin{align*}
&\mathcal{L}(\mathbf{x},D)= \mathcal{W}(\mathbf{x},D) D \mathcal{W}(\mathbf{x},D)^{-1}  ,\quad \mathcal{C}^{(\alpha)}(\mathbf{x},D)= \mathcal{W}(\mathbf{x},D) E_{\alpha\alpha} \mathcal{W}(\mathbf{x},D)^{-1}, \\
&\mathcal{W}(\mathbf{x},D)_{x_n^{(\alpha)}} = -\left( \mathcal{W}(\mathbf{x},D) E_{\alpha\alpha} D^n \mathcal{W}(\mathbf{x},D)^{-1} \right)_{<0} \mathcal{W}(\mathbf{x},D), \quad 1\leq \alpha\leq N, \,n\geq1,
\end{align*}
where $(\sum_{i}A_{i}D^{i})_{<0}=\sum_{i<0}A_{i}D^{i}. $
If we define the matrix wave function  \( \widetilde{\Psi}(\mathbf{x},z) \) and the matrix adjoint wave function  \( \widetilde{\Psi}^{*}(\mathbf{x},z) \) of the $N$-component KP hierarchy as follows:
\begin{align*}
  &\widetilde{\Psi}(\mathbf{x},z) =\left(\widetilde{\Psi}_{\alpha\beta}(\mathbf{x}, z)\right)_{\alpha,\beta=1}^N \triangleq \mathcal{W}(\mathbf{x},D)\left( \sum_{\alpha=1}^N E_{\alpha\alpha} e^{\xi(\mathbf{x}^{(\alpha)},z)} \right),\\
&\widetilde{\Psi}^{*}(\mathbf{x},z) =\left(\widetilde{\Psi}^{*}_{\alpha\beta}(\mathbf{x}, z)\right)_{\alpha,\beta=1}^N \triangleq {{(\mathcal{W}(\mathbf{x},D)^{-1})^{*}}}\left( \sum_{\alpha=1}^N E_{\alpha\alpha} e^{-\xi(\mathbf{x}^{(\alpha)},z)} \right),
\end{align*}
where $\xi(\mathbf{x}^{(\alpha)},z) = \sum_{l=1}^{\infty} x_l^{(\alpha)} z^l$ and $(\sum_{i}A_{i}D^{i})^{*}=\sum_{i}(-D)^{i}A_{i}^{\intercal}$, then we have the following bilinear equation for the $N$-component KP hierarchy
\begin{align}\label{N-bilinear}
 \operatorname{Res}_z \widetilde{\Psi}(\mathbf{x},z)\cdot\widetilde{\Psi}^{*}(\mathbf{x}',z)^{\intercal} = 0.
\end{align}
For the $N$-component KP hierarchy, there exist $N^{2}-N+1$ tau functions $\tau_{\alpha\beta}(\mathbf{x})\,(1\leq \alpha,\beta\leq N)$ with $\tau_{\alpha\alpha}(\mathbf{x})=\tau(\mathbf{x})$ such that \cite{kac2023,kac1993,zabrodin2019}
\begin{align}\label{wavematrix1}
  \widetilde{\Psi}_{\alpha\beta}(\mathbf{x}, z)=\varepsilon_{\alpha\beta} \frac{\tau_{\alpha\beta}\left(\mathbf{x}-[z^{-1}]_\beta\right)}{\tau(\mathbf{x})} z^{\delta_{\alpha\beta}-1} e^{\xi(\mathbf{x}^{(\beta)}, z)}, \quad
  \widetilde{\Psi}_{\beta\alpha}^{*}(\mathbf{x},z)=\varepsilon_{\beta\alpha}\frac{\tau_{\alpha\beta}\left(\mathbf{x}+[z^{-1}]_\alpha\right)}{\tau(\mathbf{x})} z^{\delta_{\beta\alpha}-1} e^{-\xi(\mathbf{x}^{(\alpha)}, z)},
\end{align}
where $\mathbf{x}\pm[z^{-1}]_{\beta}=(\mathbf{x}^{(1)},\cdots,\mathbf{x}^{(\beta-1)},\mathbf{x}^{(\beta)}\pm[z^{-1}],\mathbf{x}^{(\beta+1)},\cdots,\mathbf{x}^{(N)})$, $[z^{-1}]=(z^{-1}, z^{-2}/2, z^{-3}/3, \cdots)$, $\varepsilon_{\alpha\beta}=1$ if $\alpha\leq\beta$ and $\varepsilon_{\alpha\beta}=-1$ if $\alpha>\beta$.
Therefore \eqref{N-bilinear} is equivalent to the following bilinear equation \cite{kac2023,teo2011,zabrodin2019}:
\begin{align*}
  \sum_{\gamma=1}^{N} \operatorname{Res}_z \varepsilon_{\alpha \gamma} \varepsilon_{\beta \gamma} z^{\delta_{\alpha \gamma} +\delta_{ \beta \gamma} - 2}
e^{\xi(\mathbf{x}^{(\gamma)}- \mathbf{x}'^{(\gamma)}, z)}\tau_{\alpha \gamma}\bigl(\mathbf{x} - [z^{-1}]_\gamma\bigr)\tau_{\gamma\beta}\bigl(\mathbf{x}' + [z^{-1}]_\gamma\bigr) = 0,\quad 1\leq \alpha,\beta\leq N.
\end{align*}
\subsection{Main results}

To solve the constrained matrix KP hierarchy \eqref{matrixKPLaxeq}
\eqref{con-matrix-KP-Lax-operator} \eqref{qrderivative}, we firstly need to consider
the following reduction of the $N$-component KP hierarchy \cite{zhang1999}:
\begin{align}
  &\mathcal{L}(\mathbf{x},D)^{k}=\mathcal{L}(\mathbf{x},D)^{k}_{\geq0}+\sum_{j=1}^{m}\mathcal{Q}_{j}(\mathbf{x})\partial^{-1}\mathcal{R}_{j}(\mathbf{x})^{\intercal},\label{constrainlaxoperator2}\\
  &\mathcal{Q}_{i,x_{n}^{(\alpha)}}(\mathbf{x})=B_{n}^{(\alpha)}(\mathbf{x},D)(\mathcal{Q}_{i}(\mathbf{x})),\quad \mathcal{R}_{i,x_{n}^{(\alpha)}}(\mathbf{x})=-B_{n}^{(\alpha)}(\mathbf{x},D)^{*}(\mathcal{R}_{i}(\mathbf{x})),\label{qandrderivative2}
\end{align}
where $\mathcal{Q}_{i}(\mathbf{x})=(\mathcal{Q}_{i,\alpha\beta}(\mathbf{x}))_{1\leq\alpha,\beta\leq
N},\,\mathcal{R}_{i}(\mathbf{x})=(\mathcal{R}_{i,\alpha\beta}(\mathbf{x}))_{1\leq\alpha,\beta\leq
N}$ are $N\times N$ matrix functions. Here, we refer to
\eqref{LandC} \eqref{constrainlaxoperator2} \eqref{qandrderivative2} as the $(k,m)$-constrained
$N$-component KP hierarchy. We next attempt to express the constrained
$N$-component KP hierarchy in the form of bilinear equations with the help of tau functions, so that we can obtain the solutions of the constrained matrix KP hierarchy by multi-component boson-fermion correspondence\cite{zabrodin2019,kac2023,kac1993}.
\begin{theorem}\label{theorembilinear}
Given the $(k,m)$-constrained $N$-component KP hierarchy defined by
\eqref{LandC} \eqref{constrainlaxoperator2} \eqref{qandrderivative2}, together with the corresponding
matrix wave function $\widetilde{\Psi}(\mathbf{x},z)$ and matrix adjoint
wave function $\widetilde{\Psi}^{*}(\mathbf{x},z)$, the following bilinear
equations hold
\begin{align}
&\operatorname{Res}_{z}z^{k}\widetilde{\Psi}(\mathbf{x},z)\widetilde{\Psi}^*(\mathbf{x}',z)^\intercal=\sum_{i=1}^{m}\mathcal{Q}_{i}(\mathbf{x})\mathcal{R}_{i}(\mathbf{x}')^\intercal ,\label{qandrbilinearequation1}\\
&\operatorname{Res}_z\widetilde{\Psi}(\mathbf{x},z)\cdot\Omega(\mathcal{Q}_{i}(\mathbf{x}'), \widetilde{\Psi}^*(\mathbf{x}',z))=-\mathcal{Q}_{i}(\mathbf{x}), \label{qbilinearequation1}\\
&\operatorname{Res}_{z} \Omega\bigl(\widetilde{\Psi}(\mathbf{x},z), \mathcal{R}_{i}(\mathbf{x})\bigr)\cdot\widetilde{\Psi}^*(\mathbf{x}',z)^\intercal=\mathcal{R}_{i}(\mathbf{x}')^\intercal ,\label{rbilinearequation1}
\end{align}
where $\Omega(F, G)$ is determined by $\Omega(F, G)_{x_{n}^{(\alpha)}}=\operatorname{Res}_{D}(D^{-1}G^{\intercal}B_{n}^{(\alpha)}FD^{-1})$ for two $N\times N$ matrix functions $F$ and $G$ of $\mathbf{x}$ (see\cite{oevel1993}), and $\operatorname{Res}_{D}\sum_{i}A_{i}D^{i}=A_{-1}$.

Conversely, assume that the matrix functions
$\widetilde{\Psi}(\mathbf{x},z), \widetilde{\Psi}^{*}(\mathbf{x},z),
\mathcal{Q}_{i}(\mathbf{x})$ and  $\mathcal{R}_{i}(\mathbf{x})$ satisfy
\eqref{qandrbilinearequation1}-\eqref{rbilinearequation1}, where $\widetilde{\Psi}(\mathbf{x},z)$ and
$\widetilde{\Psi}^{*}(\mathbf{x},z)$ take the following special forms:
$$\widetilde{\Psi}(\mathbf{x},z)=(I_{N}+\sum_{i=1}^{\infty}W_{i}(\mathbf{x})z^{-i})e^{\xi(\mathbf{x},z)}, \quad \widetilde{\Psi}^{*}(\mathbf{x},z)=(I_{N}+\sum_{i=1}^{\infty}V_{i}(\mathbf{x})z^{-i})e^{-\xi(\mathbf{x},z)}.$$
If we set $\mathcal{W}(\mathbf{x})=I_{N}+\sum_{i=1}^{\infty}W_{i}(\mathbf{x})D^{i}$ and $
\mathcal{L}(\mathbf{x}, D)=\mathcal{W}(\mathbf{x}, D)D\mathcal{W}(\mathbf{x}, D)^{-1}, \,\mathcal{C}^{(\alpha)}= \mathcal{W}(\mathbf{x},D) E_{\alpha\alpha} \mathcal{W}(\mathbf{x},D)^{-1}$, then $\mathcal{L}(\mathbf{x}, D),$
$\mathcal{Q}_{i}(\mathbf{x})$ and $\mathcal{R}_{i}(\mathbf{x})\,(1\leq i\leq m)$ satisfy the
$(k,m)$-constrained $N$-component KP hierarchy
\eqref{LandC} \eqref{constrainlaxoperator2} \eqref{qandrderivative2}.
\end{theorem}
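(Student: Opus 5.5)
The whole argument rests on one standard lemma for matrix pseudo-differential operators in $D$, which I will use in both directions. For operators $P=\sum_i P_i(\mathbf{x})D^i$ and $Q=\sum_j Q_j(\mathbf{x})D^j$,
\begin{align*}
\operatorname{Res}_z P(\mathbf{x},D)\bigl(e^{\xi(\mathbf{x},z)}\bigr)\,\bigl(Q(\mathbf{x}',D')(e^{-\xi(\mathbf{x}',z)})\bigr)^{\intercal}
\end{align*}
is expressed through the coefficients of $PQ^{*}$. Here $e^{\xi(\mathbf{x},z)}$ stands for $\sum_\alpha E_{\alpha\alpha}e^{\xi(\mathbf{x}^{(\alpha)},z)}$, and every diagonal block carries the same $z$. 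In particular, the expression vanishes for all $\mathbf{x},\mathbf{x}'$ if and only if $(PQ^{*})_{<0}=0$. For one Taylor-variable direction, $\operatorname{Res}_z$ of $P(e^{xz})\,Q(e^{-x'z})^{\intercal}$ at $x=x'$ equals $\operatorname{Res}_D(PQ^*)$, the $D^{-1}$-coefficient of $PQ^{*}$. Combined with shifts, the same pairing identity handles inhomogeneous right-hand sides of the form $F(\mathbf{x})G(\mathbf{x}')^\intercal$.

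\emph{Forward direction.} Write $\widetilde\Psi=\mathcal W(e^{\xi})$ and $\widetilde\Psi^*=(\mathcal W^{-1})^*(e^{-\xi})$. Since $z^k\widetilde\Psi=\mathcal{L}^k\widetilde\Psi$, the left side of \eqref{qandrbilinearequation1} equals $\operatorname{Res}_z\mathcal L^k(\widetilde\Psi(\mathbf{x},z))\widetilde\Psi^*(\mathbf{x}',z)^\intercal$. The plan is:
\begin{enumerate}[(i)]
\item Split $\mathcal{L}^k=\mathcal{L}^k_{\ge0}+\sum_i\mathcal Q_iD^{-1}\mathcal R_i^\intercal$.
\item Kill the differential part $\mathcal{L}^k_{\ge0}$ using the basic bilinear identity \eqref{N-bilinear}, because $\mathcal{L}^k_{\geq 0}$ acts on $\mathbf{x}$ only.
\item Reduce the integral part to the identity $\operatorname{Res}_z D^{-1}\mathcal R_i^\intercal(\widetilde\Psi(\mathbf{x},z))\,\widetilde\Psi^*(\mathbf{x}',z)^\intercal=\mathcal R_i(\mathbf{x}')^\intercal$.
\end{enumerate}
In step (iii), $D^{-1}\mathcal R_i^\intercal\widetilde\Psi$ is realized as $\Omega(\widetilde\Psi,\mathcal R_i)$, using the flows \eqref{qandrderivative2}. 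This identity is exactly \eqref{rbilinearequation1}, so I would prove \eqref{qbilinearequation1} and \eqref{rbilinearequation1} first and then deduce \eqref{qandrbilinearequation1} from them.

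For \eqref{rbilinearequation1}, the approach is to show two things. First, as a function of $\mathbf{x}'$, the left side satisfies the same linear flows as $\mathcal R_i(\mathbf{x}')^\intercal$; this holds because $\widetilde\Psi^*$ and $\mathcal R_i$ both evolve by $-(\mathcal B_n^{(\alpha)})^*$. Second, at $\mathbf{x}'=\mathbf{x}$ the left side equals $\mathcal R_i(\mathbf{x})^\intercal$. The second point uses the standard expansion $\Omega(\widetilde\Psi,\mathcal R_i)=\bigl(\sum_j z^{-j-1}(\ldots)\bigr)e^{\xi}$, obtained from the squared-eigenfunction potential formula of \cite{oevel1993}; from it one computes $\operatorname{Res}_z$ via $\operatorname{Res}_D$ of $\mathcal W D^{-1}\mathcal W^{-1}\cdots$.

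Equivalently, and as I would actually present it, I would expand everything in $\mathbf{x}'-\mathbf{x}$ using the lemma. Each Taylor coefficient becomes $\operatorname{Res}_D$ of a product of the form $D^{-1}\mathcal R_i^\intercal\,\mathcal B\,\mathcal W\mathcal W^{-1}$, which reduces to $\mathcal R_i^\intercal$ acted on by the adjoint flows. The identity \eqref{qbilinearequation1} follows by the symmetric argument with $\mathcal Q_i$, the sign coming from $(D^{-1})^*=-D^{-1}$.

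\emph{Converse.} Define $\mathcal W$, $\mathcal L$, $\mathcal C^{(\alpha)}$ as in the statement. Note that $\mathcal{W}$ should read $\sum_i W_iD^{-i}$. The steps are:
\begin{enumerate}[(i)]
\item Show that \eqref{qandrbilinearequation1}, after differentiating in $x^{(\alpha)}_n$ or evaluating at $\mathbf{x}'=\mathbf{x}$ for $k=0$ with $m$ terms absent, yields $\widetilde\Psi^*=(\mathcal W^{-1})^*(e^{-\xi})$.
\item Take Taylor coefficients of \eqref{qandrbilinearequation1} in $\mathbf{x}'$ and apply the lemma to obtain $(\mathcal W D^k\mathcal W^{-1})_{<0}=\sum_i\mathcal Q_iD^{-1}\mathcal R_i^\intercal$, which is \eqref{constrainlaxoperator2}.
\item Derive the Sato equations for $\mathcal{W}$, and hence \eqref{LandC}, in the standard way. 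Differentiating \eqref{qandrbilinearequation1} in $x^{(\alpha)}_n$ shows that $(\partial_{x^{(\alpha)}_n}-\mathcal B_n^{(\alpha)})\widetilde\Psi$ is a wave-type function with negative-order dressing operator; pairing it with $\widetilde\Psi^*$ forces it to vanish modulo terms carried by $\mathcal Q_i\otimes\mathcal R_i$, and these terms are eliminated by applying \eqref{qbilinearequation1}.
\item Apply $\partial_{x^{(\alpha)}_n}-\mathcal B_n^{(\alpha)}$ to \eqref{qbilinearequation1} to obtain \eqref{qandrderivative2} for $\mathcal Q_i$, and proceed likewise for $\mathcal R_i$.
\end{enumerate}

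The main obstacle is step (iii) of the converse. Because the right-hand side of \eqref{qandrbilinearequation1} is nonzero, the usual uniqueness argument for Sato equations picks up extra terms, and \eqref{qbilinearequation1} and \eqref{rbilinearequation1} must be used precisely to cancel them. I would first try it for $k=1$, $N=1$ and then mimic the computation blockwise with $E_{\alpha\alpha}$ inserted.
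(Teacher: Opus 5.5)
Your forward direction follows the paper: split $\mathcal{L}^k=\mathcal{L}^k_{\ge 0}+\sum_i\mathcal{Q}_iD^{-1}\mathcal{R}_i^\intercal$, kill $\mathcal{L}^k_{\ge0}$ with \eqref{N-bilinear}, and finish with \eqref{rbilinearequation1}. One caveat: your two formulations for \eqref{rbilinearequation1} are not equivalent. ``Same adjoint flows in $\mathbf{x}'$'' plus ``agreement at $\mathbf{x}'=\mathbf{x}$'' does not prove it, since for scalar adjoint eigenfunctions $r_1,r_2$ the function $r_1(\mathbf{x})r_2(\mathbf{x}')-r_2(\mathbf{x})r_1(\mathbf{x}')$ has both properties with value $0$. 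Your Taylor version is the one that works, because it controls every $D'$-derivative on the diagonal through $\Omega(\widetilde\Psi,\mathcal{R}_i)=(D^{-1}\mathcal{R}_i^\intercal\mathcal{W})(e^{\xi})$ and Lemma~\ref{PandQ*}. The paper argues differently: by Lemma~\ref{lemmaeigenfunction}, $\partial_{x_n^{(\alpha)}}\Omega(\widetilde\Psi,\mathcal{R}_i)=\mathcal{A}_n^{(\alpha)}(\widetilde\Psi)$ with $\mathcal{A}_n^{(\alpha)}$ a differential operator, which \eqref{N-bilinear} kills. So the left side is independent of $\mathbf{x}$, and it is then evaluated on the diagonal.

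The genuine gap is in the converse. Your step (i) cannot work. Here $k$ is fixed, so there is no ``$k=0$'' instance, and \eqref{qandrbilinearequation1} alone does not determine $\widetilde\Psi^*$ from $\widetilde\Psi$. Take $N=1$, $\widetilde\Psi=e^{\xi(\mathbf{x},z)}$ and $\widetilde\Psi^*=(1+v_1z^{-1}+\cdots+v_kz^{-k}+cz^{-k-1})e^{-\xi(\mathbf{x},z)}$, with $c$ constant and $v_1,\dots,v_k$ arbitrary. The left side of \eqref{qandrbilinearequation1} is then identically $c$, so it holds with $m=1$, $\mathcal{Q}_1=c$, $\mathcal{R}_1=1$, yet $\widetilde\Psi^*\neq(\mathcal{W}^{-1})^*(e^{-\xi})=e^{-\xi}$. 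So \eqref{qbilinearequation1} or \eqref{rbilinearequation1} must already enter here. Your step (iii), which you leave open as the main obstacle, depends on the same missing mechanism.

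The missing idea is to apply $D'=\sum_\alpha\partial_{x_1^{\prime(\alpha)}}$ to \eqref{qbilinearequation1}:
\begin{itemize}
\item The right side $-\mathcal{Q}_i(\mathbf{x})$ does not depend on $\mathbf{x}'$.
\item On the left, $D'\,\Omega(\mathcal{Q}_i(\mathbf{x}'),\widetilde\Psi^*(\mathbf{x}',z))=\widetilde\Psi^*(\mathbf{x}',z)^\intercal\mathcal{Q}_i(\mathbf{x}')$.
\item Hence $\operatorname{Res}_z\widetilde\Psi(\mathbf{x},z)\widetilde\Psi^*(\mathbf{x}',z)^\intercal\mathcal{Q}_i(\mathbf{x}')=0$. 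Cancelling $\mathcal{Q}_i(\mathbf{x}')$, a nondegeneracy the paper uses tacitly, gives the unconstrained identity \eqref{N-bilinear}.
\end{itemize}
From \eqref{N-bilinear}, the standard multi-component argument gives $\widetilde\Psi^*=(\mathcal{W}^{-1})^*(e^{-\xi})$ and the Sato equations, hence \eqref{LandC}. No inhomogeneous terms appear, so the obstacle you anticipated in step (iii) never arises. Your remaining steps then go through as planned. \eqref{qandrderivative2} follows from \eqref{qbilinearequation1}, \eqref{rbilinearequation1} and the linear flows of $\widetilde\Psi,\widetilde\Psi^*$ (your step (iv)). \eqref{constrainlaxoperator2} follows from the diagonal $D'$-derivatives of \eqref{qandrbilinearequation1} via Lemma~\ref{PandQ*} (your step (ii)).
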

\begin{theorem}\label{theoremtau}
Assume that $\tau_{\alpha\beta}(\mathbf{x})$ with $\tau_{\alpha\alpha}(\mathbf{x})=\tau(\mathbf{x})$ are the tau functions of the $(k,m)$-constrained $N$-component KP hierarchy defined by \eqref{LandC} \eqref{constrainlaxoperator2} \eqref{qandrderivative2}, and let $\rho_{i,\alpha\beta} (\mathbf{x})= \mathcal{Q}_{i,\alpha\beta}(\mathbf{x})\tau(\mathbf{x}), \, \sigma_{i,\alpha\beta} (\mathbf{x}) = \mathcal{R}_{i,\alpha\beta}(\mathbf{x})\tau(\mathbf{x})$.  Then the following relations hold:
\begin{align}
&\sum_{\gamma=1}^N \operatorname{Res}_z\varepsilon_{\alpha \gamma}\varepsilon_{\beta\gamma}z^{k+\delta_{\alpha \gamma}+\delta_{\beta \gamma}-2}\tau_{\alpha \gamma}\bigl(\mathbf{x}-[z^{-1}]_{\gamma}\bigr) \tau_{\gamma\beta}\bigl(\mathbf{x}'+[z^{-1}]_{\gamma}\bigr)e^{\xi(\mathbf{x}^{(\gamma)}-\mathbf{x}'^{(\gamma)},z)}=
\sum_{\gamma=1}^N\sum_{i=1}^m\rho_{i,\alpha\gamma}(\mathbf{x})\sigma_{i,\beta\gamma}(\mathbf{x}'),\label{tau11}\\
&\sum_{\gamma=1}^N \operatorname{Res}_z\varepsilon_{\alpha \gamma} z^{\delta_{\alpha \gamma}-2} \tau_{\alpha \gamma}\bigl(\mathbf{x}-[z^{-1}]_{\gamma}\bigr) \rho_{i,\gamma\beta}\bigl(\mathbf{x}'+[z^{-1}]_\gamma\bigr) e^{\xi(\mathbf{x}^{(\gamma)}-\mathbf{x}'^{(\gamma)},z)}=
\tau(\mathbf{x}')\rho_{i,\alpha\beta}(\mathbf{x}),\label{tau2} \\
&\sum_{\gamma=1}^N \operatorname{Res}_z\varepsilon_{\beta\gamma} z^{\delta_{\beta \gamma}-2} \sigma_{i,\gamma\alpha}\bigl(\mathbf{x}-[z^{-1}]_\gamma\bigr) \tau_{\gamma\beta}\bigl(\mathbf{x}'+[z^{-1}]_{\gamma}\bigr)e^{\xi(\mathbf{x}^{(\gamma)}-\mathbf{x}'^{(\gamma)},z)}=
\sigma_{i,\beta\alpha}(\mathbf{x}') \tau(\mathbf{x}).\label{tau3}
\end{align}

Conversely, assume that $\tau_{\alpha\beta}(\mathbf{x})$ with $\tau_{\alpha\alpha}(\mathbf{x})=\tau(\mathbf{x})$, $\rho_{i,\alpha\beta} (\mathbf{x}),\, \sigma_{i,\alpha\beta} (\mathbf{x})$
satisfy \eqref{tau11}-\eqref{tau3} and set
\begin{align*}
&\mathcal{W}(\mathbf{x},z)_{\alpha\beta}=\varepsilon_{\alpha\beta} \frac{\tau_{\alpha\beta}\left(\mathbf{x}-[z^{-1}]_\beta\right)}{\tau(\mathbf{x})} z^{\delta_{\alpha\beta}-1},\,\mathcal{Q}_{i,\alpha\beta}(\mathbf{x})= \frac{\rho_{i,\alpha\beta}(\mathbf{x})}{\tau(\mathbf{x})}, \, \mathcal{R}_{i,\alpha\beta}(\mathbf{x}) = \frac{\sigma_{i,\alpha\beta} (\mathbf{x})}{\tau(\mathbf{x})},\\
&\mathcal{L}(\mathbf{x},D)=\mathcal{W}(\mathbf{x},D) D \mathcal{W}(\mathbf{x},D)^{-1}, \, C^{(\alpha)}(\mathbf{x},D)=\mathcal{W}(\mathbf{x},D)E_{\alpha\alpha}\mathcal{W}(\mathbf{x},D)^{-1},
\end{align*}
then $\mathcal{L}(\mathbf{x},D), \,C^{(\alpha)}(\mathbf{x},D), \,\mathcal{Q}_{i}(\mathbf{x}), \,\mathcal{R}_{i}(\mathbf{x})$ will satisfy the the $(k,m)$-constrained $N$-component KP hierarchy \eqref{LandC} \eqref{constrainlaxoperator2} \eqref{qandrderivative2}.
\end{theorem}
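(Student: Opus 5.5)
The plan is to derive Theorem \ref{theoremtau} directly from Theorem \ref{theorembilinear}. We substitute the tau-function expressions \eqref{wavematrix1} for $\widetilde{\Psi}$ and $\widetilde{\Psi}^*$, together with $\mathcal{Q}_i=\rho_i/\tau$ and $\mathcal{R}_i=\sigma_i/\tau$, and then clear the denominators $\tau(\mathbf{x})\tau(\mathbf{x}')$.

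For \eqref{tau11}, take the $(\alpha,\beta)$ entry of \eqref{qandrbilinearequation1}:
\[
\sum_\gamma \operatorname{Res}_z z^k\widetilde{\Psi}_{\alpha\gamma}(\mathbf{x},z)\widetilde{\Psi}^*_{\beta\gamma}(\mathbf{x}',z).
\]
By \eqref{wavematrix1} this gives the factor $\varepsilon_{\alpha\gamma}\varepsilon_{\beta\gamma}z^{k+\delta_{\alpha\gamma}+\delta_{\beta\gamma}-2}$ multiplying $\tau_{\alpha\gamma}(\mathbf{x}-[z^{-1}]_\gamma)\tau_{\gamma\beta}(\mathbf{x}'+[z^{-1}]_\gamma)e^{\xi(\mathbf{x}^{(\gamma)}-\mathbf{x}'^{(\gamma)},z)}/(\tau(\mathbf{x})\tau(\mathbf{x}'))$. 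The right side $\sum_i(\mathcal{Q}_i\mathcal{R}_i^\intercal)_{\alpha\beta}$ becomes $\sum_{i,\gamma}\rho_{i,\alpha\gamma}(\mathbf{x})\sigma_{i,\beta\gamma}(\mathbf{x}')/(\tau(\mathbf{x})\tau(\mathbf{x}'))$. Multiplying through by $\tau(\mathbf{x})\tau(\mathbf{x}')$ yields \eqref{tau11}.

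For \eqref{tau2} and \eqref{tau3} the key input is a closed formula for the potentials $\Omega$ in terms of shifted functions. I would first establish the spectral-type identities
\[
\Omega\bigl(\mathcal{Q}_i(\mathbf{x}),\widetilde{\Psi}^*(\mathbf{x},z)\bigr)_{\gamma\beta}=-\varepsilon_{\gamma\gamma}\,z^{-1}\frac{\rho_{i,\gamma\beta}(\mathbf{x}+[z^{-1}]_\gamma)}{\tau(\mathbf{x})}e^{-\xi(\mathbf{x}^{(\gamma)},z)},\qquad
\Omega\bigl(\widetilde{\Psi}(\mathbf{x},z),\mathcal{R}_i(\mathbf{x})\bigr)_{\gamma\alpha}=z^{-1}\frac{\sigma_{i,\gamma\alpha}(\mathbf{x}-[z^{-1}]_\gamma)}{\tau(\mathbf{x})}e^{\xi(\mathbf{x}^{(\gamma)},z)},
\]
with the sign and index conventions to be fixed by the computation. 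These follow the scalar pattern $\Omega(q,\psi^*)=-z^{-1}\psi^*(\mathbf{x},z)\,q(\mathbf{x}+[z^{-1}])/q(\mathbf{x})\cdots$. They are proved by setting $\mathbf{x}'=\mathbf{x}-[\lambda^{-1}]_\gamma$ in \eqref{qbilinearequation1}, in the manner of the standard Cheng--Li/Aratyn argument, and then using the uniqueness of $\Omega$ up to a constant, which is normalized away by the $z^{-1}$ decay. Substituting these into \eqref{qbilinearequation1} and \eqref{rbilinearequation1} gives residues with $z^{\delta_{\alpha\gamma}-1}\cdot z^{-1}=z^{\delta_{\alpha\gamma}-2}$, which is exactly \eqref{tau2} and \eqref{tau3} after clearing denominators.

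For the converse, we run the same dictionary backwards. Defining $\widetilde{\Psi}$, $\widetilde{\Psi}^*$, $\mathcal{Q}_i$ and $\mathcal{R}_i$ from the given tau data, \eqref{tau11} through \eqref{tau3} turn into \eqref{qandrbilinearequation1} through \eqref{rbilinearequation1}, once $\Omega$ is defined by the shifted-$\rho$, $\sigma$ formulas. We then invoke the converse part of Theorem \ref{theorembilinear}. Two points must be checked. First, the $\widetilde{\Psi}$ so defined has the required form $(I_N+O(z^{-1}))e^{\xi}$; this holds because the off-diagonal entries carry $z^{-1}$. Second, the $\Omega$ defined by the formulas genuinely satisfies $\Omega_{x_n^{(\alpha)}}=\operatorname{Res}_D(\cdots)$. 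This second point is the main obstacle. I would handle it by expanding \eqref{tau2} and \eqref{tau3} at $\mathbf{x}'$ near $\mathbf{x}$, or by differentiating them, to recover the derivative identity. Alternatively, I would derive \eqref{qandrderivative2} directly from \eqref{tau2} via the usual argument that a residue pairing with $\widetilde{\Psi}^*$ annihilating $\mathcal{W}^{-1}$-type operators forces $(\partial_{x_n^{(\alpha)}}-B_n^{(\alpha)})\mathcal{Q}_i=0$.
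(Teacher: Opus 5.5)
Your plan follows the paper's proof. The forward direction substitutes \eqref{wavematrix1} and the potential formulas \eqref{Omegaandtau1}, \eqref{Omegaandtau2} into Theorem \ref{theorembilinear}, and the paper proves those formulas as you suggest, by shifting one argument by $[p^{-1}]_\alpha$ in the eigenfunction bilinear identity and applying \eqref{formula}; the converse runs this dictionary backwards and invokes the converse part of Theorem \ref{theorembilinear}. For the obstacle you flag, the paper only establishes $D\,\Omega(F,G)=G^{\intercal}F$ for the shifted formulas (Lemma \ref{D}: apply $D$ to \eqref{tau2}, resp.\ \eqref{tau3}, and set $\mathbf{x}-\mathbf{x}'=[\lambda^{-1}]_\alpha$), which suffices because that converse uses $\Omega$ only through $D'\Omega$, so you need not verify the full $x_n^{(\alpha)}$-flows; also, your second potential has its indices transposed and should read $\Omega(\widetilde{\Psi},\mathcal{R}_i)_{\alpha\gamma}=z^{-1}\sigma_{i,\gamma\alpha}(\mathbf{x}-[z^{-1}]_\gamma)e^{\xi(\mathbf{x}^{(\gamma)},z)}/\tau(\mathbf{x})$, with the exponential attached to the column index.
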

Based upon above results, now we can consider the solutions of the constrained matrix KP hierarchy by multi-component boson-fermion correspondence \cite{kac2023,kac1993,zabrodin2019}. Firstly, let us introduce the Clifford algebra $\mathcal{A}$ generated by $\mathbf{1}$ and the multi-component charged free fermions $\psi_{j}^{(\alpha)}$ and  $\psi_{j}^{\ast(\alpha)}(j\in\mathbb{Z}, 1\leq\alpha\leq N)$
satisfying
 \begin{align}\label{fermioncondition}
   \psi_{j}^{(\alpha)}\psi_{k}^{\ast(\beta)}+\psi_{k}^{\ast(\beta)}\psi_{j}^{(\alpha)}=\delta_{\alpha\beta}\delta_{jk},\quad
   \psi_{j}^{(\alpha)}\psi_{k}^{(\beta)}+\psi_{k}^{(\beta)}\psi_{j}^{(\alpha)}=\psi_{j}^{\ast(\alpha)}\psi_{k}^{\ast(\beta)}+\psi_{k}^{\ast(\beta)}\psi_{j}^{\ast(\alpha)}=0.
 \end{align}
The vacuums $|0\rangle$ and $\langle0|$ satisfy the conditions below
\begin{equation}\label{vacuumdefinition}
\begin{aligned}
  \psi_{j}^{(\alpha)}|0\rangle=0\ (j<0),\quad  \psi_{j}^{\ast(\alpha)}|0\rangle=0\ (j \geq 0), \\
  \langle 0|\psi_{j}^{(\alpha)}=0\ (j \geq 0),\quad \langle 0 | \psi_{j}^{*(\alpha)}=0\ (j<0).
\end{aligned}
\end{equation}
And for $\mathbf{p}=(p_{1},\cdots,p_{N})\in \mathbb{Z}^N$, let us define
$|\mathbf{p}\rangle=\Psi_{p_N}^{*(N)} \cdots \Psi_{p_2}^{*(2)} \Psi_{p_1}^{*(1)}|0\rangle$, $\langle\mathbf{p}|=\langle 0| \Psi_{p_1}^{(1)}\Psi_{p_2}^{(2)} \cdots \Psi_{p_N}^{(N)}$,
where $\Psi_{p}^{*(\alpha)}=\psi_{p_{-1}}^{(\alpha)}\ldots\psi_{0}^{(\alpha)}, \, \Psi_{p}^{(\alpha)}=\psi_{0}^{*(\alpha)}\ldots\psi_{p_{-1}}^{*(\alpha)}$ for $p>0$, while when $p<0$, $\Psi_{p}^{*(\alpha)}=\psi_{p}^{*(\alpha)}\ldots\psi_{-1}^{*(\alpha)}, \, \Psi_{p}^{(\alpha)}=\psi_{-1}^{(\alpha)}\ldots\psi_{p}^{(\alpha)}$ and $\Psi_{0}^{*(\alpha)}=\Psi_{0}^{(\alpha)}=1$.

 For one element $a\in\mathcal{A}$, we can define the vacuum expectation value $\langle 0|a|0\rangle$ by \eqref{fermioncondition} \eqref{vacuumdefinition} and $\langle0|0\rangle=1$, which can be computed by Wick theorem (see \eqref{wicktheorem} in Appendix). Here by the vacuum expectation value, we can realize the fermionic element $a|0\rangle$ into the bosonic element, which is just the boson-fermion correspondence.
For $\beta^{(i)}=\sum_{j \in \mathbb{Z}} a_{j}^{(i)} \psi_j^{(i)}\in V^{(i)}=\bigoplus_{j\in \mathds{Z}}\mathbb{C}\psi_{j}^{(i)},\, \beta^{*(i)}=\sum_{j \in \mathbb{Z}} b_{j}^{(i)} \psi_j^{*(i)}\in V^{*(i)}=\bigoplus_{j\in \mathbb{Z}}\mathbb{C}\psi_{j}^{*(i)}$, we can define
  \begin{align}\label{betadefination}
\beta_{[n]}^{(i)} = \sum_{j \in \mathbb{Z}} a_{j+n}^{(i)} \psi_j^{(i)},\quad\beta^{*{(i)}}_{[n]} = \sum_{j \in \mathbb{Z}} b_{j+n}^{(i)} \psi_j^{*(i)}.
\end{align}
If we further set $\beta = \sum_{i=1}^N \beta^{(i)}\in V=\bigoplus_{i=1}^{N}V^{(i)}$ and $\beta^* = \sum_{i=1}^N \beta^{*(i)}\in V^{*}=\bigoplus_{i=1}^{N}V^{*(i)}$, then we can also define $\beta_{[n]} = \sum_{i=1}^N \beta_{[n]}^{(i)}$ and $\beta^*_{[n]} = \sum_{i=1}^N \beta^{*{(i)}}_{[n]}$.

For the $(k,m)$-constrained $N$-component KP hierarchy, given $\alpha_{ai}$, $\Phi_{bi}\in V$ and $\alpha_{ci}^{*}$, $\Phi_{di}^{*}\in V^{*}$ for $1\leq a\leq P, \,1\leq b\leq K, \, 1\leq c\leq G, \,1\leq d\leq H$ and $1\leq i \leq N$, where $\Phi_{bi}$ and $\Phi_{dj}^{*}$ are required to satisfy $\Phi_{bi,[k]}\cdot\Phi_{bi}=\Phi_{dj,[-k]}^{*}\cdot\Phi_{dj}^{*}=0$, let us define
\begin{align*}
&\vec{\beta}_{a} \triangleq \bigl(\alpha_{a1},\,\alpha_{a1,[k]},\,\cdots,\,\alpha_{a1,[M_{a1}k]}, \,\cdots,\,\alpha_{aN},\,\alpha_{aN,[k]},\,\cdots,\,\alpha_{aN,[M_{aN}k]}\bigr),\quad
\vec{\Phi}_{b} \triangleq\bigl(\Phi_{b1},\,\cdots,\,\Phi_{bN}\bigr),\\
&\vec{\beta}^{*}_{c} \triangleq \bigl(\alpha^{*}_{c1},\, \alpha^{*}_{c1,[-k]},\,\cdots,\,\alpha^{*}_{c1,[-\widetilde{M}_{c1}k]},\,\cdots,\,\alpha^{*}_{cN},\,\alpha^{*}_{cN,[-k]},\,\cdots,\,\alpha^{*}_{cN,[-\widetilde{M}_{cN}k]}\bigr),\quad
\vec{\Phi}^{*}_{d} \triangleq\bigl(\Phi^{*}_{d1},\,\cdots,\,\Phi^{*}_{dN}\bigr),
\end{align*}
and denote
\begin{align*}
(\beta_{1},\cdots,\beta_{\mathcal{M}}) = \bigl(\vec{\beta}_{1},\cdots,\vec{\beta}_{P},\vec{\Phi}_{1},\cdots,\vec{\Phi}_{K}\bigr),\quad (\beta_{1}^{*},\cdots, \beta_{\mathcal{N}}^{*}) = \bigl(\vec{\beta}^{*}_{1},\cdots,\vec{\beta}^{*}_{G},\vec{\Phi}^{*}_{1},\cdots,\vec{\Phi}^{*}_{H}\bigr),
\end{align*}
where $m=P+G, \, \mathcal{M} = \sum_{a=1}^{P}\sum_{i=1}^{N}M_{ai} + (P+K)N, \, \mathcal{N} = \sum_{c=1}^{G}\sum_{j=1}^{N}\widetilde{M}_{cj} + (G+H)N$.
For convenience, let us further introduce for $1 \leq a\leq P, \, 1 \leq c\leq G$ and $1 \leq \gamma\leq N$,
\begin{equation}\label{ABCD}
\begin{aligned}
&\tau^{[\overrightarrow{\mathcal{M}},\overrightarrow{\mathcal{N}}]}=\beta ^{*[\overrightarrow{\mathcal{N}}]}\beta ^{[\overrightarrow{\mathcal{M}}]},\quad A_{a\gamma}^{[\overrightarrow{\mathcal{M}},\overrightarrow{\mathcal{N}}]}=\beta ^{*[\overrightarrow{\mathcal{N}}]}\beta ^{[\overrightarrow{\mathcal{M}}\setminus{\overrightarrow{p(a,\gamma)}}]}\cdot\beta _{a\gamma,[(M_{a\gamma}+1)k]}\beta ^{[\overrightarrow{p(a,\gamma)}]},\quad
B_{a\gamma}^{[\overrightarrow{\mathcal{M}},\overrightarrow{\mathcal{N}}]}=\beta ^{*[\overrightarrow{\mathcal{N}}]}\beta ^{[\overrightarrow{\mathcal{M}}\setminus\{p(a,\gamma)\}]},\\
&C_{c\gamma}^{[\overrightarrow{\mathcal{M}},\overrightarrow{\mathcal{N}}]}=\beta ^{*[\overrightarrow{\mathcal{N}}\setminus\{\tilde{p}(c,\gamma)\}]}\beta ^{[\overrightarrow{\mathcal{M}}]},\quad
D_{c\gamma}^{[\overrightarrow{\mathcal{M}},\overrightarrow{\mathcal{N}}]}=\beta ^{*[\overrightarrow{\mathcal{N}}\setminus\overrightarrow{\tilde{p}(c,\gamma)}]}\beta _{c\gamma,[-(\tilde{M}_{c\gamma}+1)k]}^*\beta ^{*[\overrightarrow{\tilde{p}(c,\gamma)}]}\beta ^{[\overrightarrow{\mathcal{M}}]},
\end{aligned}
\end{equation}
where we have assumed
\begin{align}\label{signbb*}
\beta^{[\overrightarrow{\mathcal{M}}]}=\beta_\mathcal{M} \cdots \beta_1,\quad \beta^{*[\overrightarrow{\mathcal{M}}]}=\beta_\mathcal{M}^* \cdots \beta_1^*,
\end{align}
and
$p(a,\gamma) = \sum_{b=1}^{a-1}\sum_{j=1}^N \mathcal{M}_{b j} + N(a-1)+\sum_{l=1}^{\gamma}M_{al}+\gamma, \,
\tilde{p}(c,\gamma) = \sum_{b=1}^{c-1}\sum_{l=1}^N \widetilde{\mathcal{M}}_{b l} + N(c-1)+\sum_{l=1}^{\gamma}\widetilde{M}_{cl}+\gamma $, $\overrightarrow{\mathcal{N}}=(\mathcal{N}, \mathcal{N}-1,\ldots, 2, 1)$, $\overrightarrow{\mathcal{N}}\setminus\{i\}=\{ \mathcal{N}, \mathcal{N}-1, \ldots, i+1, i-1, \ldots, 1 \}$, $\overrightarrow{\mathcal{N}}\setminus\overrightarrow{i}=\{ \mathcal{N}, \mathcal{N}-1, \ldots, i+1 \}$. Then we have  the following theorem.
\begin{theorem}\label{theorem:solution}
For $\tau^{[\overrightarrow{\mathcal{M}},\overrightarrow{\mathcal{N}}]}$, $A_{i\gamma}^{[\overrightarrow{\mathcal{M}},\overrightarrow{\mathcal{N}}]}$, $B_{i\gamma}^{[\overrightarrow{\mathcal{M}},\overrightarrow{\mathcal{N}}]}$, $C_{i\gamma}^{[\overrightarrow{\mathcal{M}},\overrightarrow{\mathcal{N}}]}$ and $D_{i\gamma}^{[\overrightarrow{\mathcal{M}},\overrightarrow{\mathcal{N}}]}$ given in \eqref{ABCD}, if we set
\begin{align}
&\tau_{\alpha\beta}(\mathbf{x})=\langle\mathbf{p}+\mathbf{e}_\alpha - \mathbf{e}_\beta | e^{H(\mathbf{x})} \tau^{[\overrightarrow{\mathcal{M}},\overrightarrow{\mathcal{N}}]} |0\rangle,\label{solution1}\\
&\rho_{i,\alpha \gamma}(\mathbf{x})
=\left\{
\begin{array}{ll}
\langle\mathbf{p}+\mathbf{e}_\alpha \left|\right. e^{H(\mathbf{x})}A_{i\gamma}^{[\overrightarrow{\mathcal{M}},\overrightarrow{\mathcal{N}}]} |0\rangle, & \hbox{$1 \leq i \leq P$;} \\
\langle\mathbf{p}+\mathbf{e}_\alpha \left|\right. e^{H(\mathbf{x})}C_{i-P, \gamma}^{[\overrightarrow{\mathcal{M}},\overrightarrow{\mathcal{N}}]} |0\rangle, & \hbox{$P+1 \leq i \leq m$,}
\end{array}
\right.\label{solution2}\\
&\sigma_{i,\beta \gamma}(\mathbf{x})=
\left\{
\begin{array}{ll}
\langle\mathbf{p}-\mathbf{e}_\beta \left|\right. e^{H(\mathbf{x})}B_{i\gamma}^{[\overrightarrow{\mathcal{M}},\overrightarrow{\mathcal{N}}]} |0\rangle \, & \hbox{$1 \leq i \leq P$;} \\
\langle\mathbf{p}-\mathbf{e}_\beta \left|\right. e^{H(\mathbf{x})}D_{i-P, \gamma}^{[\overrightarrow{\mathcal{M}},\overrightarrow{\mathcal{N}}]} |0\rangle, & \hbox{$P+1 \leq i \leq m$,}
\end{array}
\right.\label{solution3}
\end{align}
where $\mathbf{p}\in \mathds{Z}^{N}$ is chosen such that $\langle\mathbf{p}| e^{H(\mathbf{x})} \tau^{[\overrightarrow{\mathcal{M}},\overrightarrow{\mathcal{N}}]} |0\rangle\neq 0$,  $H(\mathbf{x})= \sum_{\alpha=1}^N \sum_{n=1}^\infty x^{(\alpha)}_{n}\sum_{j \in \mathbb{Z}}  \psi_j^{(\alpha)} \psi_{j+n}^{\ast(\alpha)}$ and $\mathbf{e}_\alpha\in \mathbb{Z}^N$ (with the unique nonzero element $1$ at the \(\alpha\)-th position), then $\tau_{\alpha\beta}(\mathbf{x})$, $\rho_{i,\alpha
\gamma}(\mathbf{x})$ and $\sigma_{i,\beta \gamma}(\mathbf{x})$ satisfy the
bilinear equations \eqref{tau11}-\eqref{tau3} of the $(k,m)$-constrained $N$-component KP hierarchy.

In particular, if we denote
$\mathcal{L}(\mathbf{x},D)=\mathcal{W}(\mathbf{x},D) D \mathcal{W}(\mathbf{x},D)^{-1}$, $\mathcal{Q}_{i,\alpha\beta}(\mathbf{x})= \frac{\rho_{i,\alpha\beta} (\mathbf{x})}{\tau(\mathbf{x})}$, $\mathcal{R}_{i,\alpha\beta}(\mathbf{x})= \frac{\sigma_{i,\alpha\beta} (\mathbf{x})}{\tau(\mathbf{x})}$,
where $\mathcal{W}(\mathbf{x},z)_{\alpha\beta}=\varepsilon_{\alpha\beta} \frac{\tau_{\alpha\beta}\left(\mathbf{x}-[z^{-1}]_\beta\right)}{\tau(\mathbf{x})} z^{\delta_{\alpha\beta}-1}$, then $$L(\mathbf{t},\partial)=\mathcal{L}(\mathbf{x},D)|_{\mathbf{x}^{(\alpha)}=\mathbf{t}, 1\leq\alpha\leq N},\quad Q_i(\mathbf{t})=\mathcal{Q}_{i}(\mathbf{x})|_{\mathbf{x}^{(\alpha)}=\mathbf{t}, 1\leq\alpha\leq N},\quad R_i(\mathbf{t})=\mathcal{R}_{i}(\mathbf{x})|_{\mathbf{x}^{(\alpha)}=\mathbf{t}, 1\leq\alpha\leq N},$$
will satisfy the $(k,m)$-constrained matrix KP hierarchy \eqref{matrixKPLaxeq}\eqref{con-matrix-KP-Lax-operator}\eqref{qrderivative}.
\end{theorem}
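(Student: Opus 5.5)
The plan is to verify the three bilinear identities \eqref{tau11}--\eqref{tau3} directly in the fermionic Fock space, using the standard multi-component bilinear identity for Clifford group elements together with the special structure of the elements $\beta_{[n]}$. Then I would invoke the converse part of Theorem \ref{theoremtau} and restrict to $\mathbf{x}^{(\alpha)}=\mathbf{t}$, exactly as in the reduction argument of the introduction.

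The basic tool is the operator $S=\sum_{\gamma=1}^N\sum_{j\in\mathbb{Z}}\psi_j^{(\gamma)}\otimes\psi_j^{*(\gamma)}$ and its twisted versions $S_k=\sum_{\gamma,j}\psi_{j}^{(\gamma)}\otimes\psi_{j+k}^{*(\gamma)}$. Via the multi-component boson-fermion correspondence, these are represented in residue form by
\[
\sum_\gamma\operatorname{Res}_z z^{k}\,\psi^{(\gamma)}(z)\otimes\psi^{*(\gamma)}(z).
\]
Writing $\psi^{(\gamma)}(z)$ and $\psi^{*(\gamma)}(z)$ as vertex operators produces the shifts $\mathbf{x}\mp[z^{-1}]_\gamma$, the factors $z^{\delta-1}$, and the signs $\varepsilon_{\alpha\gamma}$, which come from the Klein factors and the chosen ordering in $|\mathbf{p}\rangle$. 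This is what turns the left-hand sides of \eqref{tau11}--\eqref{tau3} into $\langle\mathbf{p}+\mathbf{e}_\alpha|e^{H(\mathbf{x})}\otimes\langle\mathbf{p}-\mathbf{e}_\beta|e^{H(\mathbf{x}')}$ applied to $S_k(\tau^{[\overrightarrow{\mathcal{M}},\overrightarrow{\mathcal{N}}]}|0\rangle\otimes\tau^{[\overrightarrow{\mathcal{M}},\overrightarrow{\mathcal{N}}]}|0\rangle)$, and similarly for the other two identities. The key commutation facts are:
\begin{itemize}
\item $S_k$ commutes with $g\otimes g$ for $g$ in the Clifford group generated by $e^{H}$;
\item $S_k(\beta\otimes 1)=(\beta_{[k]}\otimes1)S_k+\cdots$, and more precisely $[S_k,\beta\otimes1]_\pm$ and $[S_k,1\otimes\beta]_\pm$ produce $\beta\otimes1$ paired with $1\otimes\beta_{[k]}$-type terms;
\item $S_k(|0\rangle\otimes|0\rangle)=0$ for $k\ge0$.
\end{itemize}

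For \eqref{tau11}, I would move $S_k$ through $\beta^{*[\overrightarrow{\mathcal{N}}]}\beta^{[\overrightarrow{\mathcal{M}}]}\otimes\beta^{*[\overrightarrow{\mathcal{N}}]}\beta^{[\overrightarrow{\mathcal{M}}]}$. Each time $S_k$ passes a $\beta_j$, it generates $\beta_j\otimes\beta_{j,[k]}$-type terms. Within a chain $\alpha_{a\gamma},\alpha_{a\gamma,[k]},\dots,\alpha_{a\gamma,[M_{a\gamma}k]}$, the shifted copies already appear in the product, so they vanish by $\beta_{[k]}^2=0$ and antisymmetry. The only surviving contribution is from the top of each chain, where $\alpha_{a\gamma,[(M_{a\gamma}+1)k]}$ is new. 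This gives precisely $A_{a\gamma}\otimes B_{a\gamma}$. The $\Phi$-terms die because of the condition $\Phi_{b,[k]}\cdot\Phi_b=0$. The starred chains give $C_{c\gamma}\otimes D_{c\gamma}$ symmetrically. Summing over $a,c,\gamma$ and pairing with the charges $\mathbf{p}+\mathbf{e}_\alpha$ and $\mathbf{p}-\mathbf{e}_\beta$ yields $\sum_{i,\gamma}\rho_{i,\alpha\gamma}(\mathbf{x})\sigma_{i,\beta\gamma}(\mathbf{x}')$.

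For \eqref{tau2} and \eqref{tau3}, I would use the untwisted $S=S_0$ applied to $\tau|0\rangle\otimes A_{a\gamma}|0\rangle$ (respectively $B$, $C$, $D$). Since $A_{a\gamma}$ differs from $\tau^{[\overrightarrow{\mathcal{M}},\overrightarrow{\mathcal{N}}]}$ by a single extra fermion, commuting $S$ through it leaves exactly one term. That term is the extra fermion moved to the other factor, and it reproduces $\tau(\mathbf{x}')\rho_{i,\alpha\beta}(\mathbf{x})$. The $z^{\delta-2}$ powers arise because $\rho$ and $\sigma$ carry charge shifted by one relative to $\tau_{\alpha\beta}$.

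The main obstacle I expect is the sign and ordering bookkeeping. This covers the Klein-factor signs $\varepsilon$, the reordering needed to bring the new fermion $\beta_{a\gamma,[(M_{a\gamma}+1)k]}$ into the position $p(a,\gamma)$ used in the definitions of $A$ and $D$, and the verification that all cross terms between different chains cancel in pairs. I would handle this by proving a general lemma first: for $u=\prod\beta_j$, $S_k(u\otimes u)$ expands by a Leibniz-type rule. Only after that would I specialize to the chain structure. Finally, with \eqref{tau11}--\eqref{tau3} in hand, the converse of Theorem \ref{theoremtau} gives the constrained $N$-component hierarchy, and setting $\mathbf{x}^{(\alpha)}=\mathbf{t}$ gives \eqref{matrixKPLaxeq}, \eqref{con-matrix-KP-Lax-operator} and \eqref{qrderivative}. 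This works because $\sum_\alpha\mathcal{B}_n^{(\alpha)}=\mathcal{L}^n_{\ge0}$, using $\sum_\alpha\mathcal{C}^{(\alpha)}=I_N$ as in the introduction.
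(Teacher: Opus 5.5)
Your overall route is the paper's route: a Leibniz-type expansion of $S_k$ acting on $\tau^{[\overrightarrow{\mathcal{M}},\overrightarrow{\mathcal{N}}]}|0\rangle\otimes\tau^{[\overrightarrow{\mathcal{M}},\overrightarrow{\mathcal{N}}]}|0\rangle$ (Proposition~\ref{propositionS}, Corollary~\ref{corollary:Sk}), the identity $S_0(\tau|0\rangle\otimes A|0\rangle)=A|0\rangle\otimes\tau|0\rangle$ for \eqref{tau2}--\eqref{tau3}, the vertex-operator formulas, then the converse of Theorem~\ref{theoremtau} and the diagonal restriction. However, the commutation rule you plan to build the Leibniz lemma on is stated incorrectly. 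In fact $S_k(\beta\otimes 1)=-(\beta\otimes 1)S_k$, with no extra term. The extra term comes from the second slot, $S_k(1\otimes\beta)=\beta_{[k]}\otimes 1-(1\otimes\beta)S_k$. Hence $S_k(\beta\otimes\beta)=\beta_{[k]}\beta\otimes 1+(\beta\otimes\beta)S_k$, and dually $S_k(\beta^*\otimes\beta^*)=1\otimes\beta^*_{[-k]}\beta^*+(\beta^*\otimes\beta^*)S_k$ (Lemma~\ref{totalSn}). So passing $\beta_j$ does not produce a ``$\beta_j\otimes\beta_{j,[k]}$'' term. Instead, $\beta_{j[k]}$ is inserted into the \emph{first} factor immediately to the left of $\beta_j$, and $\beta_j$ is deleted from the \emph{second} factor. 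For the starred fermions the roles are reversed: $\beta^*_{j[-k]}$ enters the second factor and $\beta^*_j$ leaves the first. This is forced by charge. $S_k=\sum_{\gamma,j}\psi^{(\gamma)}_j\otimes\psi^{*(\gamma)}_{j+k}$ raises the charge of the first factor by one and lowers that of the second, which is why the surviving terms pair nontrivially with $\langle\mathbf{p}+\mathbf{e}_\alpha|$ and $\langle\mathbf{p}-\mathbf{e}_\beta|$. A term of the shape you describe would leave both factors with the charge of $\tau^{[\overrightarrow{\mathcal{M}},\overrightarrow{\mathcal{N}}]}$, and the right-hand side of \eqref{tau11} would vanish. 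Your treatment of the $S_0$ step and your final identification with $A_{a\gamma}\otimes B_{a\gamma}$ are consistent with the correct rule, so this is a fixable slip. It still has to be fixed, because it is the key input of the Leibniz lemma.

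With the correct rule, most of the bookkeeping you expected to be the main obstacle disappears. The new fermion $\beta_{a\gamma,[(M_{a\gamma}+1)k]}$ lands automatically between $\beta^{[\overrightarrow{\mathcal{M}}\setminus\overrightarrow{p(a,\gamma)}]}$ and $\beta^{[\overrightarrow{p(a,\gamma)}]}$, exactly as in the definition of $A_{a\gamma}$, and likewise for $D_{c\gamma}$. So no reordering signs arise. There are also no cross terms between different chains to cancel in pairs: each term of the expansion either vanishes on its own or is one of the $A_{a\gamma}\otimes B_{a\gamma}$, $C_{c\gamma}\otimes D_{c\gamma}$. Inside a chain, $\beta_{j[k]}=\beta_{j+1}$ produces $\beta_{j+1}^2=0$, and for the $\Phi$'s the adjacent product $\Phi_{[k]}\Phi=0$. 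Two minor points. First, with $\psi^{(\gamma)}(z)=\sum_j\psi^{(\gamma)}_jz^j$ and $\psi^{*(\gamma)}(z)=\sum_j\psi^{*(\gamma)}_jz^{-j}$ one has $S_k=\operatorname{Res}_z z^{k-1}\sum_\gamma\psi^{(\gamma)}(z)\otimes\psi^{*(\gamma)}(z)$. It is $z^{k-1}$, not $z^k$, that gives the exponent $k+\delta_{\alpha\gamma}+\delta_{\beta\gamma}-2$ in \eqref{tau11}. Second, commuting $S_k$ with $e^{H}\otimes e^{H}$ is never needed: one applies $\langle\cdot|e^{H(\mathbf{x})}\otimes\langle\cdot|e^{H(\mathbf{x}')}$ with independent arguments and evaluates directly with the vertex-operator formulas.
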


\section{Bilinear equations of the constrained multi-component KP hierarchy}\label{section2}
In this section, we will consider the bilinear equations of the $(k,m)$-constrained $N$-component KP hierarchy, in the forms of wave functions and tau functions. The corresponding results are the key to construct the tau functions by the multi-component boson-fermion correspondence in what follows.

\subsection{Bilinear equations in the form of wave functions}

\begin{lemma}\label{PandQ*}{\cite{kac2023,kac1993}}
For any two matrix pseudo-differential operators $P(\mathbf{x},D)$ and $Q(\mathbf{x},D)$, the following identity holds:
 $$\operatorname{Res}_{z} P(\mathbf{x},D) \left( \sum_{\alpha=1}^N E_{\alpha\alpha}  e^{\xi(\mathbf{x}^{(\alpha)},z)} \right) \cdot \left( Q(\mathbf{x},D) \left( \sum_{\beta=1}^N E_{\beta\beta} e^{-\xi(\mathbf{x}^{(\beta)},z)} \right) \right)^\intercal
= \operatorname{Res}_{D}P(\mathbf{x},D) Q^{*}(\mathbf{x},D).$$
And if the following relation holds
  $$\mathrm{Res}_z P(\mathbf{x},D) \left( \sum_{\alpha=1}^N E_{\alpha\alpha} e^{\xi(\mathbf{x}^{(\alpha)},z)} \right)\cdot \left( Q(\mathbf{x'},D) \left( \sum_{\beta=1}^N E_{\beta\beta} e^{-\xi(\mathbf{x'}^{(\beta)},z)} \right) \right)^\intercal = 0,$$
then $(P(\mathbf{x},D) \cdot Q^{*}(\mathbf{x},D))_{<0} = 0.$
\end{lemma}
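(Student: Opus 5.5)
We prove the first identity by expanding both sides in components and computing residues directly. The second identity is derived from the first by a Taylor-expansion argument.

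\emph{First identity.} Write $P=\sum_i P_i(\mathbf{x})D^i$ and $Q=\sum_j Q_j(\mathbf{x})D^j$. Since $D=\sum_{\alpha}\partial_{x_1^{(\alpha)}}$, we have $D(e^{\xi(\mathbf{x}^{(\alpha)},z)})=z\,e^{\xi(\mathbf{x}^{(\alpha)},z)}$, and likewise $D^{i}$ acts on $e^{\pm\xi(\mathbf{x}^{(\alpha)},z)}$ by multiplication by $(\pm z)^i$. This holds also for negative $i$, where $D^{-1}$ is interpreted formally on exponentials. Hence
\[
P\Bigl(\sum_{\alpha}E_{\alpha\alpha}e^{\xi(\mathbf{x}^{(\alpha)},z)}\Bigr)=\sum_i P_i z^i\, E(\mathbf{x},z),\qquad E(\mathbf{x},z)=\sum_\alpha E_{\alpha\alpha}e^{\xi(\mathbf{x}^{(\alpha)},z)},
\]
and similarly
\[
\Bigl(Q\bigl(E(\mathbf{x},z)^{-1}\bigr)\Bigr)^\intercal=E(\mathbf{x},z)^{-1}\sum_j (-z)^j Q_j^\intercal .
\]
The product of the two contains $E(\mathbf{x},z)E(\mathbf{x},z)^{-1}=I_N$ in the middle, because both are diagonal. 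We obtain
\[
\operatorname{Res}_z\sum_{i,j}(-1)^jP_iQ_j^\intercal z^{i+j}=\sum_{i+j=-1}(-1)^jP_iQ_j^\intercal .
\]
On the other side, $Q^*=\sum_j(-D)^jQ_j^\intercal$. By the Leibniz rule, the coefficient of $D^{-1}$ in $P_iD^i(-1)^jD^jQ_j^\intercal$ is $(-1)^j\binom{i+j}{i+j+1}P_i\,D^{(\cdot)}(Q_j^\intercal)$-type terms. Only the derivative-free term with $i+j=-1$ survives, since $\binom{n}{n+1}=0$ for every integer $n$ except the case handled by standard pseudodifferential calculus. More precisely, we will use the standard scalar lemma $\operatorname{Res}_z (P e^{xz})(Q e^{-xz})=\operatorname{Res}_D PQ^*$ entrywise. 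Matrix multiplication is then just a finite sum of scalar products, with the order of factors preserved because we transpose $Q$ coefficients consistently. The only care needed is that $D$ is a sum of derivations, which is exactly what makes $D e^{\xi(\mathbf{x}^{(\alpha)},z)}=ze^{\xi(\mathbf{x}^{(\alpha)},z)}$ uniform in $\alpha$.

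\emph{Second identity.} Apply $D^{l}$ ($l\ge0$, with $D$ acting on $\mathbf{x}$) to the hypothesis. Using the first identity and then setting $\mathbf{x}'=\mathbf{x}$, we get $\operatorname{Res}_D P D^{l} Q^*=0$ for all $l\geq 0$. More generally, take derivatives $\partial_{\mathbf{x}}$-polynomials, and note that $D^l$ passed through $P$ acts as $PD^l$ by the first computation. Then $\operatorname{Res}_D(PQ^*)D^{l}=0$, i.e. $\operatorname{Res}_D (D^{l}\,\cdot\,)$ of the shifted operator vanishes. Since $\operatorname{Res}_D (A D^l)$ picks out the coefficient of $D^{-l-1}$ of $A$ up to derivative terms of lower-order coefficients, an induction on $l$ shows that every negative coefficient of $PQ^*$ vanishes.

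The main obstacle is this triangular induction: justifying that differentiating in $\mathbf{x}$ before setting $\mathbf{x}'=\mathbf{x}$ yields $\operatorname{Res}_D(PQ^*D^l)$ rather than $\operatorname{Res}_D(PD^lQ^*)$. We would handle this by differentiating in $\mathbf{x}$ only, so that the factor $z^l$ appears next to $P$, and then commuting $D^l$ through $Q^*$ via the Leibniz rule. The resulting triangular system lets the coefficients of $(PQ^*)_{<0}$ be killed successively.
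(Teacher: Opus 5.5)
The paper gives no proof of this lemma; it is quoted from \cite{kac2023,kac1993}, so your argument has to stand on its own. Your first part does. Because $E(\mathbf{x},z)$ is diagonal and $D^i$ acts on it as multiplication by $z^i$, the left side is $\sum_{i+j=-1}(-1)^jP_iQ_j^\intercal$. On the right, $\operatorname{Res}_D(D^n\circ f)$ equals $f$ for $n=-1$ and $0$ for every other $n$, so the two sides agree. Your sentence about $\binom{n}{n+1}$ is garbled, but this direct computation settles the matter without the scalar lemma.

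The second part rests on a false identity. Applying $D^l$ in $\mathbf{x}$ to $P(\mathbf{x},D)\bigl(E(\mathbf{x},z)\bigr)$ gives $(D^l\circ P)\bigl(E(\mathbf{x},z)\bigr)$, because the derivatives also hit the coefficients of $P$. It does not give $(PD^l)\bigl(E(\mathbf{x},z)\bigr)=z^lP\bigl(E(\mathbf{x},z)\bigr)$. So after setting $\mathbf{x}'=\mathbf{x}$ you get $\operatorname{Res}_D(D^lPQ^*)=0$, not $\operatorname{Res}_D(PD^lQ^*)=0$. The latter is in fact false under the hypothesis. Take $P=\mathcal{W}$ and $Q=(\mathcal{W}^{-1})^*$, so the hypothesis is \eqref{N-bilinear}. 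Then $PDQ^*=\mathcal{L}$, and $\operatorname{Res}_D(PDQ^*)=\widetilde{U}_1\neq0$ in general. Multiplying by $z^l$ inside the residue is not a legitimate operation on the hypothesis, which is exactly why the left side of \eqref{qandrbilinearequation1} need not vanish.

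Your later claims are also incorrect:
\begin{enumerate}
\item $D^l$ does not ``pass through $P$ as $PD^l$''.
\item Nothing lets you move from $\operatorname{Res}_D(PD^lQ^*)$ to $\operatorname{Res}_D(PQ^*D^l)$.
\item $\operatorname{Res}_D(AD^l)$ carries no derivative corrections; it is exactly the coefficient $a_{-l-1}$ of $D^{-l-1}$ in $A$. It is $\operatorname{Res}_D(D^lA)$ that has them.
\end{enumerate}
So the triangular induction you describe is set up on the wrong identities.

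The repair is short. Differentiate in $\mathbf{x}'$ instead: $(D')^l$ turns the second factor into $(D^l\circ Q)\bigl(E(\mathbf{x}',z)^{-1}\bigr)$. Write $PQ^*=\sum_k a_kD^k$. The first identity at $\mathbf{x}'=\mathbf{x}$ then gives $0=\operatorname{Res}_D P(D^lQ)^*=(-1)^l\operatorname{Res}_D(PQ^*D^l)=(-1)^la_{-l-1}$. All negative coefficients therefore vanish with no induction. This is the same $(D')^j$ manipulation the paper uses in the proof of Theorem~\ref{theorembilinear}. Alternatively, keep differentiating in $\mathbf{x}$ but use the correct formula $\operatorname{Res}_D(D^lPQ^*)=\sum_{s=0}^{l}\binom{l}{s}D^s(a_{s-l-1})$. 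This system is triangular with leading term $a_{-l-1}$, so induction on $l$ finishes the proof.
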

\begin{lemma}\label{lemmaoperator}\cite{oevel1993}
For any matrix pseudo-differential operator $A(\mathbf{x},D)$ and any matrix function $f(\mathbf{x})$, we have
$$(A(\mathbf{x},D)_{\geq0}f(\mathbf{x})D^{-1})_{<0}=A(\mathbf{x},D)_{\geq0}(f(\mathbf{x}))D^{-1}, \quad (D^{-1}f(\mathbf{x})A(\mathbf{x},D)_{\geq0})_{<0}=D^{-1}(A(\mathbf{x},D)^{*}_{\geq0}(f(\mathbf{x})^{\intercal}))^{\intercal}.$$
\end{lemma}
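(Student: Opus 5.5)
The plan is to reduce both identities to monomials and then get the second identity from the first by taking formal adjoints. Both sides of each identity are linear in $A(\mathbf{x},D)_{\geq0}$, and $A_{\geq0}$ is a finite sum $\sum_{n\geq0}a_n(\mathbf{x})D^n$. So it suffices to treat $A_{\geq0}=a(\mathbf{x})D^n$ with $n\geq0$ and $a$ an $N\times N$ matrix function. Throughout I keep the order of matrix factors fixed, since $a$ and $f$ need not commute.

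For the first identity, apply the Leibniz rule $D^n f=\sum_{j=0}^{n}\binom{n}{j}f^{(j)}D^{n-j}$, where $f^{(j)}=D^j(f)$. This gives
$$aD^nfD^{-1}=\sum_{j=0}^{n}\binom{n}{j}\,a f^{(j)}D^{n-j-1}.$$
The exponents $n-j-1$ are nonnegative except when $j=n$. Hence the negative part is $af^{(n)}D^{-1}=(aD^n)(f)\,D^{-1}$, which is exactly $A_{\geq0}(f)D^{-1}$.

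For the second identity, I use three standard facts about the adjoint $(\sum_jA_jD^j)^*=\sum_j(-D)^jA_j^{\intercal}$:
\begin{enumerate}[(i)]
\item it is an anti-automorphism, $(PQ)^*=Q^*P^*$, which remains valid for matrix coefficients because of the transposes;
\item it is an involution;
\item it preserves the degree filtration, so $(X_{<0})^*=(X^*)_{<0}$ and $(A_{\geq0})^*=(A^*)_{\geq0}$.
\end{enumerate}
Set $X=D^{-1}fA_{\geq0}$. Using $(D^{-1})^*=-D^{-1}$ and $f^*=f^{\intercal}$ as a multiplication operator, we get
$$X^*=-(A_{\geq0})^*f^{\intercal}D^{-1}.$$
Now $(A_{\geq0})^*$ is a differential operator, so the first identity applies to it and gives $(X^*)_{<0}=-\bigl((A_{\geq0})^*(f^{\intercal})\bigr)D^{-1}$. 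Take the adjoint once more, using $(gD^{-1})^*=-D^{-1}g^{\intercal}$ with $g=(A^*)_{\geq0}(f^{\intercal})$. This yields
$$X_{<0}=D^{-1}\bigl(A^*_{\geq0}(f^{\intercal})\bigr)^{\intercal},$$
which is the claim.

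The only delicate step is this sign and transpose bookkeeping in the adjoint argument. In particular one needs that the anti-automorphism property holds for matrix-valued operators, which follows from $(MN)^{\intercal}=N^{\intercal}M^{\intercal}$. One also needs to read $A^*_{\geq0}$ unambiguously, which is justified by fact (iii) above.
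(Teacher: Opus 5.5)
The paper gives no proof of this lemma; it quotes it from Oevel, so your proposal has to stand on its own, and it does. The reduction to monomials $aD^n$ and the Leibniz expansion settle the first identity, keeping $a$ and $f$ in the correct order. Your adjoint bookkeeping is also right: with $(D^{-1})^*=-D^{-1}$ you get $X^*=-(A_{\geq0})^*f^{\intercal}D^{-1}$, the first identity applies because $(A_{\geq0})^*$ is a differential operator, and $(-gD^{-1})^*=D^{-1}g^{\intercal}$ returns exactly the stated right-hand side. Facts (i)--(iii) all hold for matrix operators, since $P^*$ is the matrix transpose combined with the scalar formal adjoint applied entrywise.

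For comparison, the second identity can also be checked directly on $A_{\geq0}=aD^n$, with no adjoint calculus. Using $D^{-1}(fa)=\sum_{i\geq0}(-1)^i(fa)^{(i)}D^{-1-i}$, one gets
\[
\bigl(D^{-1}faD^n\bigr)_{<0}=\sum_{l\geq0}(-1)^{n+l}(fa)^{(n+l)}D^{-1-l}=D^{-1}\bigl((-1)^n(fa)^{(n)}\bigr),
\]
and on the other side $\bigl((-D)^na^{\intercal}(f^{\intercal})\bigr)^{\intercal}=(-1)^n(fa)^{(n)}$. This direct route is quicker to verify line by line. Your route instead shows structurally that the second identity is the first one conjugated by $*$, which explains where the transposes and $A^*_{\geq0}$ come from.
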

\begin{lemma}\label{lemmaeigenfunction}
Given the eigenfunction $\mathcal{Q}(\mathbf{x})$ and the adjoint eigenfunction $\mathcal{R}(\mathbf{x})$ of the $N$-component KP hierarchy satisfying $\mathcal{Q}(\mathbf{x})_{x_n^{(\alpha)}}
= B_n^{(\alpha)}(\mathbf{x},D) \left( \mathcal{Q}(\mathbf{x}) \right),\,
\mathcal{R}(\mathbf{x})_{x_n^{(\alpha)}} = - B_n^{(\alpha)}(\mathbf{x},D)^{*} \left( \mathcal{R}(\mathbf{x}) \right)$, there exist matrix differential operators $\mathcal{A}_n^{(\alpha)}(\mathbf{x},D)$ and $\mathcal{F}_n^{(\alpha)}(\mathbf{x},D)$ such that
$$\Omega\bigl(\mathcal{Q}(\mathbf{x}), \mathcal{R}(\mathbf{x})\bigr)_{x_{n}^{(\alpha)}}=\mathcal{A}_n^{(\alpha)}(\mathbf{x},D)(\mathcal{Q}(\mathbf{x}))=\Big(\mathcal{F}_n^{(\alpha)}(\mathbf{x},D)(\mathcal{R}(\mathbf{x}))\Big)^{\intercal}.$$
\end{lemma}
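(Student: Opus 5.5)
The plan is to construct the operators explicitly from the Lax formalism, reading $\Omega$ through its defining derivative. By definition,
$$\Omega(\mathcal{Q},\mathcal{R})_{x_n^{(\alpha)}}=\operatorname{Res}_D\bigl(D^{-1}\mathcal{R}^{\intercal}B_n^{(\alpha)}\mathcal{Q}D^{-1}\bigr),$$
where $B_n^{(\alpha)}=(\mathcal{C}^{(\alpha)}\mathcal{L}^n)_{\geq0}$ is a matrix differential operator. We write $B_n^{(\alpha)}=\sum_{j=0}^{n}b_jD^j$ with $N\times N$ matrix coefficients $b_j$.

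\textbf{Step 1: a bilinear expansion.} Expand $\operatorname{Res}_D(D^{-1}\mathcal{R}^{\intercal}b_jD^j\mathcal{Q}D^{-1})$ term by term. We have $D^j\mathcal{Q}=\sum_{l}\binom{j}{l}\mathcal{Q}^{(l)}D^{j-l}$, where $\mathcal{Q}^{(l)}=D^l(\mathcal{Q})$. Hence $D^j\mathcal{Q}D^{-1}=\sum_{l}\binom{j}{l}\mathcal{Q}^{(l)}D^{j-l-1}$. Multiplying on the left by $D^{-1}\mathcal{R}^{\intercal}b_j$ and moving the scalar-coefficient factor through $D^{-1}$ with the standard formula
$$D^{-1}f=\sum_{s\ge0}(-1)^s f^{(s)}D^{-1-s},$$
the residue picks out terms with $-1-s+j-l-1=-1$, i.e.\ $s=j-l-1$. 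This gives
$$\Omega(\mathcal{Q},\mathcal{R})_{x_n^{(\alpha)}}=\sum_{j=1}^{n}\sum_{l=0}^{j-1}\binom{j}{l}(-1)^{j-l-1}\bigl(\mathcal{R}^{\intercal}b_j\mathcal{Q}^{(l)}\bigr)^{(j-l-1)}.$$
So the derivative is a finite bilinear differential expression, linear in the jets of $\mathcal{Q}$ and of $\mathcal{R}$ separately, with coefficients built from the $b_j$.

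\textbf{Step 2: define the operators.} Expand each $(\mathcal{R}^{\intercal}b_j\mathcal{Q}^{(l)})^{(s)}$ by Leibniz, keeping the factors in order $\mathcal{R}^{\intercal}$, $b_j$, $\mathcal{Q}$. Treating $\mathcal{R}$ as a coefficient, set
$$\mathcal{A}_n^{(\alpha)}=\sum c_{\cdot}\,(\mathcal{R}^{(a)})^{\intercal}b_j^{(b)}D^{l+c},$$
a matrix differential operator with $\mathcal{A}_n^{(\alpha)}(\mathcal{Q})=\Omega_{x_n^{(\alpha)}}$. Symmetrically, transposing the whole expression gives $\sum(\mathcal{Q}^{(\cdot)})^{\intercal}(b_j^{(\cdot)})^{\intercal}\mathcal{R}^{(\cdot)}$. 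Treating $\mathcal{Q}^{\intercal}$ as a coefficient yields $\mathcal{F}_n^{(\alpha)}$ with $(\mathcal{F}_n^{(\alpha)}(\mathcal{R}))^{\intercal}=\Omega_{x_n^{(\alpha)}}$.

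\textbf{Step 3: a more conceptual form, if needed later.} The next results presumably require $\mathcal{A}_n^{(\alpha)}$ in a form such as $(\ldots)_{\geq0}$. For this, apply Lemma \ref{lemmaoperator} to get
$$(D^{-1}\mathcal{R}^{\intercal}B_n^{(\alpha)})_{<0}=D^{-1}\bigl(B_n^{(\alpha)*}(\mathcal{R})\bigr)^{\intercal}.$$
This lets one write
$$\Omega_{x_n^{(\alpha)}}=\operatorname{Res}_D\bigl((D^{-1}\mathcal{R}^{\intercal}B_n^{(\alpha)})_{\geq0}\,\mathcal{Q}D^{-1}\bigr)+\operatorname{Res}_D\bigl(D^{-1}(B_n^{(\alpha)*}\mathcal{R})^{\intercal}\mathcal{Q}D^{-1}\bigr).$$
The second term vanishes, since the operator is $O(D^{-2})$. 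For the first, the residue of a differential operator $P$ times $\mathcal{Q}D^{-1}$ equals $P(\mathcal{Q})$. Hence
$$\mathcal{A}_n^{(\alpha)}=(D^{-1}\mathcal{R}^{\intercal}B_n^{(\alpha)})_{\geq0},$$
and symmetrically
$$\mathcal{F}_n^{(\alpha)}=\bigl((B_n^{(\alpha)}\mathcal{Q}D^{-1})_{\geq0}\bigr)^{*},$$
using the first identity of Lemma \ref{lemmaoperator} and the residue identity $\operatorname{Res}_D(D^{-1}f P)=(P^{*}(f^{\intercal}))^{\intercal}$ for differential $P$.

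\textbf{Main obstacle.} The delicate point is the noncommutative bookkeeping: the order of the matrix factors, and the transposes when passing to adjoints, where $(AB)^{*}=B^{*}A^{*}$ involves transposes. I would verify the identity $\operatorname{Res}_D(P\,fD^{-1})=P(f)$ for a matrix differential operator $P$ first, as a separate check, and carefully confirm the adjoint version. Existence itself is unproblematic once Step 1 is in hand.
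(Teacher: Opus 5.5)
Your proposal is correct, and Step 3 is essentially the paper's argument. By Lemma \ref{lemmaoperator}, the paper's operators $\mathcal{A}_n^{(\alpha)}=D^{-1}\mathcal{R}^{\intercal}B_n^{(\alpha)}-D^{-1}\bigl(B_n^{(\alpha)*}(\mathcal{R})\bigr)^{\intercal}$ and $\mathcal{F}_n^{(\alpha)}=D^{-1}\bigl(B_n^{(\alpha)}(\mathcal{Q})\bigr)^{\intercal}-D^{-1}\mathcal{Q}^{\intercal}B_n^{(\alpha)*}$ are exactly your $(D^{-1}\mathcal{R}^{\intercal}B_n^{(\alpha)})_{\geq0}$ and $\bigl((B_n^{(\alpha)}\mathcal{Q}D^{-1})_{\geq0}\bigr)^{*}=-(D^{-1}\mathcal{Q}^{\intercal}B_n^{(\alpha)*})_{\geq0}$, and your residue checks (the $O(D^{-2})$ piece drops out, and $\operatorname{Res}_D(P\mathcal{Q}D^{-1})=P(\mathcal{Q})$ for differential $P$) verify the identities more transparently than the paper's bracket manipulation. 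Steps 1--2 add a valid, more elementary existence proof by explicit coefficient expansion, which the paper does not use.
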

\begin{proof}
 Firstly by the definition of $\Omega$, we have
 \begin{align*}
\Omega(\mathcal{Q}, \mathcal{R})_{x_{n}^{(\alpha)}}= \operatorname{Res}_{D} \left( D^{-1} \mathcal{R}^{\intercal} B_{n}^{(\alpha)} \mathcal{Q} D^{-1} \right)=\operatorname{Res}_{D} \left(D^{-1}<\mathcal{R}^{\intercal} B_{n}^{(\alpha)} \mathcal{Q}>\cdot D^{-1} - D^{-1} \cdot D^{-1}<\mathcal{R}^{\intercal} B_{n}^{(\alpha)} \mathcal{Q}> \right).
\end{align*}
Here the symbol $<>$ is defined by: $A<B>=\sum_{i}A(b_{i})D^{i}$
for $B=\sum_{i}b_{i}D^{i}$. In particular, $D(\Omega(\mathcal{Q}, \mathcal{R}))=\mathcal{R}^{\intercal}\mathcal{Q}$. Furthermore according to Lemma
\ref{lemmaoperator}
\begin{align*}
\left(D^{-1}<\mathcal{R}^{\intercal} B_{n}^{(\alpha)} \mathcal{Q}> D^{-1} \right)_{<0}&= D^{-1}<\left(\mathcal{R}^{\intercal} B_{n}^{(\alpha)} \mathcal{Q}D^{-1}\right)_{<0}>=D^{-1}\left(\mathcal{R}^{\intercal} \cdot B_{n}^{(\alpha)}(\mathcal{Q})\right)\cdot D^{-1}\\
&=D^{-1}\left(B_{n}^{(\alpha)}(Q)^{\intercal}\cdot\mathcal{R}\right)^{\intercal}\cdot D^{-1}.\\
\Big(D^{-1} \cdot D^{-1} < \mathcal{R}^{\intercal} B_{n}^{(\alpha)} \mathcal{Q} > \Big)_{<0}
&= D^{-1} < \Big(D^{-1}\mathcal{R}^{\intercal} B_{n}^{(\alpha)} \mathcal{Q}\Big)_{<0} >
= D^{-1} < D^{-1} \cdot \Big(\big(B_{n}^{(\alpha)} \mathcal{Q}\big)^{*}(\mathcal{R})\Big)^{\intercal} > \\
&= D^{-1} \cdot D^{-1}  \Big((B_{n}^{(\alpha)*}(\mathcal{R}))^{\intercal} \cdot \mathcal{Q} \Big)
= D^{-1} \cdot D^{-1} \Big( \mathcal{Q}^{\intercal}\cdot B_{n}^{(\alpha)*}(\mathcal{R}) \Big)^{\intercal}.
\end{align*}
Therefore $\Omega\bigl(\mathcal{Q}(\mathbf{x}), \mathcal{R}(\mathbf{x})\bigr)_{x_{n}^{(\alpha)}}=\mathcal{A}_n^{(\alpha)}(\mathbf{x},D)(\mathcal{Q}(\mathbf{x}))=(\mathcal{F}_n^{(\alpha)}(\mathbf{x},D)(\mathcal{R}(\mathbf{x})))^{\intercal},$
where
$$\mathcal{A}_n^{(\alpha)}(\mathbf{x},D)=D^{-1} \cdot
\mathcal{R}(\mathbf{x})^{\intercal}\cdot B_{n}^{(\alpha)}(\mathbf{x},D)-D^{-1} \cdot\left(
B_{n}^{(\alpha)}(\mathbf{x},D)^{*} (\mathcal{R}(\mathbf{x})) \right)^{\intercal},$$ $$\mathcal{F}_n^{(\alpha)}(\mathbf{x},D)=D^{-1} \cdot
B_{n}^{(\alpha)}(\mathbf{x},D)(\mathcal{Q}(\mathbf{x}))^{\intercal}-D^{-1} \cdot
\mathcal{Q}(\mathbf{x})^{\intercal}\cdot B_{n}^{(\alpha)}(\mathbf{x},D)^{*}.$$
By Lemma \ref{lemmaoperator},
we can find
$(\mathcal{A}_n^{(\alpha)}(\mathbf{x},D))_{<0}=(\mathcal{F}_n^{(\alpha)}(\mathbf{x},D))_{<0}=0,$ which
implies that both $\mathcal{A}_n^{(\alpha)}(\mathbf{x},D)$ and
$\mathcal{F}_n^{(\alpha)}(\mathbf{x},D)$ are differential operators.
\end{proof}
\begin{proposition}\label{propositionqandr}
Given the eigenfunction $\mathcal{Q}(\mathbf{x})$ and the adjoint eigenfunction $\mathcal{R}(\mathbf{x})$ of the $N$-component KP hierarchy, together with the corresponding wave function $\widetilde{\Psi}(\mathbf{x},z)$ and adjoint wave function $\widetilde{\Psi}^{*}(\mathbf{x},z)$, the following bilinear identities hold:
\begin{align}
&\operatorname{Res}_z\widetilde{\Psi}(\mathbf{x}',z)\cdot\Omega(\mathcal{Q}(\mathbf{x}), \widetilde{\Psi}^*(\mathbf{x},z))=-\mathcal{Q}(\mathbf{x}'),
\label{qbilinearequation2}\\
&\operatorname{Res}_{z} \Omega\bigl(\widetilde{\Psi}(\mathbf{x},z), \mathcal{R}(\mathbf{x})\bigr)\cdot \widetilde{\Psi}^*(\mathbf{x}',z)^\intercal=\mathcal{R}(\mathbf{x}')^\intercal .\label{rbilinearequation2}
\end{align}
\end{proposition}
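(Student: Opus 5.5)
Both identities will follow from one mechanism: the $\mathbf{x}'$-dependence on each side is rigid, so it is enough to compare the two sides at $\mathbf{x}'=\mathbf{x}$ and then compare their derivatives in every $x_n^{(\alpha)}$.

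For \eqref{qbilinearequation2}, set
\[
I(\mathbf{x}',\mathbf{x})=\operatorname{Res}_z\widetilde{\Psi}(\mathbf{x}',z)\,\Omega(\mathcal{Q}(\mathbf{x}),\widetilde{\Psi}^*(\mathbf{x},z)).
\]
\emph{Dependence on $\mathbf{x}'$.} As a function of $\mathbf{x}'$, $I$ is a combination of columns of $\widetilde{\Psi}(\mathbf{x}',z)$, so $I_{x_n'^{(\alpha)}}=B_n^{(\alpha)}(\mathbf{x}',D')(I)$. The same flow is satisfied by $-\mathcal{Q}(\mathbf{x}')$.

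\emph{Dependence on $\mathbf{x}$.} Apply Lemma \ref{lemmaeigenfunction} with $\mathcal{R}=\widetilde{\Psi}^*$, which is an adjoint eigenfunction. This gives
\[
\Omega(\mathcal{Q},\widetilde{\Psi}^*)_{x_n^{(\alpha)}}=\big(\mathcal{F}_n^{(\alpha)}(\mathbf{x},D)(\widetilde{\Psi}^*(\mathbf{x},z))\big)^{\intercal},
\]
where $\mathcal{F}_n^{(\alpha)}$ is a differential operator built only from $\mathcal{Q}(\mathbf{x})$ and $B_n^{(\alpha)}(\mathbf{x},D)$, and is independent of $z$. Hence
\[
\partial_{x_n^{(\alpha)}}I=\operatorname{Res}_z\widetilde{\Psi}(\mathbf{x}',z)\big(\mathcal{F}_n^{(\alpha)}(\widetilde{\Psi}^*(\mathbf{x},z))\big)^{\intercal}.
\]
Expanding $\mathcal{F}_n^{(\alpha)}$ as a finite sum of terms $f_j(\mathbf{x})D^j$, each term becomes $\big(\partial^{j}\operatorname{Res}_z\widetilde{\Psi}(\mathbf{x}',z)\widetilde{\Psi}^*(\mathbf{x},z)^{\intercal}\big)f_j^{\intercal}$, with the derivatives taken in the unprimed variables. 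By the bilinear identity \eqref{N-bilinear}, with $\mathbf{x}$ and $\mathbf{x}'$ interchanged, all of these vanish. So $I$ is independent of $\mathbf{x}$, and $I(\mathbf{x}',\mathbf{x})=I(\mathbf{x}',\mathbf{x}')$.

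\emph{Evaluation at equal arguments.} It remains to show $\operatorname{Res}_z\widetilde{\Psi}(\mathbf{x},z)\,\Omega(\mathcal{Q}(\mathbf{x}),\widetilde{\Psi}^*(\mathbf{x},z))=-\mathcal{Q}(\mathbf{x})$. The natural route is the standard identity $\Omega(\mathcal{Q},\widetilde{\Psi}^*)=-\big(\mathcal{W}^{-1*}\ldots\big)$. Concretely, one shows
\[
\Omega(\mathcal{Q}(\mathbf{x}),\widetilde{\Psi}^*(\mathbf{x},z))^{\intercal}=-\big(\mathcal{W}^{*-1}D^{-1}(\mathcal{W}^{-1}(\mathcal{Q}))^{\intercal}\ldots\big)\text{-type expression}.
\]
The cleaner way to obtain it is to first check that
\[
\widehat{\Omega}=-\Big(\mathcal{W}^{-1}\mathcal{Q}\,D^{-1}\ \text{applied in adjoint form}\Big)\Big(\sum_\beta E_{\beta\beta}e^{-\xi(\mathbf{x}^{(\beta)},z)}\Big)
\]
satisfies $D(\widehat{\Omega})=\widetilde{\Psi}^{*\intercal}\mathcal{Q}$, up to transposes. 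One then checks that it has the same $x_n^{(\alpha)}$-flows as $\Omega$ and carries the oscillating factor $e^{-\xi}$. These properties fix the integration constant, so $\widehat{\Omega}=\Omega$.

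With this, Lemma \ref{PandQ*} converts the residue at equal arguments into
\[
\operatorname{Res}_D\,\mathcal{W}\big(\mathcal{W}^{-1}\mathcal{Q}D^{-1}\big)_{<0}\text{-type terms}=-\big(\mathcal{W}\mathcal{W}^{-1}\big)(\mathcal{Q})=-\mathcal{Q}.
\]
The key step here is $\operatorname{Res}_D\big(A\,fD^{-1}\big)=A(f)$ for a differential part $A$, together with $(\mathcal{W}\,g D^{-1})_{<0}$ contributing only through $\mathcal{W}(g)D^{-1}$, which is Lemma \ref{lemmaoperator}.

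Identity \eqref{rbilinearequation2} is proved the same way with the roles swapped. The $\mathbf{x}'$-dependence of both sides obeys the adjoint flow. The $\mathbf{x}$-derivative of the left side equals $\mathcal{A}_n^{(\alpha)}$ from Lemma \ref{lemmaeigenfunction} applied to $\widetilde{\Psi}(\mathbf{x},z)$, so it vanishes by \eqref{N-bilinear}. At $\mathbf{x}=\mathbf{x}'$ one uses $\Omega(\widetilde{\Psi},\mathcal{R})=\big(\mathcal{W}D^{-1}\cdot\ldots\big)$, giving $\operatorname{Res}_D(\cdots)=\mathcal{R}^{\intercal}$. The sign differs from the first identity because $D^{-1*}=-D^{-1}$.

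The main obstacle is pinning down $\Omega(\mathcal{Q},\widetilde{\Psi}^*)$ explicitly at equal arguments, including the integration constant. I would first try to verify the closed formula by checking that it satisfies the defining flows $\Omega_{x_n^{(\alpha)}}=\operatorname{Res}_D(\cdots)$. Failing that, I would avoid the closed formula and instead show that the equal-argument residue $J(\mathbf{x})+\mathcal{Q}(\mathbf{x})$ has vanishing $x_1$-derivative and vanishing flows, using $D\Omega=\widetilde{\Psi}^{*\intercal}\mathcal{Q}$ and \eqref{N-bilinear} at $\mathbf{x}=\mathbf{x}'$, which gives $\operatorname{Res}_z\widetilde{\Psi}\widetilde{\Psi}^{*\intercal}=0$.
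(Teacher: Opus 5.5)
This is essentially the paper's proof. You eliminate the $\mathbf{x}$-dependence of the left-hand side using the $z$-independent operators $\mathcal{F}_n^{(\alpha)}$, $\mathcal{A}_n^{(\alpha)}$ of Lemma \ref{lemmaeigenfunction} together with \eqref{N-bilinear}, and then evaluate at $\mathbf{x}'=\mathbf{x}$ through Lemma \ref{PandQ*} with $\Omega(\mathcal{Q},\widetilde{\Psi}^*)^{\intercal}=D^{-1}\mathcal{Q}^{\intercal}(\mathcal{W}^{-1})^{*}\bigl(\sum_{\beta}E_{\beta\beta}e^{-\xi(\mathbf{x}^{(\beta)},z)}\bigr)=-(\mathcal{W}^{-1}\mathcal{Q}D^{-1})^{*}\bigl(\sum_{\beta}E_{\beta\beta}e^{-\xi(\mathbf{x}^{(\beta)},z)}\bigr)$, which is exactly the representation the paper uses tacitly (your point that the $e^{-\xi}$ form fixes the integration constant is one the paper leaves unstated).

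The $\mathbf{x}'$-flow paragraph and the fallback are not needed, and the fallback could not by itself force $J+\mathcal{Q}=0$. The closing appeal to Lemma \ref{lemmaoperator} is also unnecessary, and misapplied since $\mathcal{W}$ is not a differential operator. The final residue is simply $-\operatorname{Res}_D\mathcal{W}\mathcal{W}^{-1}\mathcal{Q}D^{-1}=-\operatorname{Res}_D\mathcal{Q}D^{-1}=-\mathcal{Q}$.
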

\begin{proof}
First, set the left-hand side of \eqref{qbilinearequation2} as
$I(\mathbf{x}, \mathbf{x}'),$ then by Lemma \ref{lemmaeigenfunction},
$$I(\mathbf{x}, \mathbf{x}')_{x_{n}^{(\alpha)}}=\operatorname{Res}_z\widetilde{\Psi}(\mathbf{x}',z)\cdot\Omega(\mathcal{Q}(\mathbf{x}), \widetilde{\Psi}^*(\mathbf{x},z))_{x_{n}^{(\alpha)}}=\operatorname{Res}_z\widetilde{\Psi}(\mathbf{x}',z)\Bigl(\mathcal{F}_n^{(\alpha)}(\mathbf{x},D)(\widetilde{\Psi}^{*}(\mathbf{x},z))\Bigr)^{\intercal}.$$
Together with $\operatorname{Res}_{z}\widetilde{\Psi}(\mathbf{x},z)\widetilde{\Psi}^*(\mathbf{x}',z)^\intercal = 0,$ we obtain $I(\mathbf{x}, \mathbf{x}')_{x_{n}^{(\alpha)}}=0,$ which means that $I(\mathbf{x}, \mathbf{x}')$ only depends on $\mathbf{x}'.$ So we can assume $I(\mathbf{x}, \mathbf{x}')=J(\mathbf{x}').$ Therefore, it can be observed that $J(\mathbf{x})=\operatorname{Res}_z\widetilde{\Psi}(\mathbf{x},z)\Omega(\widetilde{\Psi}^*(\mathbf{x},z), \mathcal{Q}(\mathbf{x}))^{\intercal}.$ By Lemma \ref{PandQ*}, we have
\begin{align*}
J(\mathbf{x})=-\operatorname{Res}_D\mathcal{W}(\mathbf{x},D)\mathcal{W}^{-1}(\mathbf{x},D)\mathcal{Q}(\mathbf{x})D^{-1}
=-\mathcal{Q}(\mathbf{x}).
\end{align*}
So \eqref{qbilinearequation2} is proved, and \eqref{rbilinearequation2} can
be proved in a similar way.
\end{proof}
With the above preparation, we are now in a position to prove {\bf Theorem
	\ref{theorembilinear}}. First, we prove the first part of Theorem
\ref{theorembilinear}: \eqref{LandC} \eqref{constrainlaxoperator2} \eqref{qandrderivative2} $\Rightarrow$
\eqref{qandrbilinearequation1}-\eqref{rbilinearequation1}. Notice that the
proofs of \eqref{qbilinearequation1} \eqref{rbilinearequation1} are
contained in Proposition \ref{propositionqandr}. Thus, it suffices to prove \eqref{qandrbilinearequation1}. In fact, according to
\eqref{matrixLaxdefination} and
$$\mathcal{L}(\mathbf{x},D)^{n}(\Psi(\mathbf{x},z))=z^{n}\Psi(\mathbf{x},z),\, \Psi(\mathbf{x},z)_{x_{n}^{(\alpha)}}=B_{n}^{(\alpha)}(\mathbf{x},D)(\Psi(\mathbf{x},z)),$$ we have
\begin{align*}
\widetilde{\Psi}(\mathbf{x},z)_{x_{k}^{(\alpha)}}=z^{k}\widetilde{\Psi}(\mathbf{x},z)-\sum_{i=1}^{m}\mathcal{Q}_{i}(\mathbf{x})\Omega\bigl(\widetilde{\Psi}(\mathbf{x},z), \mathcal{R}_{i}(\mathbf{x})\bigr).
\end{align*}
Then by
$\operatorname{Res}_z\widetilde{\Psi}(\mathbf{x},z)_{x_{k}^{(\alpha)}}\cdot\widetilde{\Psi}^{*}(\mathbf{x}',z)^{\intercal}
= 0$ derived from \eqref{N-bilinear}, we can obtain
\eqref{qandrbilinearequation1} with the help of \eqref{rbilinearequation1}.

We now prove the second part of Theorem \ref{theorembilinear}, namely
\eqref{qandrbilinearequation1}-\eqref{rbilinearequation1} $\Rightarrow$
\eqref{LandC} \eqref{constrainlaxoperator2} \eqref{qandrderivative2}. Applying
$D'=\sum_{\alpha=1}^{N}\partial_{x_1^{(\alpha)'} }$ to both sides of
\eqref{qbilinearequation1}, and using $D'(\Omega(\mathcal{Q}_{i}(\mathbf{x}')), \widetilde{\Psi}^*(\mathbf{x}',z))=\widetilde{\Psi}^*(\mathbf{x}',z)^{\intercal}\mathcal{Q}_{i}(\mathbf{x}')$, we have
\begin{align*}
 \operatorname{Res}_z \widetilde{\Psi}(\mathbf{x},z)\cdot\widetilde{\Psi}^{*}(\mathbf{x}',z)^{\intercal} = 0.
\end{align*}
Noting that $\widetilde{\Psi}(\mathbf{x},z) =\mathcal{W}(\mathbf{x},D)\left( \sum_{\alpha=1}^N E_{\alpha\alpha} e^{\xi(\mathbf{x}^{(\alpha)},z)} \right)$ and referring to the proof in \cite{kac2023,kac1993}, we can obtain
\begin{align}
&\mathcal{W}_{x_{n}^{(\alpha)}}(\mathbf{x},D)=-\left( \mathcal{W}(\mathbf{x},D) E_{\alpha\alpha} D^n \mathcal{W}(\mathbf{x},D)^{-1} \right)_{<0} \mathcal{W},\label{Wderivative1}\\
&\widetilde{\Psi}^{*}(\mathbf{x},z) ={{(\mathcal{W}(\mathbf{x},D)^{-1})^{*}}}\left( \sum_{\alpha=1}^N E_{\alpha\alpha} e^{-\xi(\mathbf{x}^{(\alpha)},z)} \right).\label{Wderivative2}
\end{align}
So if we set $\mathcal{L}(\mathbf{x},D) = \mathcal{W}(\mathbf{x},D) \partial \mathcal{W}(\mathbf{x},D)^{-1}, C^{(\alpha)}(\mathbf{x},D)
= \mathcal{W}(\mathbf{x},D) E_{\alpha\alpha} \mathcal{W}(\mathbf{x},D)^{-1}$, then $\mathcal{L}(\mathbf{x},D)$ and
$\mathcal{C}^{(\alpha)}(\mathbf{x},D)$ satisfy \eqref{LandC}.

By \eqref{Wderivative1} \eqref{Wderivative2} and $\widetilde{\Psi}(\mathbf{x},z)
=\mathcal{W}(\mathbf{x},D)\left( \sum_{\alpha=1}^N E_{\alpha\alpha}
e^{\xi(\mathbf{x}^{(\alpha)},z)} \right),$ we have
$$\widetilde{\Psi}(\mathbf{x},z)_{x_n^{(\alpha)}} =
B_n^{(\alpha)}(\mathbf{x},D)\bigl(\widetilde{\Psi}(\mathbf{x},z) \bigr), \quad
\widetilde{\Psi}^{*}(\mathbf{x},z)_{x_n^{(\alpha)}}
=-B_n^{(\alpha)}(\mathbf{x},D)^{*}\bigl(\widetilde{\Psi}^{*}(\mathbf{x},z) \bigr).$$
So further by \eqref{qbilinearequation1} \eqref{rbilinearequation1}, we have
\begin{align*}
\mathcal{Q}_i(\mathbf{x})_{x_n^{(\alpha)}}
= B_n^{(\alpha)}(\mathbf{x},D) \left( \mathcal{Q}_i(\mathbf{x}) \right),\quad
\mathcal{R}_i(\mathbf{x})_{x_n^{(\alpha)}} = - B_n^{(\alpha)}(\mathbf{x},D)^{*} \left( \mathcal{R}_i(\mathbf{x}) \right),
\end{align*}
which completes the proof of \eqref{qandrderivative2}.

The key remaining step is to prove \eqref{constrainlaxoperator2}. Apply
$(D')^{j}=(\sum_{\alpha=1}^{N}\partial_{x_{1}}^{(\alpha)'})^{j}$ to \eqref{qandrbilinearequation1} and let
$\mathbf{x}'=\mathbf{x}$, we can obtain
\begin{align*}
\operatorname{Res}_z \, z^k \widetilde{\Psi}(\mathbf{x},z) \cdot  D^j \left(\widetilde{\Psi}^*(\mathbf{x},z) \right)^\intercal= \sum_{i=1}^m \mathcal{Q}_i(\mathbf{x})\cdot D^j(\mathcal{R}_i(\mathbf{x})^\intercal).
\end{align*}
Further by Lemma \ref{PandQ*}, we can obtain $\operatorname{Res}_{D} (\mathcal{W} D^k \mathcal{W}^{-1} \cdot D^j) = (-1)^j \sum_{i=1}^m \mathcal{Q}_i(\mathbf{x}) \cdot D^j(\mathcal{R}_i(\mathbf{x})^\intercal)$.
Therefore
\begin{align*}
(\mathcal{L}^{k})_{<0}= \sum_{j=0}^\infty \operatorname{Res}_{D} (\mathcal{W} D^k \mathcal{W}^{-1} \cdot D^j) D^{-j-1} = \sum_{j=0}^\infty (-1)^j \sum_{i=1}^m \mathcal{Q}_i\cdot D^j(\mathcal{R}_i^{\intercal})\cdot D^{-j-1}
= \sum_{i=1}^m \mathcal{Q}_i D^{-1} \mathcal{R}_i^\intercal.
\end{align*}
So far, we have completed the proof of {\bf Theorem \ref{theorembilinear}}.

\subsection{Bilinear equations in the form of tau functions}

\begin{proposition}
For the eigenfunction $\mathcal{Q}(\mathbf{x})$ and the adjoint eigenfunction $\mathcal{R}(\mathbf{x})$ of the $N$-component KP hierarchy, the following results hold:
\begin{align}
&\Omega(\mathcal{Q}(\mathbf{x}),\widetilde{\Psi}^*(\mathbf{x},z))_{\alpha\beta} = -z^{-1}\widetilde{\Psi}^*_{\alpha\alpha}(\mathbf{x},z)\mathcal{Q}\left( \mathbf{x} + [z^{-1}]_\alpha \right)_{\alpha\beta}, \label{Omegaandtau1}\\
&\Omega(\widetilde{\Psi}(\mathbf{x},z),\mathcal{R}(\mathbf{x}))_{\alpha\beta} =z^{-1}\widetilde{\Psi}_{\beta\beta}(\mathbf{x},z)\mathcal{R}\left( \mathbf{x} - [z^{-1}]_\beta \right)_{\beta\alpha}.\label{Omegaandtau2}
\end{align}
\end{proposition}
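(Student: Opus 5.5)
The plan is to characterize $\Omega$ through its defining properties and then verify that the proposed right-hand sides have those properties. Recall that $\Omega(F,G)$ is fixed by its flows $\Omega(F,G)_{x_n^{(\beta)}}=\operatorname{Res}_D(D^{-1}G^{\intercal}B_n^{(\beta)}FD^{-1})$, up to a constant. For eigenfunction pairs, Lemma \ref{lemmaeigenfunction} shows that these flows are differential expressions in $F$, and also in $G$.

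I will prove \eqref{Omegaandtau1} via a bilinear route; \eqref{Omegaandtau2} follows symmetrically (swap the roles of $\widetilde{\Psi}$ and $\widetilde{\Psi}^*$, and of $\mathbf{x}-[z^{-1}]$ and $\mathbf{x}+[z^{-1}]$). The key input is Proposition \ref{propositionqandr}, equation \eqref{qbilinearequation2}:
\[
\operatorname{Res}_z\widetilde{\Psi}(\mathbf{x}',z)\,\Omega(\mathcal{Q}(\mathbf{x}),\widetilde{\Psi}^*(\mathbf{x},z))=-\mathcal{Q}(\mathbf{x}').
\]
We specialize $\mathbf{x}'=\mathbf{x}+[\lambda^{-1}]_\alpha$ and use the wave function formula \eqref{wavematrix1}. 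The factor $e^{\xi(\mathbf{x}^{(\gamma)}+[\lambda^{-1}]_\alpha^{(\gamma)}-\mathbf{x}^{(\gamma)},z)}$ then becomes $(1-z/\lambda)^{-1}$ when $\gamma=\alpha$, and $1$ otherwise. Taking the residue in $z$ of a series times $(1-z/\lambda)^{-1}$ extracts the negative part of that series, evaluated at $z=\lambda$.

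We first need an ansatz for the shape of $\Omega(\mathcal{Q},\widetilde{\Psi}^*)$. Since $D\,\Omega(\mathcal{Q},\widetilde{\Psi}^*)=\widetilde{\Psi}^{*\intercal}\mathcal{Q}$, row $\alpha$ of $\Omega(\mathcal{Q},\widetilde{\Psi}^*)$ is proportional to $e^{-\xi(\mathbf{x}^{(\alpha)},z)}$ times a series $z^{-1}(c_0+c_1z^{-1}+\cdots)$. Fixing the integration constant so that no $z^{0}$ term appears, the claim is that this series equals $-z^{-1}\frac{\tau(\mathbf{x}+[z^{-1}]_\alpha)}{\tau(\mathbf{x})}\mathcal{Q}(\mathbf{x}+[z^{-1}]_\alpha)_{\alpha\beta}$; equivalently, the right-hand side of \eqref{Omegaandtau1}, since $\widetilde{\Psi}^*_{\alpha\alpha}=\frac{\tau(\mathbf{x}+[z^{-1}]_\alpha)}{\tau(\mathbf{x})}e^{-\xi(\mathbf{x}^{(\alpha)},z)}$.

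The verification proceeds in three steps. First, define $\widehat{\Omega}$ as the right-hand side of \eqref{Omegaandtau1}. Second, check $D\widehat{\Omega}=\widetilde{\Psi}^{*\intercal}\mathcal{Q}$ and $\widehat{\Omega}_{x_n^{(\gamma)}}=\mathcal{F}_n^{(\gamma)}(\widetilde{\Psi}^*)^{\intercal}$. The cleanest way is the standard trick: apply \eqref{qbilinearequation2}, and the adjoint-wave bilinear identity \eqref{N-bilinear}, at $\mathbf{x}'=\mathbf{x}-[\lambda^{-1}]_\gamma$. Residue-evaluation then produces the difference relation relating $\Omega(\mathcal{Q},\widetilde{\Psi}^*)$ to shifted values of $\mathcal{Q}$. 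Concretely, from \eqref{qbilinearequation2} with $\mathbf{x}'=\mathbf{x}-[\lambda^{-1}]_\alpha$ and the $(\alpha,\cdot)$ row, the residue picks out $\lambda\cdot(\text{series})|_{z=\lambda}$ times $\tau(\mathbf{x}-[\lambda^{-1}]_\alpha)/\tau$-prefactors. Comparing with $-\mathcal{Q}(\mathbf{x}-[\lambda^{-1}]_\alpha)_{\alpha\beta}$ should give exactly that $\tau(\mathbf{x})\,\Omega(\mathcal{Q},\widetilde{\Psi}^*)_{\alpha\beta}\,e^{\xi(\mathbf{x}^{(\alpha)},z)}$, shifted by $-[z^{-1}]_\alpha$, equals $-z^{-1}\tau(\mathbf{x}-[z^{-1}]_\alpha)\mathcal{Q}(\mathbf{x}-[z^{-1}]_\alpha)_{\alpha\beta}$ (with appropriate identification). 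Undoing the shift, i.e. replacing $\mathbf{x}$ by $\mathbf{x}+[z^{-1}]_\alpha$, gives \eqref{Omegaandtau1}.

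The main obstacle is the off-diagonal terms $\gamma\neq\alpha$ in the sum over $\gamma$. There the exponential factor is trivial, and the sign factors $\varepsilon_{\alpha\gamma}$ and off-diagonal $\tau_{\alpha\gamma}$ appear. I expect to handle them by noting that their $z$-series contain only the strictly negative powers coming from $z^{\delta-1}$ and $z^{-1}$. Hence they contribute polynomially bounded, $\lambda$-independent terms, which must cancel against the corresponding $\lambda^{0}$ part. Alternatively, I can avoid them by working with the flows: check that both $\Omega$ and $\widehat{\Omega}$ satisfy $D(\cdot)=\widetilde{\Psi}^{*\intercal}\mathcal{Q}$ via the Fay-type differential identity obtained from \eqref{qbilinearequation2} at $\mathbf{x}'=\mathbf{x}$ after one $D'$-derivative. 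Then uniqueness up to a $z$-dependent constant, killed by the normalization $\Omega=O(z^{-1})e^{-\xi}$, finishes the proof.
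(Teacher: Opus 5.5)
Your overall strategy is the paper's: specialize \eqref{qbilinearequation2} at a Miwa-shifted point so that the exponentials combine into $(1-z/\lambda)^{-1}$ in the $\alpha$-th slot, then read off the $\gamma=\alpha$ term with \eqref{formula}. The execution has real errors, however, starting with the step you flag as the main obstacle. That step is in fact trivial, and you handle it incorrectly. For $\gamma\neq\alpha$ the exponential factor is $1$, and:
\begin{itemize}
\item $e^{-\xi(\mathbf{x}'^{(\gamma)},z)}\widetilde{\Psi}_{\alpha\gamma}(\mathbf{x}',z)$ is $O(z^{-1})$ because of $z^{\delta_{\alpha\gamma}-1}$;
\item $e^{\xi(\mathbf{x}^{(\gamma)},z)}\Omega(\mathcal{Q}(\mathbf{x}),\widetilde{\Psi}^*(\mathbf{x},z))_{\gamma\beta}$ is $O(z^{-1})$ by the normalization you yourself fixed.
\end{itemize}
So the product is $O(z^{-2})$ and its residue is exactly zero. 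You noted both facts but drew the wrong conclusion. The integrands are not $\lambda$-independent: they depend on $\lambda$ through the shifted arguments of $\tau_{\alpha\gamma}$ and $\tau$. Also, ``cancelling against the $\lambda^{0}$ part'' is not a valid mechanism, since nothing else in the identity could absorb nonzero contributions.

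Your fallback does not avoid these terms either. Uniqueness itself is fine. On expressions $e^{-\xi(\mathbf{x}^{(\alpha)},z)}g$ with $g=\sum_{i\geq1}g_i(\mathbf{x})z^{-i}$, $D$ acts as $g\mapsto Dg-zg$, which is injective, so $D(\cdot)=\widetilde{\Psi}^{*\intercal}\mathcal{Q}$ determines $\Omega$. The problem is verifying $D\widehat{\Omega}=\widetilde{\Psi}^{*\intercal}\mathcal{Q}$. This is a Miwa-shifted identity of the type of Lemma \ref{D}, which the paper derives from \eqref{tau2}. But in this direction \eqref{tau2} is itself a consequence of \eqref{Omegaandtau1}. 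Setting $\mathbf{x}'=\mathbf{x}$ in \eqref{qbilinearequation2} introduces no shift and cannot produce that identity. So the fallback is either circular or goes back to a shifted specialization, where the off-diagonal terms reappear.

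Your ``concrete'' computation also has the shift backwards. Taken literally, $\mathbf{x}'=\mathbf{x}-[\lambda^{-1}]_\alpha$ produces the factor $1-z/\lambda$, not its inverse, so no evaluation at $z=\lambda$ occurs. Moreover, the intermediate identity you state gives, after your ``undoing the shift'', $\Omega_{\alpha\beta}=-z^{-1}e^{-\xi(\mathbf{x}^{(\alpha)},z)}\mathcal{Q}(\mathbf{x})_{\alpha\beta}$, which is false. Its right-hand side should be $-z^{-1}\tau(\mathbf{x})\mathcal{Q}(\mathbf{x})_{\alpha\beta}$.

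Keep your first choice $\mathbf{x}'=\mathbf{x}+[\lambda^{-1}]_\alpha$. With the off-diagonal terms gone, \eqref{formula} together with $e^{-\xi(\mathbf{x}'^{(\alpha)},\lambda)}\widetilde{\Psi}_{\alpha\alpha}(\mathbf{x}',\lambda)=\tau(\mathbf{x})/\tau(\mathbf{x}+[\lambda^{-1}]_\alpha)$ gives
\[
-\mathcal{Q}\bigl(\mathbf{x}+[\lambda^{-1}]_\alpha\bigr)_{\alpha\beta}=\lambda\,\frac{\tau(\mathbf{x})}{\tau(\mathbf{x}+[\lambda^{-1}]_\alpha)}\,\Bigl(e^{\xi(\mathbf{x}^{(\alpha)},z)}\,\Omega\bigl(\mathcal{Q}(\mathbf{x}),\widetilde{\Psi}^*(\mathbf{x},z)\bigr)_{\alpha\beta}\Bigr)\Big|_{z=\lambda}.
\]
This is \eqref{Omegaandtau1} directly, with no shift to undo. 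The paper instead shifts the $\Omega$-side argument ($\mathbf{x}\to\mathbf{x}-[p^{-1}]_\alpha$, $\mathbf{x}'\to\mathbf{x}$) and shifts back at the end. Either way, this residue computation determines $\Omega$ outright. Your scheme of checking $D\widehat{\Omega}$ and the flows $\widehat{\Omega}_{x_n^{(\gamma)}}$ is therefore unnecessary.
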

\begin{proof}
Notice that $\Omega\left(\mathcal{Q}(\mathbf{x}), \widetilde{\Psi}^*(\mathbf{x},z)\right) = -\widehat{\Phi}(\mathbf{x},z)\left(\sum_{\alpha=1}^{N} E_{\alpha\alpha} e^{-\xi(\mathbf{x}^{(\alpha)}, z)}\right),$
and $\widehat{\Phi}(\mathbf{x},z) = \mathcal{Q}(\mathbf{x})^\intercal z^{-1} + \mathcal{O}(z^{-2})$. Further if we set $\widehat{\Psi}(\mathbf{x},z) = \widetilde{\Psi}(\mathbf{x},z)\left(\sum_{\alpha=1}^{N} E_{\alpha\alpha} e^{-\xi(\mathbf{x}^{(\alpha)}, z)}\right),$ then $\widehat{\Psi}(\mathbf{x},z) = 1 + \mathcal{O}(z^{-1})$.

Since $\mathcal{Q}(\mathbf{x})$ is the eigenfunction, so by Proposition \ref{propositionqandr}, we have \eqref{qbilinearequation2}. So if we set $\mathbf{x}'\rightarrow \mathbf{x}, \,\mathbf{x}\rightarrow \mathbf{x}-[p^{-1}]_{\alpha}$ in \eqref{qbilinearequation2}, then the $(\alpha,\beta)$-position will give rise to
  \begin{align*}
\mathcal{Q}(\mathbf{x})_{\alpha\beta}&= \operatorname{Res}_z \widehat{\Psi}(\mathbf{x},z)_{\alpha\alpha} \widehat{\Phi}(\mathbf{x} - [p^{-1}]_\alpha, z)_{\beta\alpha} \frac{1}{1-z/p}+\sum_{\gamma \neq \alpha}^{N} \operatorname{Res}_z \widehat{\Psi}(\mathbf{x},z)_{\alpha\gamma} \widehat{\Phi}(\mathbf{x} - [p^{-1}]_\alpha, z)_{\beta\gamma}\\
&=\operatorname{Res}_z \widehat{\Psi}(\mathbf{x},z)_{\alpha\alpha} \widehat{\Phi}(\mathbf{x} - [p^{-1}]_\alpha, z)_{\beta\alpha} \frac{1}{1-z/p},
  \end{align*}
Then by the formula \cite{dickey2003}:
\begin{align}\label{formula}
\operatorname{Res}_z \sum_{i=1}^{+\infty} a_i(p) z^{-i}\frac{1}{1-z/p}=p\sum_{i=1}^{+\infty} a_i(p) p^{-i},
\end{align}
and $\widehat{\Psi}(\mathbf{x},z)_{\alpha\alpha}=\frac{\tau\left(\mathbf{x}-[z^{-1}]_\alpha\right)}{\tau(\mathbf{x})}$, we have
$$\widehat{\Phi}(\mathbf{x}, p)_{\beta\alpha} = \mathcal{Q}(\mathbf{x}+[p^{-1}]_\alpha)_{\alpha\beta} \frac{\tau(\mathbf{x} + [p^{-1}]_\alpha)}{\tau(\mathbf{x})} \cdot p^{-1},$$
which implies \eqref{Omegaandtau1}. And \eqref{Omegaandtau2}
can be proved in a similar way.
\end{proof}
\begin{lemma}\label{D}
Given $\tau_{\alpha\beta}(\mathbf{x})$ and $\rho_{\alpha\beta} (\mathbf{x})$ with $\tau_{\alpha\alpha}(\mathbf{x})=\tau(\mathbf{x})$ satisfing
\begin{align}\label{starone}
\operatorname{Res}_z \sum_{\gamma=1}^{N} \varepsilon_{\alpha\gamma}
z^{\delta_{\alpha\gamma}-2}\tau_{\alpha\gamma}\bigl(\mathbf{x}-[z^{-1}]_\gamma\bigr)\rho_{\gamma\beta}\bigl(\mathbf{x}'+[z^{-1}]_\gamma\bigr)
e^{\xi(\mathbf{x}^{(\gamma)}-\mathbf{x}^{(\gamma)'},z)} = \tau(\mathbf{x}')\rho_{\alpha\beta}(\mathbf{x}),
\end{align}
we have
\begin{equation}
\begin{aligned}\label{startwo}
D \tau(\mathbf{x})\cdot\rho_{\alpha \beta}\bigl(\mathbf{x} + [\lambda^{-1}]_{\alpha}\bigr) - \tau(\mathbf{x}) \cdot D \rho_{\alpha \beta}\bigl(\mathbf{x} + [\lambda^{-1}]_{\alpha}\bigr)&+ \lambda \Big(\tau(\mathbf{x}) \rho_{\alpha \beta}\bigl(\mathbf{x} + [\lambda^{-1}]_{\alpha}\bigr)-\tau\bigl(\mathbf{x} + [\lambda^{-1}]_{\alpha}\bigr) \rho_{\alpha \beta}(\mathbf{x})\Big) \\
&= \sum_{\gamma \neq \alpha} \varepsilon_{\alpha \gamma} \tau_{\alpha \gamma}\bigl(\mathbf{x} + [\lambda^{-1}]_{\alpha}\bigr) \rho_{\gamma \beta}(\mathbf{x}).
\end{aligned}
\end{equation}
\end{lemma}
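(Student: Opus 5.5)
We specialize the bilinear identity \eqref{starone} at a shifted point, namely $\mathbf{x}'=\mathbf{x}$ and $\mathbf{x}\to\mathbf{x}+[\lambda^{-1}]_{\alpha}$ (more precisely we replace $\mathbf{x}$ by $\mathbf{x}+[\lambda^{-1}]_\alpha$ in the first argument), and then expand the result to first order in the difference $\mathbf{x}-\mathbf{x}'$ along the direction $D$. Equivalently, the cleanest route is to take the difference $\mathbf{x}-\mathbf{x}'=[\lambda^{-1}]_\alpha+ (y,0,0,\dots)$ in every component, i.e. set $\mathbf{x}\to\mathbf{x}+[\lambda^{-1}]_\alpha+\mathbf{y}$, $\mathbf{x}'=\mathbf{x}$, where $\mathbf{y}$ shifts $x_1^{(\gamma)}$ by the same $y$ for all $\gamma$. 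Then $e^{\xi(\mathbf{x}^{(\gamma)}-\mathbf{x}'^{(\gamma)},z)}$ equals $e^{yz}$ for $\gamma\neq\alpha$ and $e^{yz}/(1-z/\lambda)$ for $\gamma=\alpha$, using $e^{\xi([\lambda^{-1}],z)}=(1-z/\lambda)^{-1}$. We then differentiate once in $y$ and set $y=0$; the $y$-derivative of the $\tau,\rho$ arguments produces the $D$-terms, since $\partial_y$ acting on a function of $\mathbf{x}+\mathbf{y}$ is exactly $D$.

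Next we compute the residues term by term. For $\gamma\neq\alpha$ the factor is $z^{-2}\tau_{\alpha\gamma}(\cdot-[z^{-1}]_\gamma)\rho_{\gamma\beta}(\cdot+[z^{-1}]_\gamma)(1+yz+\dots)$; the leading power is $z^{-2}$, so at order $y^0$ the residue vanishes, and at order $y^1$ only the leading coefficient survives, giving $\varepsilon_{\alpha\gamma}\tau_{\alpha\gamma}(\mathbf{x}+[\lambda^{-1}]_\alpha)\rho_{\gamma\beta}(\mathbf{x})$. This is exactly the right-hand side of \eqref{startwo}. For $\gamma=\alpha$ the integrand is $z^{-1}\tau(\mathbf{x}+\mathbf{y}+[\lambda^{-1}]_\alpha-[z^{-1}]_\alpha)\rho_{\alpha\beta}(\mathbf{x}+[z^{-1}]_\alpha)\,e^{yz}/(1-z/\lambda)$, whose residue we evaluate with formula \eqref{formula}. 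Since the power series in $z^{-1}$ starts at $z^{-1}$, the residue amounts to evaluating the series at $z=\lambda$ (times $\lambda$ with the $z^{-1}$ bookkeeping), minus the polynomial part coming from $e^{yz}$. At order $y^0$ the result is $\tau(\mathbf{x})\rho_{\alpha\beta}(\mathbf{x}+[\lambda^{-1}]_\alpha)$, which matches the right-hand side $\tau(\mathbf{x}')\rho_{\alpha\beta}(\mathbf{x}+[\lambda^{-1}]_\alpha)$ trivially. At order $y^1$ the factor $z\,e^{0}$ raises the power, so \eqref{formula} gives $\lambda\tau(\mathbf{x})\rho_{\alpha\beta}(\mathbf{x}+[\lambda^{-1}]_\alpha)$ plus a correction from the $z^0$ term. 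That correction is $-\tau(\mathbf{x}+[\lambda^{-1}]_\alpha)\rho_{\alpha\beta}(\mathbf{x})\cdot\lambda$, since $\mathrm{Res}_z z^0/(1-z/\lambda)\neq$ the naive value; we handle it by writing $z/(1-z/\lambda)=\lambda\bigl(1/(1-z/\lambda)-1\bigr)$. The $D$-derivative of $\tau$ gives $D\tau(\mathbf{x})\rho_{\alpha\beta}(\mathbf{x}+[\lambda^{-1}]_\alpha)$. The right-hand side at order $y$ is $\tau(\mathbf{x})D\rho_{\alpha\beta}(\mathbf{x}+[\lambda^{-1}]_\alpha)$. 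Collecting all terms yields \eqref{startwo}.

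The main obstacle is the careful residue bookkeeping in the $\gamma=\alpha$ term. There we must correctly track which shifts land on $\tau$ versus $\rho$ (note $\rho$ sits at $\mathbf{x}'+[z^{-1}]_\alpha$), and we must verify that \eqref{formula} applies to series beginning at $z^{-1}$ after the factor $z$ from $e^{yz}$ is included. We expect to split $z\cdot z^{-1}f(z^{-1})=f_0+\sum_{i\ge1}f_iz^{-i}$ and treat the constant $f_0=\tau(\mathbf{x}+[\lambda^{-1}]_\alpha)\rho_{\alpha\beta}(\mathbf{x})$ separately. Care is also needed to fix the placement of $\mathbf{x}$ versus $\mathbf{x}'$ so that the shift $[\lambda^{-1}]_\alpha$ appears on the arguments exactly as in \eqref{startwo}.
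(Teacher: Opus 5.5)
Your route is the paper's route. Differentiating in a common shift $y$ of all $x_1^{(\gamma)}$, after putting $\mathbf{x}\to\mathbf{x}+[\lambda^{-1}]_\alpha$ and $\mathbf{x}'=\mathbf{x}$, is the same as applying $D$ to \eqref{starone}, setting $\mathbf{x}-\mathbf{x}'=[\lambda^{-1}]_\alpha$, and then shifting. Every residue you evaluate is correct: the $\gamma\neq\alpha$ terms, the $\gamma=\alpha$ term via $z/(1-z/\lambda)=\lambda\bigl((1-z/\lambda)^{-1}-1\bigr)$ with constant term $\tau(\mathbf{x}+[\lambda^{-1}]_\alpha)\rho_{\alpha\beta}(\mathbf{x})$, and the right-hand side.

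The gap is your last sentence, ``collecting all terms yields \eqref{startwo}''. Your computations actually give
\[
\sum_{\gamma\neq\alpha}\varepsilon_{\alpha\gamma}\tau_{\alpha\gamma}\bigl(\mathbf{x}+[\lambda^{-1}]_\alpha\bigr)\rho_{\gamma\beta}(\mathbf{x})
+D\tau(\mathbf{x})\,\rho_{\alpha\beta}\bigl(\mathbf{x}+[\lambda^{-1}]_\alpha\bigr)
+\lambda\tau(\mathbf{x})\rho_{\alpha\beta}\bigl(\mathbf{x}+[\lambda^{-1}]_\alpha\bigr)
-\lambda\tau\bigl(\mathbf{x}+[\lambda^{-1}]_\alpha\bigr)\rho_{\alpha\beta}(\mathbf{x})
=\tau(\mathbf{x})\,D\rho_{\alpha\beta}\bigl(\mathbf{x}+[\lambda^{-1}]_\alpha\bigr).
\]
Here the $\gamma\neq\alpha$ sum sits on the same side as the $D\tau$- and $\lambda$-terms. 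Isolating it, the left-hand side of \eqref{startwo} equals $-\sum_{\gamma\neq\alpha}\varepsilon_{\alpha\gamma}\tau_{\alpha\gamma}(\mathbf{x}+[\lambda^{-1}]_\alpha)\rho_{\gamma\beta}(\mathbf{x})$, which is $\sum_{\gamma\neq\alpha}\varepsilon_{\gamma\alpha}\tau_{\alpha\gamma}(\mathbf{x}+[\lambda^{-1}]_\alpha)\rho_{\gamma\beta}(\mathbf{x})$, not the printed $+$ sign. Your remark that the $\gamma\neq\alpha$ residue ``is exactly the right-hand side of \eqref{startwo}'' is where the sign is lost.

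More careful bookkeeping cannot fix this, because for $N\geq 2$ the printed identity does not follow from \eqref{starone}. Combined with the correct identity, it would force $\sum_{\gamma\neq\alpha}\varepsilon_{\alpha\gamma}\tau_{\alpha\gamma}(\mathbf{x}+[\lambda^{-1}]_\alpha)\rho_{\gamma\beta}(\mathbf{x})\equiv 0$.

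An independent check at lowest order: shift only $x_1^{(\delta)}$, $\delta\neq\alpha$, in \eqref{starone} and keep first order. This gives $\partial_{x_1^{(\delta)}}\tau\cdot\rho_{\alpha\beta}-\tau\cdot\partial_{x_1^{(\delta)}}\rho_{\alpha\beta}=-\varepsilon_{\alpha\delta}\tau_{\alpha\delta}\rho_{\delta\beta}$.
\begin{itemize}
\item Summed over $\delta\neq\alpha$, this is the $\lambda^{0}$ coefficient of the corrected identity. The $\lambda^{0}$ coefficient of the printed \eqref{startwo} has the opposite sign.
\item For $N=2$, $\alpha=2$, $\delta=\beta=1$, it is exactly the paper's own Hirota example $D_{x_1^{(1)}}\tau\cdot\rho_{21}-\tau_{21}\cdot\rho_{11}=0$, read with $D_xf\cdot g=f_xg-fg_x$.
\end{itemize}

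The paper's proof closes with the same unchecked rearrangement, so the statement carries a sign typo: $\varepsilon_{\alpha\gamma}$ should read $\varepsilon_{\gamma\alpha}$. The corrected form is also what is needed afterwards to obtain $\sum_{\gamma}\widetilde{\Psi}^*_{\gamma\alpha}\mathcal{Q}_{\gamma\beta}=-z^{-1}D\bigl(\widetilde{\Psi}^*_{\alpha\alpha}\mathcal{Q}(\mathbf{x}+[z^{-1}]_\alpha)_{\alpha\beta}\bigr)$, since $\widetilde{\Psi}^*_{\gamma\alpha}$ carries the factor $\varepsilon_{\gamma\alpha}$. So keep your argument, but conclude with the sign-corrected identity and note explicitly that it differs from \eqref{startwo} as written.
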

\begin{proof}
First, apply $D = \sum_{\eta=1}^{N} x_1^{(\eta)}$ to both sides of \eqref{starone} and set $\mathbf{x}-\mathbf{x}'=[\lambda^{-1}]_\alpha$, then the corrosponding $(\alpha, \beta)$-position will become
\begin{align*}
&\sum_{\gamma\neq\alpha}\operatorname{Res}_z\varepsilon_{\alpha\gamma}z^{-2}\Bigl( D \tau_{\alpha\gamma}\bigl(\mathbf{x}-[z^{-1}]_\gamma\bigr)+ z\tau_{\alpha\gamma}\bigl(\mathbf{x}-[z^{-1}]_\gamma\bigr)
\Bigr) \cdot \rho_{\gamma\beta}\bigl(\mathbf{x}'+[z^{-1}]_\gamma\bigr)\\
&+\operatorname{Res}_zz^{-1}
\Bigl(D \tau\bigl(\mathbf{x}-[z^{-1}]_\alpha\bigr)+ z \tau\bigl(\mathbf{x}-[z^{-1}]_\alpha\bigr)\Bigr) \rho_{\alpha\beta}\bigl(\mathbf{x}'+[z^{-1}]_\alpha\bigr)(1-z/\lambda)^{-1} = \tau(\mathbf{x}')\cdot D \rho_{\alpha\beta}(\mathbf{x}).
\end{align*}
Notice that the left side is
$$\sum_{\gamma\neq\alpha}\varepsilon_{\alpha\gamma} \tau_{\alpha\gamma}(\mathbf{x})\rho_{\gamma\beta}(\mathbf{x}')+\Bigl(D \tau\bigl(\mathbf{x}-[\lambda^{-1}]_\alpha\bigr) + \lambda \tau\bigl(\mathbf{x}-[\lambda^{-1}]_\alpha\bigr)\Bigr)\cdot\rho_{\alpha\beta}\bigl(\mathbf{x}'+[\lambda^{-1}]_\alpha\bigr) - \lambda \tau(\mathbf{x})\rho_{\alpha\beta}(\mathbf{x}'),$$
where we have used the formula \eqref{formula}.
So finally if we set $\mathbf{x} \to \mathbf{x} + [\lambda^{-1}]_\alpha, \, \mathbf{x}' \to \mathbf{x}$, then we can obtain \eqref{startwo}.
\end{proof}
Now let us prove {\bf Theorem \ref{theoremtau}}. Firstly, it is obviously that \eqref{tau11}-\eqref{tau3} follows from
\eqref{qandrbilinearequation1}-\eqref{rbilinearequation1} and \eqref{wavematrix1}
\eqref{Omegaandtau1} \eqref{Omegaandtau2}, which is the first part of Theorem \ref{theoremtau}.
We now prove the second part of Theorem \ref{theoremtau}. For $\tau_{\alpha\beta}(\mathbf{x})$ with $\tau_{\alpha\alpha}(\mathbf{x})=\tau(\mathbf{x})$, $\rho_{i,\alpha\beta} (\mathbf{x}),\, \sigma_{i,\alpha\beta} (\mathbf{x})$
satisfy \eqref{tau11}-\eqref{tau3}, let us firstly set
\begin{align*}
&\widetilde{\Psi}_{\alpha\beta}(\mathbf{x}, z)=\varepsilon_{\alpha\beta}\frac{\tau_{\alpha\beta}\left(\mathbf{x}-[z^{-1}]_\beta\right)}{\tau(\mathbf{x})}z^{\delta_{\alpha\beta}-1}e^{\xi(\mathbf{x}^{(\beta)}, z)}, \\
&\widetilde{\Psi}_{\beta\alpha}^{*}(\mathbf{x},z)=\varepsilon_{\beta\alpha}\frac{\tau_{\alpha\beta}\left(\mathbf{x}+[z^{-1}]_\alpha\right)}{\tau(\mathbf{x})}z^{\delta_{\beta\alpha}-1} e^{-\xi(\mathbf{x}^{(\alpha)}, z)},\\
&\mathcal{Q}_{i,\alpha\beta}(\mathbf{x})= \frac{\rho_{i,\alpha\beta} (\mathbf{x})}{\tau(\mathbf{x})}, \quad \mathcal{R}_{i,\alpha\beta}(\mathbf{x})=\frac{\sigma_{i,\alpha\beta}(\mathbf{x})}{\tau(\mathbf{x})}, \quad 1\leq i\leq m.
\end{align*}
Then by \eqref{tau11}-\eqref{tau3}, we have
\begin{align}
&\operatorname{Res}_z z^k \sum_{\gamma=1}^{N}\widetilde{\Psi}_{\alpha\gamma}(\mathbf{x},z)\widetilde{\Psi}^*_{\beta\gamma}(\mathbf{x}',z)=\sum_{\gamma=1}^{N}\sum_{i=1}^m Q_{i,\alpha\gamma}(\mathbf{x})R_{i,\beta\gamma}(\mathbf{x}), \nonumber\\
&\operatorname{Res}_z z^{-1} \sum_{\gamma=1}^{N} \widetilde{\Psi}_{\alpha\gamma}(\mathbf{x},z)\widetilde{\Psi}^*_{\gamma\gamma}(\mathbf{x}',z)Q_{i,\alpha\beta}\big(\mathbf{x}'+[z^{-1}]_\gamma\big)=Q_{i,\alpha\beta}(\mathbf{x}), \label{D2} \\
&\operatorname{Res}_z z^{-1} \sum_{\gamma=1}^{N} \widetilde{\Psi}_{\gamma\gamma}(\mathbf{x},z)\widetilde{\Psi}^*_{\beta\gamma}(\mathbf{x}',z)R_{i,\alpha\beta}\big(\mathbf{x}-[z^{-1}]_\gamma\big)=R_{i,\alpha\beta}(\mathbf{x}'). \label{D3}
\end{align}
Furthermore from Lemma \ref{D}, we conclude that

\begin{align*}
&\sum_{\gamma=1}^{N} \widetilde{\Psi}^*_{\gamma\alpha}(\mathbf{x},z)Q_{\gamma\beta}(\mathbf{x})=-z^{-1}D\big(\widetilde{\Psi}^*_{\alpha\alpha}(\mathbf{x},z)\mathcal{Q}( \mathbf{x} + [z^{-1}]_\alpha)_{\alpha\beta}\big), \\
&\sum_{\gamma=1}^{N} R_{\gamma\alpha}(\mathbf{x})\widetilde{\Psi}_{\gamma\beta}(\mathbf{x},z)=z^{-1}D\big(\widetilde{\Psi}_{\beta\beta}(\mathbf{x},z)\mathcal{R}( \mathbf{x} - [z^{-1}]_\beta )_{\beta\alpha}\big),
\end{align*}
which imply
\begin{align*}
\Omega(\mathcal{Q}(\mathbf{x}),\widetilde{\Psi}^*(\mathbf{x},z))_{\alpha\beta} = -z^{-1}\widetilde{\Psi}^*_{\alpha\alpha}(\mathbf{x},z)\mathcal{Q}\left( \mathbf{x} + [z^{-1}]_\alpha \right)_{\alpha\beta}, \quad\Omega(\widetilde{\Psi}(\mathbf{x},z),\mathcal{R}(\mathbf{x}))_{\alpha\beta} =z^{-1}\widetilde{\Psi}_{\beta\beta}(\mathbf{x},z)\mathcal{R}\left( \mathbf{x} - [z^{-1}]_\beta \right)_{\beta\alpha}.
\end{align*}
Thus \eqref{D2} \eqref{D3} can be rewritten as \eqref{qbilinearequation1}
\eqref{rbilinearequation1}. So far, all conditions required for the second
part of Theorem \ref{theorembilinear} have been derived, namely
\eqref{qandrbilinearequation1}-\eqref{rbilinearequation1}. Consequently the second part of Theorem \ref{theoremtau} can be proved based on the Theorem
\ref{theorembilinear}. By now, we have finished the proof of {\bf Theorem \ref{theoremtau}}.

\subsection{Hirota bilinear equations} For the results in Theorem \ref{theoremtau}, it will be more convenient to use the  Hirota bilinear operator $D$ \cite{hirota2004} to rewrite the bilinear equations \eqref{tau11}-\eqref{tau3}. Here the Hirota bilinear operator is defined by
$$f(D_{\bf x}) g({\bf x})\cdot h({\bf x})=f({\partial_{\bf x}})(g({\bf x}+{\bf y})h({\bf x}+{\bf y}))|_{{\bf y}=0},$$
where $f$ is the function of $D_{\bf x}=(D_{{\bf x}^{(1)}},\ldots,D_{{\bf x}^{(N)}})$ with $D_{{\bf x}^{(\gamma)}}=(D_{x_1^{(\gamma)}},D_{x_2^{(\gamma)}},\ldots)$.
 By direct computations, one can rewrite \eqref{tau11}-\eqref{tau3} into
\begin{itemize}
	\item Hirota bilinear form of \eqref{tau11}
	\begin{equation}\label{Hirotatau1}
		\begin{aligned}
			&\sum_{\gamma=1}^N \varepsilon_{\alpha\gamma}\varepsilon_{\beta\gamma}
			\exp\left(
			\sum_{l=1}^{+\infty} \sum_{\eta=1}^N y_l^{(\eta)} D_{x_l^{(\eta)}}
			\right)
			\sum_{q = 0}^{+\infty} p_q(-2y^{(\gamma)})
			\cdot
			p_{k+\delta_{\alpha\gamma}+\delta_{\beta\gamma}+q-1}\bigl(\widetilde{D}_{\mathbf{x}^{(\gamma)}}\bigr)
			\tau_{\gamma\beta}(\mathbf{x})\cdot \tau_{\alpha\gamma}(\mathbf{x})\\
			&=\sum_{\gamma=1}^N \sum_{i=1}^m
			\exp\left(
			\sum_{l=1}^{+\infty} \sum_{\eta=1}^N y_l^{(\eta)} D_{x_l^{(\eta)}}
			\right)
			\sigma_{i,\beta\gamma}(\mathbf{x})\cdot \rho_{i,\alpha\gamma}(\mathbf{x}),
		\end{aligned}
	\end{equation}
	\item Hirota bilinear form of \eqref{tau2}	
	\begin{equation}\label{Hirotatau2}
		\begin{aligned}
			 &\sum_{\gamma=1}^N \varepsilon_{\alpha\gamma}\exp\left(
			\sum_{l=1}^{+\infty} \sum_{\eta=1}^N y_l^{(\eta)} D_{x_l^{(\eta)}}
			\right)
			\sum_{q = 0}^{+\infty} p_q(-2y^{(\gamma)})
			\cdot
			p_{\delta_{\alpha\gamma}+q-1}\bigl(\widetilde{D}_{\mathbf{x}^{(\gamma)}}\bigr)
			\rho_{i, \gamma\beta}(\mathbf{x})\cdot \tau_{\alpha\gamma}(\mathbf{x})\\
			&=\exp\left(
			\sum_{l=1}^{+\infty} \sum_{\eta=1}^N y_l^{(\eta)} D_{x_l^{(\eta)}}
			\right)\tau(\mathbf{x})
			\cdot \rho_{i,\alpha\beta}(\mathbf{x}),
		\end{aligned}
	\end{equation}
	\item Hirota bilinear form of \eqref{tau3}
	\begin{equation}\label{Hirotatau3}
		\begin{aligned}
			&\sum_{\gamma=1}^N \varepsilon_{\beta\gamma}\exp\left(
			\sum_{l=1}^{+\infty} \sum_{\eta=1}^N y_l^{(\eta)} D_{x_l^{(\eta)}}
			\right)
			\sum_{q = 0}^{+\infty} p_q(-2y^{(\gamma)})
			\cdot
			p_{\delta_{\beta\gamma}+q-1}\bigl(\widetilde{D}_{\mathbf{x}^{(\gamma)}}\bigr)
			\tau_{\gamma\beta}(\mathbf{x})\cdot \sigma_{i,\gamma\alpha}(\mathbf{x})\\
			&=\exp\left(
			\sum_{l=1}^{+\infty} \sum_{\eta=1}^N y_l^{(\eta)} D_{x_l^{(\eta)}}
			\right) \sigma_{i,\beta\alpha}(\mathbf{x})\cdot\tau(\mathbf{x}).
		\end{aligned}
	\end{equation}
\end{itemize}
Here $p_{n}(\mathbf{x}^{(\gamma)})$ is the Schur polynomial defined by $e^{\xi(\mathbf{x}^{(\gamma)},\lambda)}=\sum_{n=0}^{\infty}p_{n}(\mathbf{x}^{(\gamma)})\lambda^{n}$ and $p_{n}(\mathbf{x}^{(\gamma)})=0$ if $n<0$.

If we consider the $(1,1)$-constrained $2$-component KP hierarchy, that is we have set $k=m=1$, $N=2$ in \eqref{Hirotatau1}-\eqref{Hirotatau3}, then we can obtain the following examples of the Hirota bilinear equations. \\
$\bullet$ {\bf Example } of \eqref{Hirotatau1}:
\begin{align*}
	&\frac{1}{2}\left(D_{x_1^{(1)}}^2+ D_{x_2^{(1)}}\right)\tau\cdot\tau
+\tau_{21}\cdot\tau_{12}
=\sigma_{11}\cdot\rho_{11}+\sigma_{12}\cdot\rho_{12},\\
	&-D_{x_1^{(1)}}\tau_{12}\cdot\tau
+D_{x_1^{(2)}}\tau\cdot\tau_{12}
=\sigma_{21}\cdot\rho_{11}+\sigma_{22}\cdot\rho_{12},\\
	&-D_{x_1^{(1)}}\tau\cdot\tau_{21}
+D_{x_1^{(2)}}\tau_{21}\cdot\tau
=\sigma_{11}\cdot\rho_{21}+\sigma_{12}\cdot\rho_{22},\\
	&\frac{1}{2}\left(D_{x_1^{(2)}}^2+D_{x_2^{(2)}}\right)\tau\cdot\tau
+\tau_{12}\cdot\tau_{21}
=\sigma_{21}\cdot\rho_{21}+\sigma_{22}\cdot\rho_{22},
\end{align*}                                   \begin{align*}
	&\frac{1}{6}\left(D_{x_1^{(1)}}^3-3D_{x_1^{(1)}}D_{x_2^{(1)}}-4D_{x_3^{(1)}}\right)\tau\cdot\tau
+D_{x_1^{(1)}}\tau_{21}\cdot\tau_{12}
=D_{x_1^{(1)}}\bigl(\sigma_{11}\cdot\rho_{11}+\sigma_{12}\cdot\rho_{12}\bigr),\\
	&D_{x_1^{(1)}}D_{x_1^{(2)}}\tau\cdot\tau_{12}+D_{x_2^{(1)}}\tau_{12}\cdot\tau
=D_{x_1^{(1)}}\bigl(\sigma_{21}\cdot\rho_{11}+\sigma_{22}\cdot\rho_{12}\bigr),\\
	&D_{x_1^{(1)}}D_{x_1^{(2)}}\tau_{21}\cdot\tau+D_{x_2^{(1)}}\tau\cdot\tau_{21}
=D_{x_1^{(1)}}\bigl(\sigma_{11}\cdot\rho_{21}+\sigma_{12}\cdot\rho_{22}\bigr),\\
	&\frac{1}{2}\left(D_{x_1^{(2)}}^2+D_{x_2^{(2)}}\right)\tau\cdot\tau
-D_{x_1^{(1)}}\tau_{12}\cdot\tau_{21}
=D_{x_1^{(1)}}\bigl(\sigma_{21}\cdot\rho_{21}+\sigma_{22}\cdot\rho_{22}\bigr),
\end{align*}

$\bullet$ {\bf Example } of \eqref{Hirotatau2}:

\begin{align*}
	&D_{x_{1}^{(1)}}\tau\cdot\rho_{21}-\tau_{21}\cdot\rho_{11}=0,\\
	&D_{x_{1}^{(1)}}\tau\cdot\rho_{22}-\tau_{21}\cdot\rho_{12}=0,
\end{align*}
\begin{align*}
	&D_{x_{1}^{(1)}}\rho_{21}\cdot\tau_{12} + D_{x_{1}^{(2)}}D_{x_{1}^{(1)}}\rho_{11}\cdot\tau=0,\\
	&D_{x_{1}^{(1)}}\rho_{22}\cdot\tau_{12} + D_{x_{1}^{(2)}}D_{x_{1}^{(1)}}\rho_{12}\cdot\tau=0,\\
	&D_{x_{1}^{(2)}}\rho_{11}\cdot\tau_{21} - D_{x_{1}^{(2)}}D_{x_{1}^{(1)}}\rho_{21}\cdot\tau=0,\\
	&D_{x_{1}^{(2)}}\rho_{12}\cdot\tau_{21} - D_{x_{1}^{(2)}}D_{x_{1}^{(1)}}\rho_{22}\cdot\tau=0.
\end{align*}

$\bullet$ {\bf Example } of \eqref{Hirotatau3}:
\begin{align*}
	&D_{x_{1}^{(1)}}\tau\cdot\sigma_{21} + \tau_{12}\cdot\sigma_{11}=0,\\
	&D_{x_{1}^{(1)}}\tau\cdot\sigma_{22} + \tau_{12}\cdot\sigma_{12}=0.
\end{align*}
\begin{align*}
	&D_{x_{1}^{(1)}}\tau_{21}\cdot \sigma_{21}+ D_{x_{1}^{(2)}}D_{x_{1}^{(1)}}\tau\cdot\sigma_{11}=0,\\
	&D_{x_{1}^{(1)}}\tau_{21}\cdot\sigma_{22} + D_{x_{1}^{(2)}}D_{x_{1}^{(1)}}\tau\cdot\sigma_{12}=0,\\
	&D_{x_{1}^{(2)}}\tau_{12}\cdot\sigma_{11} - D_{x_{1}^{(2)}}D_{x_{1}^{(1)}}\tau\cdot\sigma_{21}=0,\\
	&D_{x_{1}^{(2)}}\tau_{12}\cdot\sigma_{12} - D_{x_{1}^{(2)}}D_{x_{1}^{(1)}}\tau\cdot\sigma_{22}=0.
\end{align*}

\section{Solutions of the constrained matrix KP hierarchy}
In this section, we will construct the solutions of the constrained matrix KP hierarchy by using the multi-component boson-fermion correspondence. Firstly, we will list some important properties of the operators $S_{n}$ \cite{kac2023,kac1993}.
\begin{align*}
S_n = \sum_{\alpha=1}^N S_n^{(\alpha)},\quad S_n^{(\alpha)} = \operatorname{Res}_z z^{n-1} \psi^{(\alpha)}(z) \otimes \psi^{*(\alpha)}(z)
= \sum_j \psi_j^{(\alpha)} \otimes \psi_{j+n}^{*(\alpha)}, \quad n \in \mathds{Z},
\end{align*}
where $\psi^{(\alpha)}(z)=\sum_{j\in\mathbb{Z}}\psi^{(\alpha)}_{j}z^{j}$ and $\psi^{*(\alpha)}(z)=\sum_{j\in\mathbb{Z}}\psi^{*(\alpha)}_{j}z^{-j}$.
Then the tau functions of the constrained $N$-component KP hierarchy expressed by the vacuum expectation value of multi-component fermions. Finally the solutions of the constrained matrix KP hierarchy are obtained, which is just Theorem \ref{theorem:solution}.

\subsection{Properties of operators $S_n$}
 The operators $S_{n}$ are very important in seeking the solutions of bilinear equations. Here the corresponding properties of $S_{n}$ are given in the below lemmas, which can be obtained by direct computations.

\begin{lemma}\label{lemmafemion}
Given $\beta^{(j)}\in V^{(j)}=\bigoplus_{i\in \mathds{Z}}\mathbb{C}\psi_{i}^{(j)}$ and $\beta^{*(j)}\in V^{*(j)}=\bigoplus_{i\in \mathbb{Z}}\mathbb{C}\psi_{i}^{*(j)}$, $S_n^{(\alpha)}$ satisfies the following identities:
\begin{align*}
&S_{n}^{(\alpha)}(1 \otimes \beta^{(j)})= \delta_{\alpha j}\beta_{[n]}^{(j)}\otimes 1 - (1\otimes \beta^{(j)}) S_{n}^{(\alpha)},\\
&S_{n}^{(\alpha)}(\beta^{(j)} \otimes 1)= -(\beta^{(j)} \otimes 1)S_{n}^{(\alpha)},\\
&S_{n}^{(\alpha)} \bigl( \beta^{(i)} \otimes \beta^{(j)} \bigr) = \delta_{\alpha j}\beta_{[n]}^{(j)} \beta^{(i)} \otimes 1 + \bigl( \beta^{(i)} \otimes \beta^{(j)} \bigr) S_{n}^{(\alpha)},
\end{align*}
and
\begin{align*}
&S_n^{(\alpha)}(\beta^{*(j)} \otimes 1)= \delta_{\alpha j}1 \otimes \beta^{*(j)}_{[-n]} - (\beta^{*(j)}\otimes 1) S_n^{(\alpha)},\\
&S_n^{(\alpha)}(1 \otimes \beta^{*(j)})= -(1 \otimes \beta^{*(j)})S_n^{(\alpha)},\\
&S_n^{(\alpha)}(\beta^{*(i)} \otimes \beta^{*(j)})= \delta_{\alpha j}1 \otimes \beta^{*(i)}_{[-n]} \beta^{*(j)}  + (\beta^{*(i)} \otimes \beta^{*(j)}) S_n^{(\alpha)},
\end{align*}
where $\beta^j_{[n]}$ and $\beta^{*j}_{[-n]}$ are given in \eqref{betadefination}.
\end{lemma}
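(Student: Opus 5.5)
The plan is to verify all six identities by direct computation from the definition $S_n^{(\alpha)}=\sum_{l\in\mathbb{Z}}\psi_l^{(\alpha)}\otimes\psi_{l+n}^{*(\alpha)}$. The tensor product is taken with the ordinary componentwise multiplication $(a\otimes b)(c\otimes d)=ac\otimes bd$. The only inputs are the anticommutation relations \eqref{fermioncondition} and the shift notation \eqref{betadefination}. Throughout I write $\beta^{(j)}=\sum_{s}a_s^{(j)}\psi_s^{(j)}$ and $\beta^{*(j)}=\sum_s b_s^{(j)}\psi_s^{*(j)}$.

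\textbf{The three identities for $\beta^{(j)}$.}

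For the first one, I compute $S_n^{(\alpha)}(1\otimes\beta^{(j)})=\sum_l\psi_l^{(\alpha)}\otimes\psi_{l+n}^{*(\alpha)}\beta^{(j)}$. By \eqref{fermioncondition},
$$\psi_{l+n}^{*(\alpha)}\beta^{(j)}=\delta_{\alpha j}a^{(j)}_{l+n}-\beta^{(j)}\psi_{l+n}^{*(\alpha)}.$$
The first term contributes $\delta_{\alpha j}\sum_l a_{l+n}^{(j)}\psi_l^{(j)}\otimes 1=\delta_{\alpha j}\beta^{(j)}_{[n]}\otimes1$, which is exactly the shift in \eqref{betadefination}. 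The second term gives $-(1\otimes\beta^{(j)})S_n^{(\alpha)}$.

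For the second one, $\psi_l^{(\alpha)}$ anticommutes with $\beta^{(j)}$ since the relation between two $\psi$'s has no constant term. Hence $S_n^{(\alpha)}(\beta^{(j)}\otimes1)=-(\beta^{(j)}\otimes1)S_n^{(\alpha)}$.

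For the third one, I factor $\beta^{(i)}\otimes\beta^{(j)}=(\beta^{(i)}\otimes1)(1\otimes\beta^{(j)})$ and apply the second identity, then the first. This gives
$$-\delta_{\alpha j}\,\beta^{(i)}\beta^{(j)}_{[n]}\otimes1+(\beta^{(i)}\otimes\beta^{(j)})S_n^{(\alpha)}.$$
Since $\beta^{(j)}_{[n]}$ is again a linear combination of $\psi$'s, it anticommutes with $\beta^{(i)}$. This turns the first term into $\delta_{\alpha j}\beta^{(j)}_{[n]}\beta^{(i)}\otimes1$, as stated.

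\textbf{The three identities for $\beta^{*(j)}$.}

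These are handled symmetrically, now using $\psi_l^{(\alpha)}\beta^{*(j)}=\delta_{\alpha j}b^{(j)}_l-\beta^{*(j)}\psi_l^{(\alpha)}$. The constant term yields $\delta_{\alpha j}\,1\otimes\sum_l b_l^{(j)}\psi^{*(j)}_{l+n}$. After reindexing $l\to l-n$ this becomes $\delta_{\alpha j}\,1\otimes\beta^{*(j)}_{[-n]}$. The identity for $1\otimes\beta^{*(j)}$ again follows purely from the anticommutation of $\psi^*$'s among themselves. The product identity follows from the factorization $\beta^{*(i)}\otimes\beta^{*(j)}=(\beta^{*(i)}\otimes1)(1\otimes\beta^{*(j)})$, with signs tracked exactly as in the third identity above.

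\textbf{Expected difficulty.} No step is hard; the main care point is bookkeeping. I need to fix the sign convention of the tensor product consistently, and to get the shift direction right in $\beta_{[n]}$ versus $\beta^*_{[-n]}$ when reindexing. I will also note that the index in the stated product identity for $\beta^*$ should be double-checked, since the computation naturally produces the $\delta$-term from $\beta^{*(i)}$.
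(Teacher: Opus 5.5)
Your proof is correct, and it is exactly the direct computation from $S_n^{(\alpha)}=\sum_l\psi_l^{(\alpha)}\otimes\psi_{l+n}^{*(\alpha)}$, the anticommutation relations \eqref{fermioncondition} and $(a\otimes b)(c\otimes d)=ac\otimes bd$ that the paper only alludes to ("direct computations"). Your suspicion about the last identity is also right: the constant term there comes from $\psi_l^{(\alpha)}$ meeting $\beta^{*(i)}$ in the left slot, so the coefficient should be $\delta_{\alpha i}$, not $\delta_{\alpha j}$ (for $\alpha=j\neq i$ the extra term actually vanishes), though this typo is harmless once one sums over $\alpha$ in Lemma~\ref{totalSn}.
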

\begin{lemma}\label{totalSn}
Given $\beta\in V=\bigoplus_{\alpha=1}^{N}V^{(\alpha)}, \, \beta^{*}\in V^{*}=\bigoplus_{\alpha=1}^{N}V^{*(\alpha)}$, we have
\begin{align*}
&S_n(1 \otimes \beta)= \beta_{[n]}\otimes 1 - (1\otimes \beta) S_n,\\
&S_n(\beta \otimes 1)= -(\beta \otimes 1)S_n,\\
&S_n(\beta \otimes \beta)= \beta_{[n]} \beta \otimes 1 + (\beta \otimes \beta) S_n,
\end{align*}
and
\begin{align*}
&S_n(\beta^{*} \otimes 1)= 1 \otimes \beta^{*}_{[-n]} - (\beta^{*}\otimes 1) S_n,\\
&S_n(1 \otimes \beta^{*})= -(1 \otimes \beta^{*})S_n,\\
&S_n(\beta^{*} \otimes \beta^{*})= 1 \otimes \beta^{*}_{[-n]} \beta^{*}  + (\beta^{*} \otimes \beta^{*}) S_n.
\end{align*}
\end{lemma}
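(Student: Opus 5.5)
This lemma follows directly from Lemma \ref{lemmafemion} by summing over $\alpha=1,\dots,N$ and using linearity in $\beta$ and $\beta^*$. So the plan is to prove Lemma \ref{lemmafemion} first and then sum.

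\textbf{Proof of Lemma \ref{lemmafemion}.} Work in $\mathcal{A}\otimes\mathcal{A}$ with the $\mathbb{Z}_2$-graded tensor product convention. Then $(a\otimes b)(c\otimes d)=(-1)^{|b||c|}ac\otimes bd$ for homogeneous elements, and each fermion $\psi_j^{(\alpha)},\psi_j^{*(\alpha)}$ is odd. Each summand $\psi_j^{(\alpha)}\otimes\psi_{j+n}^{*(\alpha)}$ of $S_n^{(\alpha)}$ is therefore even. Take $\beta^{(j)}=\sum_l a_l\psi_l^{(j)}$.

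\begin{itemize}
\item For $1\otimes\beta^{(j)}$: the relation $\psi^{*(\alpha)}_{i+n}\psi^{(j)}_l=-\psi^{(j)}_l\psi^{*(\alpha)}_{i+n}+\delta_{\alpha j}\delta_{l,i+n}$ gives
\[
S_n^{(\alpha)}(1\otimes\beta^{(j)})=\delta_{\alpha j}\sum_i a_{i+n}\psi_i^{(j)}\otimes 1-(1\otimes\beta^{(j)})S_n^{(\alpha)}.
\]
The first term is exactly $\beta^{(j)}_{[n]}\otimes1$ by \eqref{betadefination}.
\item For $\beta^{(j)}\otimes1$: the fermions $\psi$ anticommute, and the graded sign produces the overall minus sign.
\item For $\beta^{(i)}\otimes\beta^{(j)}=(\beta^{(i)}\otimes1)(1\otimes\beta^{(j)})$: apply the two previous rules in succession. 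The two minus signs cancel, and the term $\delta_{\alpha j}\beta^{(j)}_{[n]}\beta^{(i)}\otimes 1$ appears, up to a graded sign that must be tracked.
\item The starred identities are symmetric. One uses $\psi_i^{(\alpha)}\psi_l^{*(j)}=-\psi_l^{*(j)}\psi_i^{(\alpha)}+\delta_{\alpha j}\delta_{il}$. The reindexing $b_{i}$ multiplying $\psi^{*}_{i+n}$ gives exactly $\beta^{*}_{[-n]}$.
\end{itemize}

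\textbf{Passing to Lemma \ref{totalSn}.} Write $\beta=\sum_j\beta^{(j)}$ and sum the identities over $\alpha$. The Kronecker deltas then collapse, $\sum_\alpha\delta_{\alpha j}\beta^{(j)}_{[n]}=\beta^{(j)}_{[n]}$, and summing over $j$ gives $\beta_{[n]}$. The first two identities for $\beta$ follow immediately.

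\textbf{The main obstacle: the third identity.} For the third identity, expanding $\beta\otimes\beta=\sum_{i,j}\beta^{(i)}\otimes\beta^{(j)}$ produces
\[
\sum_{i,j}\beta^{(j)}_{[n]}\beta^{(i)}\otimes1=\beta_{[n]}\beta\otimes1 .
\]
This is the delicate part, so the factor ordering must be fixed carefully by the graded sign convention. The safest route is to avoid the componentwise third identity entirely. Instead, derive it directly from the first two total identities:
\[
S_n(\beta\otimes 1)(1\otimes\beta)=-(\beta\otimes1)\bigl(\beta_{[n]}\otimes1-(1\otimes\beta)S_n\bigr).
\]
This yields $-\beta\beta_{[n]}\otimes1+(\beta\otimes\beta)S_n$, and it remains to match $-\beta\beta_{[n]}$ with the stated $\beta_{[n]}\beta$.

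By the $\psi\psi$ anticommutation, $-\beta\beta_{[n]}=\beta_{[n]}\beta$ holds exactly. The starred versions are handled the same way via $\beta^*\beta^*_{[-n]}=-\beta^*_{[-n]}\beta^*$.
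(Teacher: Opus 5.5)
\textbf{The graded convention is the gap.} Your overall route is essentially what the paper means by ``direct computations'': commute each summand $\psi_i^{(\alpha)}\otimes\psi_{i+n}^{*(\alpha)}$ past the factor using the anticommutation relations, sum over $\alpha$ and the components, and obtain the product identity from the two one-sided ones. However, the convention you fix at the start is wrong, and taken literally it reverses the signs. In the $\mathbb{Z}_2$-graded tensor product, $(1\otimes\beta)(\psi_i\otimes\psi^*_{i+n})=-\psi_i\otimes\beta\psi^*_{i+n}$. So your first computation would actually give $S_n(1\otimes\beta)=\beta_{[n]}\otimes 1+(1\otimes\beta)S_n$. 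Likewise $(\psi_i\otimes\psi^*_{i+n})(\beta\otimes 1)=-\psi_i\beta\otimes\psi^*_{i+n}=\beta\psi_i\otimes\psi^*_{i+n}$: the Koszul sign and the anticommutation sign cancel rather than ``produce the overall minus sign''. Hence $S_n(\beta\otimes1)=+(\beta\otimes1)S_n$. Consequently the third identity would read $S_n(\beta\otimes\beta)=-\beta_{[n]}\beta\otimes1+(\beta\otimes\beta)S_n$, and the starred identities flip in the same way. This is structural: in the graded setting $S_n$ is even, so it supercommutes with everything up to the contraction term, and no relative minus sign can appear.

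\textbf{The fix.} The lemma holds for the ordinary tensor product $(a\otimes b)(c\otimes d)=ac\otimes bd$. The paper uses this rule explicitly in the proof of Proposition~\ref{propositionS}. It is also what makes the pairing with $\langle\cdot|\otimes\langle\cdot|$ in \eqref{pbilinear} a plain product of expectation values. Your first bullet in fact already computes with this rule, so the repair is simply to drop the graded convention. Then the minus signs in the first two identities come only from $\psi^{*(\alpha)}_{i+n}\psi^{(j)}_l=-\psi^{(j)}_l\psi^{*(\alpha)}_{i+n}+\delta_{\alpha j}\delta_{l,i+n}$ and $\psi_i^{(\alpha)}\psi_l^{(j)}=-\psi_l^{(j)}\psi_i^{(\alpha)}$, respectively. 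With that, your derivation of $S_n(\beta\otimes\beta)$ from the first two identities together with $\beta\beta_{[n]}=-\beta_{[n]}\beta$ is correct, as is the analogous starred computation. Summing the componentwise third identity is not delicate either: $\sum_{\alpha,i,j}\delta_{\alpha j}\beta^{(j)}_{[n]}\beta^{(i)}=\beta_{[n]}\beta$, with the order preserved.
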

\begin{proposition}\label{propositionS}
Given $\beta_{i}\in V$ and $\beta_{i}^{*}\in V^{*}$, we have the following identities:
\begin{equation}
\begin{aligned}\label{fermionS}
S_n\bigl( \beta^{*[\vec{\mathcal{N}}]} \beta^{[\vec{\mathcal{M}}]} \otimes \beta^{*[\vec{\mathcal{N}}]} \beta^{[\vec{\mathcal{M}}]} \bigr)
&=\sum_{i=1}^{\mathcal{M}} \beta^{*[\vec{\mathcal{N}}]} \beta^{[\vec{\mathcal{M}}\setminus\vec{i}]} \beta_{i[n]} \beta^{[\vec{i}]} \otimes \beta^{*[\vec{\mathcal{N}}]} \beta^{[\vec{\mathcal{M}}\setminus\{i\}]} \\
&+\sum_{j=1}^{\mathcal{N}} \beta^{*[\vec{\mathcal{N}}\setminus\{j\}]}\beta^{[\vec{\mathcal{M}}]} \otimes \beta^{*[\vec{\mathcal{N}}\setminus\vec{j}]} \beta_{j[-n]}^{*} \beta^{*[\vec{j}]} \beta^{[\vec{\mathcal{M}}]} \\
&+\bigl( \beta^{*[\vec{\mathcal{N}}]} \beta^{[\vec{\mathcal{M}}]} \otimes \beta^{*[\vec{\mathcal{N}}]} \beta^{[\vec{\mathcal{M}}]} \bigr) S_n,
\end{aligned}
\end{equation}
where the symbols $\beta^{[\vec{\mathcal{M}}]}$ and $\beta^{*[\vec{\mathcal{N}}]}$ can be found in \eqref{signbb*}.
\end{proposition}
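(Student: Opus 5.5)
We prove \eqref{fermionS} by induction on the total number $\mathcal{M}+\mathcal{N}$ of factors. The only inputs are the six commutation relations of Lemma \ref{totalSn}, applied one factor at a time. We write $X\otimes X$ with $X=\beta^{*[\vec{\mathcal{N}}]}\beta^{[\vec{\mathcal{M}}]}$ as the ordered product
\[
X\otimes X=\prod_{\text{factors}}(\gamma\otimes\gamma)
\]
over the $\mathcal{M}+\mathcal{N}$ factors $\gamma\in\{\beta^*_{\mathcal{N}},\dots,\beta^*_1,\beta_{\mathcal{M}},\dots,\beta_1\}$, in that order. This uses $(a\otimes a)(b\otimes b)=ab\otimes ab$, where the tensor product is the ordinary one on $\mathcal{A}\otimes\mathcal{A}$ with no super-sign. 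We adopt this convention consistently with Lemma \ref{totalSn}: the factor $(\beta\otimes\beta)$ is even in the total grading, as the third relation there shows.

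The key observation is that $S_n$ moved past a single pair $\gamma\otimes\gamma$ gives
\[
S_n(\gamma\otimes\gamma)=(\gamma\otimes\gamma)S_n+\text{(a remainder term)},
\]
where the remainder is $\beta_{[n]}\beta\otimes 1$ for $\gamma=\beta$ and $1\otimes\beta^*_{[-n]}\beta^*$ for $\gamma=\beta^*$. Thus $S_n$ commutes with each pair up to a single inhomogeneous term. Pushing $S_n$ from the far left to the far right therefore produces the final term $(X\otimes X)S_n$, together with one remainder term for each factor.

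We handle the remainder coming from the factor $\beta_i$ first. To its left sit the pairs $\beta^*_{\mathcal{N}},\dots,\beta^*_1,\beta_{\mathcal{M}},\dots,\beta_{i+1}$, all still doubled, and to its right the untouched pairs $\beta_{i-1},\dots,\beta_1$. The remainder itself is $\beta_{i[n]}\beta_i\otimes 1$. Multiplying out gives
\[
\beta^{*[\vec{\mathcal{N}}]}\beta^{[\vec{\mathcal{M}}\setminus\vec{i}]}\beta_{i[n]}\beta_i\beta^{[\overrightarrow{i-1}]}\otimes\beta^{*[\vec{\mathcal{N}}]}\beta^{[\vec{\mathcal{M}}\setminus\{i\}]}.
\]
Since $\beta_i\beta^{[\overrightarrow{i-1}]}=\beta^{[\vec i]}$, this is exactly the $i$-th summand of the first sum in \eqref{fermionS}.

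The remainder from $\beta^*_j$ is treated the same way. The remainder $1\otimes\beta^*_{j[-n]}\beta^*_j$ sits between the doubled pairs $\beta^*_{\mathcal{N}},\dots,\beta^*_{j+1}$ on the left and $\beta^*_{j-1},\dots,\beta^*_1,\beta_{\mathcal{M}},\dots,\beta_1$ on the right. It yields
\[
\beta^{*[\vec{\mathcal{N}}\setminus\{j\}]}\beta^{[\vec{\mathcal{M}}]}\otimes\beta^{*[\vec{\mathcal{N}}\setminus\vec j]}\beta^*_{j[-n]}\beta^{*[\vec j]}\beta^{[\vec{\mathcal{M}}]},
\]
which is the $j$-th summand of the second sum. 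Formally, the induction peels off the leftmost factor and applies the inductive hypothesis to the rest, which gives the same accounting.

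The main obstacle is sign bookkeeping. We must check that no extra signs appear when $S_n$ passes a doubled pair, and that the remainder terms need no reordering. The first point is exactly the content of the third and sixth identities of Lemma \ref{totalSn}, whose signs are $+$. For the second point, each remainder is inserted precisely at the position of the factor that produced it, so nothing has to be reordered. If the paper intends a super tensor product instead, we would recheck using the single-factor identities, the first two and the fourth and fifth of Lemma \ref{totalSn}. In that case the alternating signs cancel in pairs, since $1\otimes\beta$ and $\beta\otimes1$ each contribute a $-1$.
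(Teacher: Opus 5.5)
Your proof is correct and is essentially the paper's argument: the paper also inducts on $\mathcal{M}+\mathcal{N}$, writing $X\otimes X$ as a product of doubled pairs via $(a\otimes b)(c\otimes d)=ac\otimes bd$ and applying the pair relations $S_n(\beta\otimes\beta)=\beta_{[n]}\beta\otimes 1+(\beta\otimes\beta)S_n$ and its $\beta^*$ analogue from Lemma \ref{totalSn}. The only differences are cosmetic: the paper peels off the rightmost $\beta_1$ (resp.\ the leftmost $\beta^*_{\mathcal{N}+1}$) instead of telescoping, and your closing hedge about a super tensor product is unnecessary, since the sign in $S_n(\beta\otimes 1)=-(\beta\otimes 1)S_n$ in Lemma \ref{totalSn} is consistent only with the ordinary unsigned product that the paper uses explicitly.
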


\begin{proof}
First, \eqref{fermionS} is obviously correct for $\mathcal{M}+\mathcal{N}=1$ by Lemma
\ref{totalSn}. Next let us assume \eqref{fermionS} is correct for $\mathcal{M}+\mathcal{N}$,
then we will prove that the case of $\mathcal{M}+\mathcal{N}+1$ also satisfies \eqref{fermionS}.
From case $\mathcal{M}+\mathcal{N}$ to case $\mathcal{M}+\mathcal{N}+1$, there are two ways. One is given by
$\beta^{\ast [\overrightarrow{\mathcal{N}}]}\beta^{[\overrightarrow{\mathcal{M}+1}\setminus\{1\}]}\beta_{1}$ and
another is $\beta^{*}_{\mathcal{N}+1}\beta^{*[\overrightarrow{\mathcal{N}}]} \beta^{[\overrightarrow{\mathcal{M}}]}$. Here
we only consider the first way, since the second is almost the same. In fact, by the assumption that \eqref{fermionS} holds for $\mathcal{M}+\mathcal{N}$, we have

\begin{align*}
&S_n\bigl( \beta^{*[\overrightarrow{\mathcal{N}}]} \beta^{[\overrightarrow{\mathcal{M}+1}]} \otimes \beta^{*[\overrightarrow{\mathcal{N}}]} \beta^{[\overrightarrow{\mathcal{M}+1}]} \bigr)
= S_n\Bigl( \beta^{\ast [\overrightarrow{\mathcal{N}}]}\beta^{[\overrightarrow{\mathcal{M}+1}\setminus\{1\}]} \otimes \beta^{\ast [\overrightarrow{\mathcal{N}}]}\beta^{[\overrightarrow{\mathcal{M}+1}\setminus\{1\}]} \Bigr) (\beta_1 \otimes \beta_1) \\
&=\sum_{i=2}^{\mathcal{M}+1} \beta^{*[\overrightarrow{\mathcal{N}}]} \beta^{[\overrightarrow{\mathcal{M}+1}\setminus\overrightarrow{i}]} \beta_{i[n]} \beta^{[\overrightarrow{i}]} \otimes \beta^{*[\overrightarrow{\mathcal{N}}]} \beta^{[\overrightarrow{\mathcal{M}+1}\setminus\{i\}]}+\sum_{j=1}^{\mathcal{N}} \beta^{*[\overrightarrow{\mathcal{N}}\setminus\{j\}]}\beta^{[\overrightarrow{\mathcal{M}+1}]} \otimes \beta^{*[\overrightarrow{\mathcal{N}}\setminus\overrightarrow{j}]} \beta_{j[-n]}^{*} \beta^{*[\overrightarrow{j}]} \beta^{[\overrightarrow{\mathcal{M}+1}]} \\
&+\bigl( \beta^{*[\overrightarrow{\mathcal{N}}]} \beta^{[\overrightarrow{\mathcal{M}+1}\setminus\{1\}]} \otimes \beta^{*[\overrightarrow{\mathcal{N}}]} \beta^{[\overrightarrow{\mathcal{M}+1}\setminus\{1\}]} \bigr) S_n (\beta_1 \otimes \beta_1)\\
&=\sum_{i=1}^{\mathcal{M}+1} \beta^{*[\overrightarrow{\mathcal{N}}]} \beta^{[\overrightarrow{\mathcal{M}+1}\setminus\overrightarrow{i}]} \beta_{i[n]} \beta^{[\overrightarrow{i}]} \otimes \beta^{*[\overrightarrow{\mathcal{N}}]} \beta^{[\overrightarrow{\mathcal{M}+1}\setminus\{i\}]} +\sum_{j=1}^{\mathcal{N}} \beta^{*[\overrightarrow{\mathcal{N}}\setminus\{j\}]}\beta^{[\overrightarrow{\mathcal{M}+1}]} \otimes \beta^{*[\overrightarrow{\mathcal{N}}\setminus\overrightarrow{j}]} \beta_{j[-n]}^{*} \beta^{*[\overrightarrow{j}]} \beta^{[\overrightarrow{\mathcal{M}+1}]} \\
&+\bigl( \beta^{*[\overrightarrow{\mathcal{N}}]} \beta^{[\overrightarrow{\mathcal{M}+1}]} \otimes \beta^{*[\overrightarrow{\mathcal{N}}]} \beta^{[\overrightarrow{\mathcal{M}+1}]} \bigr) S_n.
\end{align*}
Here in the first equality we use $(a\otimes b)(c\otimes d)=(ac\otimes bd)$, and in the
second equality we use \eqref{fermionS} for $\mathcal{M}+\mathcal{N}$. Thus
the conclusion \eqref{fermionS} still holds for $\mathcal{M}+\mathcal{N}+1$.
\end{proof}
\begin{corollary}\label{corollary:Sk}
Given $\tau^{[\overrightarrow{\mathcal{M}},\overrightarrow{\mathcal{N}}]}$ in \eqref{ABCD}, we have
\begin{align}\label{Sk}
S_k\big(\tau^{[\overrightarrow{\mathcal{M}},\overrightarrow{\mathcal{N}}]}|0\rangle\otimes \tau^{[\overrightarrow{\mathcal{M}},\overrightarrow{\mathcal{N}}]}|0\rangle\big)
=\sum_{i=1}^P \sum_{\gamma=1}^N A_{i\gamma}^{[\overrightarrow{\mathcal{M}},\overrightarrow{\mathcal{N}}]}|0\rangle  \otimes B_{i\gamma}^{[\overrightarrow{\mathcal{M}},\overrightarrow{\mathcal{N}}]}|0\rangle+\sum_{i=1}^G \sum_{\gamma=1}^N C_{i\gamma}^{[\overrightarrow{\mathcal{M}},\overrightarrow{\mathcal{N}}]}|0\rangle\otimes
D_{i\gamma}^{[\overrightarrow{\mathcal{M}},\overrightarrow{\mathcal{N}}]}|0\rangle,
\end{align}
where $A_{i\gamma}^{[\overrightarrow{\mathcal{M}},\overrightarrow{\mathcal{N}}]}, B_{i\gamma}^{[\overrightarrow{\mathcal{M}},\overrightarrow{\mathcal{N}}]}, C_{i\gamma}^{[\overrightarrow{\mathcal{M}},\overrightarrow{\mathcal{N}}]}, D_{i\gamma}^{[\overrightarrow{\mathcal{M}},\overrightarrow{\mathcal{N}}]}$ are given in \eqref{ABCD}.
\end{corollary}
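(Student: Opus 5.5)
The plan is to apply Proposition \ref{propositionS} with $n=k$ to the vacuum tensor and sort the resulting terms. First, $S_k(|0\rangle\otimes|0\rangle)=0$ for $k\geq 1$. Indeed, $S_k^{(\alpha)}=\sum_j \psi_j^{(\alpha)}\otimes\psi_{j+k}^{*(\alpha)}$, and on the vacuum $\psi_j^{(\alpha)}|0\rangle=0$ for $j<0$, while $\psi^{*(\alpha)}_{j+k}|0\rangle=0$ for $j+k\geq 0$. Every $j$ falls in one of these ranges, so each summand vanishes. Hence the last term in \eqref{fermionS} drops out, and $S_k(\tau|0\rangle\otimes\tau|0\rangle)$ reduces to the two finite sums over $i=1,\dots,\mathcal{M}$ and $j=1,\dots,\mathcal{N}$.

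Next I treat the $\beta$-sum. By the ordering $(\beta_1,\dots,\beta_{\mathcal{M}})=(\vec\beta_1,\dots,\vec\beta_P,\vec\Phi_1,\dots,\vec\Phi_K)$, each index $i$ is of one of three types.
\begin{enumerate}[(i)]
\item $\beta_i=\alpha_{a\gamma,[lk]}$ with $0\leq l<M_{a\gamma}$. Then $\beta_{i,[k]}=\alpha_{a\gamma,[(l+1)k]}=\beta_{i+1}$, and this repeated factor appears in the product $\beta^{[\vec{\mathcal{M}}\setminus\vec i]}$. Since fermions in $V$ anticommute, the product contains $\beta_{i+1}\cdots\beta_{i+1}$ and the term vanishes.
\item $\beta_i=\alpha_{a\gamma,[M_{a\gamma}k]}$, the last entry of its block, so $i=p(a,\gamma)$. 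Then $\beta_{i,[k]}=\beta_{a\gamma,[(M_{a\gamma}+1)k]}$, and the term is exactly $A_{a\gamma}\otimes B_{a\gamma}$ as defined in \eqref{ABCD}.
\item $\beta_i=\Phi_{b\gamma}$. Then $\beta_{i,[k]}\beta_i=\Phi_{b\gamma,[k]}\Phi_{b\gamma}=0$ by hypothesis. But $\beta_{i[k]}$ is inserted directly to the left of $\beta^{[\vec i]}=\beta_i\cdots\beta_1$, so the left tensor factor contains $\beta_{i,[k]}\beta_i$ and the term vanishes.
\end{enumerate}

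The $\beta^*$-sum is handled the same way with shifts $[-k]$. For $\beta_j^*=\alpha^*_{c\gamma,[-lk]}$ with $l<\widetilde M_{c\gamma}$, the shifted element equals $\beta^*_{j+1}$, which repeats a factor and gives zero. For the last entries of the blocks, $j=\tilde p(c,\gamma)$, and we get $C_{c\gamma}\otimes D_{c\gamma}$. For $\Phi^*$ entries, the term is killed by $\Phi^*_{d,[-k]}\Phi^*_d=0$, since $\beta^*_{j[-k]}$ sits immediately left of $\beta_j^*$. Summing the surviving terms over $a\leq P$, $c\leq G$ and $\gamma\leq N$ gives \eqref{Sk}.

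The main obstacle is the bookkeeping. I must check that the shift rule \eqref{betadefination} really gives $(\alpha_{[lk]})_{[k]}=\alpha_{[(l+1)k]}$, which holds because shifts compose additively. I must also check that the sign conventions and the position $p(a,\gamma)$ match the ordering of the blocks, so that no extra signs appear relative to the definitions \eqref{ABCD}. Note that $\beta_{i[k]}$ stays in the same component $V^{(\gamma)}$, so the componentwise Lemma \ref{lemmafemion} is consistent with Lemma \ref{totalSn}.
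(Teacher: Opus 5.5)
Your proposal is correct and follows the paper's proof essentially step for step: you apply Proposition \ref{propositionS} with $n=k$, drop the $(\cdots)S_k$ term via $S_k(|0\rangle\otimes|0\rangle)=0$, kill the non-terminal entries of each $\alpha$-block because $\beta_{i[k]}=\beta_{i+1}$ produces $\beta_{i+1}^2=0$, and kill the $\Phi$-entries by $\Phi_{[k]}\Phi=0$, which leaves exactly the $A\otimes B$ and $C\otimes D$ terms (and likewise for the $\beta^*$-sum). One small correction to your closing remark: the $\gamma$ in $\alpha_{a\gamma}$ is only a label, and $\alpha_{a\gamma}$ need not lie in a single $V^{(\gamma)}$ (the paper's examples mix components); this is harmless, since Lemma \ref{totalSn} and Proposition \ref{propositionS} hold for arbitrary elements of $V$ and $V^*$, and the shift $[n]$ maps $V$ to $V$.
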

\begin{proof}
From Proposition \ref{propositionS} and $S_{k}(|0\rangle \otimes|0\rangle)=0$ derived by \eqref{vacuumdefinition}, we can obtain
\begin{equation}
\begin{aligned}\label{skmn}
S_k\big(\beta^{*[\vec{\mathcal{N}}]}\beta^{[\vec{\mathcal{M}}]}|0\rangle \otimes \beta^{*[\vec{\mathcal{N}}]}\beta^{[\vec{\mathcal{M}}]}|0\rangle\big)
&=\sum_{j=1}^{\mathcal{M}} \beta^{*[\vec{\mathcal{N}}]} \beta^{[\vec{\mathcal{M}}\setminus\vec{j}]} \beta_{j[k]} \beta^{[\vec{j}]} |0\rangle\otimes \beta^{*[\vec{\mathcal{N}}]} \beta^{[\vec{\mathcal{M}}\setminus\{j\}]}|0\rangle \\
&+\sum_{l=1}^{\mathcal{N}} \beta^{*[\vec{\mathcal{N}}\setminus\{l\}]}\beta^{[\vec{\mathcal{M}}]} |0\rangle\otimes \beta^{*[\vec{\mathcal{N}}\setminus\vec{l}]} \beta_{l[-k]}^{*} \beta^{*[\vec{l}]} \beta^{[\vec{\mathcal{M}}]}|0\rangle.
\end{aligned}
\end{equation}
When $1\leq j\leq \mathcal{M}-KN$ and $j\neq p(i,\gamma)$, $\beta^{[\vec{\mathcal{M}}\setminus\vec{j}]}$ will contain $\beta_{j[k]}$, the corresponding term equals zero by $\beta_{j[k]}^{2}=0$. For $1+\mathcal{M}-KN\leq j\leq \mathcal{M}$, the corresponding term is also zero due to $\Phi_{aj,[k]}\Phi_{aj}=0$. Therefore, the first term in the right hand side of \eqref{skmn} reduces to $\sum_{i=1}^P \sum_{\gamma=1}^N A_{i\gamma}^{[\overrightarrow{\mathcal{M}},\overrightarrow{\mathcal{N}}]}|0\rangle  \otimes B_{i\gamma}^{[\overrightarrow{\mathcal{M}},\overrightarrow{\mathcal{N}}]}|0\rangle.$
We can similarly deal with the second term in the right hand side of \eqref{skmn}.
\end{proof}
\subsection{Proof of Theorem \ref{theorem:solution}}
If we apply $ \langle \mathbf{p}+\mathbf{e}_\alpha | e^{H(\mathbf{x})} \otimes
\langle \mathbf{p}-\mathbf{e}_\beta | e^{H(\mathbf{x}')}$ to both sides in
\eqref{Sk}, then by
$$S_k={\rm Res}_zz^{k-1}\sum_{\alpha=1}^N\psi^{(\alpha)}(z) \otimes \psi^{*(\alpha)}(z)$$
we can know
\begin{equation}
\begin{aligned}\label{pbilinear}
\operatorname{Res}_z z^{-1} \sum_{\gamma=1}^{N} \langle \mathbf{p}+\mathbf{e}_{\alpha} | e^{H(\mathbf{x})} \psi^{(\gamma)}(z) \tau^{[\overrightarrow{\mathcal{M}},\overrightarrow{\mathcal{N}}]} |0\rangle \cdot\langle \mathbf{p}-\mathbf{e}_\beta | e^{H(\mathbf{x}')} \psi^{*(\gamma)}(z) \tau^{[\overrightarrow{\mathcal{M}},\overrightarrow{\mathcal{N}}]} |0 \rangle\\
=\sum_{i=1}^P \sum_{\gamma=1}^N \langle \mathbf{p}+\mathbf{e}_{\alpha} | e^{H(\mathbf{x})}A_{i\gamma}^{[\overrightarrow{\mathcal{M}},\overrightarrow{\mathcal{N}}]}|0\rangle \cdot \langle \mathbf{p}-\mathbf{e}_\beta | e^{H(\mathbf{x}')} B_{i\gamma}^{[\overrightarrow{\mathcal{M}},\overrightarrow{\mathcal{N}}]}|0\rangle\\
+\sum_{i=1}^G \sum_{\gamma=1}^N \langle \mathbf{p}+\mathbf{e}_{\alpha} | e^{H(\mathbf{x})} C_{i\gamma}^{[\overrightarrow{\mathcal{M}},\overrightarrow{\mathcal{N}}]}|0\rangle \cdot \langle \mathbf{p}-\mathbf{e}_\beta | e^{H(\mathbf{x}')}
D_{i\gamma}^{[\overrightarrow{\mathcal{M}},\overrightarrow{\mathcal{N}}]}|0\rangle.
\end{aligned}
\end{equation}
Further by the following relations \cite{kac2023,zabrodin2019}:
\begin{align*}
&\langle \mathbf{p} + \mathbf{e}_\alpha | e^{H(\mathbf{x})}\psi^{(\gamma)}(z) =\varepsilon_{\alpha\gamma} \varepsilon_\gamma(\mathbf{p})z^{p_\gamma + \delta_{\alpha\gamma} - 1} e^{\xi(x^{(\gamma)},z)}\langle \mathbf{p} + \mathbf{e}_\alpha - \mathbf{e}_\gamma | e^{H(\mathbf{x} - [z^{-1}]_\gamma)},\\
&\langle \mathbf{p} - \mathbf{e}_\beta | e^{H(\mathbf{x})}\psi^{*(\gamma)}(z) =\varepsilon_{\beta \gamma} \varepsilon_\gamma(\mathbf{p})z^{-p_\gamma + \delta_{\beta \gamma}} e^{-\xi(x^{(\gamma)},z)}\langle \mathbf{p} - \mathbf{e}_\beta +\mathbf{e}_\gamma | e^{H(\mathbf{x} + [z^{-1}]_\gamma)},
\end{align*}
where $\varepsilon_\gamma(\mathbf{p})=(-1)^{\sum_{l=\gamma+1}^{N}p_{l}}$,
we can find $\tau_{\alpha\beta}(\mathbf{x})$, $\rho_{i,\alpha
\gamma}(\mathbf{x})$ and $\sigma_{i,\beta \gamma}(\mathbf{x})$ in \eqref{solution1}-\eqref{solution3} satisfy the bilinear equation \eqref{tau11}.

When $1 \le i \le P$, we have $S_0(\tau^{[\overrightarrow{\mathcal{M}},\overrightarrow{\mathcal{N}}]} |0\rangle\otimes A_{i\gamma}^{[\overrightarrow{\mathcal{M}},\overrightarrow{\mathcal{N}}]}|0\rangle) = A_{i\gamma}^{[\overrightarrow{\mathcal{M}},\overrightarrow{\mathcal{N}}]}|0\rangle
\otimes \tau^{[\overrightarrow{\mathcal{M}},\overrightarrow{\mathcal{N}}]}|0\rangle$, that is
\[
\operatorname{Res}_z z^{-1} \sum_{\gamma=1}^N \psi^{(\gamma)}(z) \tau^{[\overrightarrow{\mathcal{M}},\overrightarrow{\mathcal{N}}]} |0\rangle \otimes \psi^{*(\gamma)}(z) A_{i\gamma}^{[\overrightarrow{\mathcal{M}},\overrightarrow{\mathcal{N}}]} |0\rangle
= A_{i\gamma}^{[\overrightarrow{\mathcal{M}},\overrightarrow{\mathcal{N}}]} |0\rangle \otimes \tau^{[\overrightarrow{\mathcal{M}},\overrightarrow{\mathcal{N}}]} |0\rangle.
\]
 Applying $ \langle \mathbf{p}+\mathbf{e}_\alpha | e^{H(\mathbf{x})} \otimes \langle \mathbf{p} | e^{H(\mathbf{x}')}$ to the above formula
yields
\begin{align*}
\operatorname{Res}_z z^{-1} \sum_{\gamma=1}^{N} \langle \mathbf{p}+\mathbf{e}_{\alpha} | e^{H(\mathbf{x})} \psi^{(\gamma)}(z) \tau^{[\overrightarrow{\mathcal{M}},\overrightarrow{\mathcal{N}}]} |0\rangle \cdot\langle \mathbf{p} | e^{H(\mathbf{x}')} \psi^{*(\gamma)}(z)  A_{i\gamma}^{[\overrightarrow{\mathcal{M}},\overrightarrow{\mathcal{N}}]} |0 \rangle\\
= \langle \mathbf{p}+\mathbf{e}_{\alpha} | e^{H(\mathbf{x})}\tau^{[\overrightarrow{\mathcal{M}},\overrightarrow{\mathcal{N}}]} |0\rangle \cdot\langle \mathbf{p} | e^{H(\mathbf{x}')} A_{i\gamma}^{[\overrightarrow{\mathcal{M}},\overrightarrow{\mathcal{N}}]} |0 \rangle,
\end{align*}
then by \eqref{pbilinear} and the relations \cite{kac2023,zabrodin2019}
\begin{align*}
&\langle \mathbf{p} | e^{H(\mathbf{x})}\psi^{(\gamma)}(z) =e^{\xi(x^{(\gamma)},z)}\varepsilon_{\gamma}(\mathbf{p})z^{p_{\gamma}-1}\langle \mathbf{p} - \mathbf{e}_\gamma | e^{H(\mathbf{x}- [z^{-1}]_\gamma)}\\
&\langle \mathbf{p} | e^{H(\mathbf{x})}\psi^{*(\gamma)}(z) =e^{-\xi(x^{(\gamma)},z)}\varepsilon_{\gamma}(\mathbf{p})z^{-p_{\gamma}}\langle \mathbf{p} + \mathbf{e}_\gamma | e^{H(\mathbf{x} + [z^{-1}]_\gamma)}
\end{align*}
we can prove that $\tau_{\alpha\beta}$ and $\rho_{i,\alpha\beta}$ in \eqref{solution1}-\eqref{solution3} will satisfy \eqref{tau2}. Other cases can be similarly proved.

\section{Examples of constrained matrix KP hierarchy}
In this section, we will give some examples of the constrained matrix KP hierarchy and give the corresponding solutions.
\subsection{$(1,1)$-$2\times2$ matrix KP hierarchy}
Set $k=m=1, N=2$ in \eqref{con-matrix-KP-Lax-operator}, then
$$L(\mathbf{t},\partial)=\partial+Q(\mathbf{t})\partial^{-1}R(\mathbf{t})^{\intercal},$$
where
$
Q=
\begin{pmatrix}
Q_{11} & Q_{12} \\
Q_{21} & Q_{22}
\end{pmatrix}
$
and
$
R =
\begin{pmatrix}
R_{11} & R_{12} \\
R_{21} & R_{22}
\end{pmatrix}.
$
Comparing with $L(\mathbf{t},\partial) = \partial + \sum_{i=1}^{+\infty}u_{i+1}(\mathbf{t})\partial^{-i}$, we can find
\begin{align*}
u_{2}(\mathbf{t})=Q(\mathbf{t})R(\mathbf{t})^{\intercal},\quad
u_{3}(\mathbf{t})=Q(\mathbf{t})R(\mathbf{t})^{\intercal}_{x},\quad
u_{4}(\mathbf{t})=Q(\mathbf{t})R(\mathbf{t})^{\intercal}_{xx}, \quad \cdots.
\end{align*}
In this case, $B_{2}(\mathbf{t},\partial)=\partial^{2}+2Q(\mathbf{t})R(\mathbf{t})^{\intercal}$.
Thus $Q(\mathbf{t})_{t_{2}}=B_{2}(\mathbf{t},\partial)(Q(\mathbf{t})),\, R(\mathbf{t})_{t_{2}}=-B_{2}^{*}(t\mathbf{},\partial)(R(\mathbf{t})),$
that is
\begin{align}\label{QRmatrix}
Q(\mathbf{t})_{t_{2}}=Q(\mathbf{t})_{xx}+2Q(\mathbf{t})R(\mathbf{t})^{\intercal}Q(\mathbf{t}), \quad R(\mathbf{t})=-R(\mathbf{t})_{xx}-2R(\mathbf{t})Q(\mathbf{t})^{\intercal}R(\mathbf{t}).
\end{align}
In term of matrix elements, \eqref{QRmatrix} can be expressed by
\begin{align*}
Q_{11,t_2} &= Q_{11,xx} + 2\bigl( Q_{11}\bigl(Q_{11}R_{11} + Q_{12}R_{12} + Q_{21}R_{21}\bigr) + Q_{12}Q_{21}R_{22} \bigr), \\
Q_{12,t_2} &= Q_{12,xx} + 2\bigl( Q_{12}\bigl(Q_{11}R_{11} + Q_{12}R_{12} + Q_{22}R_{22}\bigr) + Q_{11}Q_{22}R_{21} \bigr), \\
Q_{21,t_2} &= Q_{21,xx} + 2\bigl( Q_{21}\bigl(Q_{11}R_{11} + Q_{21}R_{21} + Q_{22}R_{22}\bigr) + Q_{11}Q_{22}R_{12} \bigr), \\
Q_{22,t_2} &= Q_{22,xx} + 2\bigl( Q_{22}\bigl(Q_{12}R_{12} + Q_{21}R_{21} + Q_{22}R_{22}\bigr) + Q_{12}Q_{21}R_{11} \bigr);
\end{align*}
and
\begin{align*}
-R_{11,t_2} &= R_{11,xx} + 2\bigl( R_{11}\bigl(R_{11}Q_{11} + R_{12}Q_{12} + R_{21}Q_{21}\bigr) + R_{12}R_{21}Q_{22} \bigr), \\
-R_{12,t_2} &= R_{12,xx} + 2\bigl( R_{12}\bigl(R_{11}Q_{11} + R_{12}Q_{12} + R_{22}Q_{22}\bigr) + R_{11}R_{22}Q_{21} \bigr), \\
-R_{21,t_2} &= R_{21,xx} + 2\bigl( R_{21}\bigl(R_{11}Q_{11} + R_{21}Q_{21} + R_{22}Q_{22}\bigr) + R_{11}R_{22}Q_{12} \bigr), \\
-R_{22,t_2} &= R_{22,xx} + 2\bigl( R_{22}\bigl(R_{12}Q_{12} + R_{21}Q_{21} + R_{22}Q_{22}\bigr) + R_{12}R_{21}Q_{11} \bigr).
\end{align*}

\subsection{Solutions of $(1,1)$-$2\times 2$ matrix KP hierarchy}
Here in this subsection, we will firstly construct the tau function solutions of the $(1,1)$-constrained $2$-component KP hierarchy, that is, $\tau_{\alpha\beta}({\bf x})$, $\rho_{\alpha\beta}({\bf x})$ and $\sigma_{\alpha\beta}({\bf x})$, then we can obtain the solutions of $(1,1)$-$2\times 2$ matrix KP hierarchy by
\begin{align}\label{QR}
Q_{\alpha\beta}({\bf t})=\frac{\rho_{\alpha\beta}({\bf x})}{\tau({\bf x})}\bigg|_{\mathbf{x}^{(\alpha)}=\mathbf{t}, 1\leq\alpha\leq N},\quad R_{\alpha\beta}({\bf t})=\frac{\sigma_{\alpha\beta}({\bf x})}{\tau({\bf x})}\bigg|_{\mathbf{x}^{(\alpha)}=\mathbf{t}, 1\leq\alpha\leq N}
\end{align}
where we have set $\rho_{1,\alpha\beta}$ and
$\sigma_{1,\alpha\beta}$ to be $\rho_{\alpha\beta}$ and
$\sigma_{\alpha\beta}$, respectively.
\begin{example}
Let us firstly set
$P=1, K=0, G=0, H=0, M_{11}=1, M_{12}=0$ in Theorem \ref{theorem:solution}, then we find $\mathcal{M}=1+2=3, \mathcal{N}=0$. In this case, let us assume $$\alpha_{11}=\psi^{(1)}(\lambda_{1})+\psi^{(2)}(\mu_{1}), \quad \alpha_{12}=\psi^{(1)}(\lambda_{2})+\psi^{(2)}(\mu_{2}),$$
so $\alpha_{11[1]}=\lambda_1\psi^{(1)}(\lambda_{1})+\mu_1\psi^{(2)}(\mu_{1})$.
So by Theorem \ref{theorem:solution}
\begin{align*}
&\tau_{11}(\mathbf{x})=\tau_{22}(\mathbf{x})=\langle \mathbf{p} | e^{H(\mathbf{x})}\alpha_{12}\alpha_{11[1]}\alpha_{11}|0\rangle,\\
&\tau_{12}(\mathbf{x})=\langle\mathbf{p}+\mathbf{e}_1 - \mathbf{e}_2 | e^{H(\mathbf{x})}\alpha_{12}\alpha_{11[1]}\alpha_{11}|0\rangle,\quad
\tau_{21}(\mathbf{x})=\langle\mathbf{p}+\mathbf{e}_2 - \mathbf{e}_1 | e^{H(\mathbf{x})}\alpha_{12}\alpha_{11[1]}\alpha_{11}|0\rangle,\\
&\rho_{11}(\mathbf{x})=\langle\mathbf{p}+\mathbf{e}_1 | e^{H(\mathbf{x})}\alpha_{12}\alpha_{11[2]}\alpha_{11[1]}\alpha_{11}|0\rangle,\quad \rho_{12}(\mathbf{x})=\langle\mathbf{p}+\mathbf{e}_1 | e^{H(\mathbf{x})}\alpha_{12[1]}\alpha_{12}\alpha_{11[1]}\alpha_{11}|0\rangle,\\
&\rho_{21}(\mathbf{x})=\langle\mathbf{p}+\mathbf{e}_2 | e^{H(\mathbf{x})}\alpha_{12}\alpha_{11[2]}\alpha_{11[1]}\alpha_{11}|0\rangle,\quad \rho_{22}(\mathbf{x})=\langle\mathbf{p}+\mathbf{e}_2 | e^{H(\mathbf{x})}\alpha_{12[1]}\alpha_{12}\alpha_{11[1]}\alpha_{11}|0\rangle,\\
&\sigma_{11}(\mathbf{x})=\langle\mathbf{p}-\mathbf{e}_1 | e^{H(\mathbf{x})}\alpha_{12}\alpha_{11}|0\rangle,\quad
\sigma_{12}(\mathbf{x})=\langle\mathbf{p}-\mathbf{e}_1 | e^{H(\mathbf{x})}\alpha_{11[1]}\alpha_{11}|0\rangle,\\
&\sigma_{21}(\mathbf{x})=\langle\mathbf{p}-\mathbf{e}_2 | e^{H(\mathbf{x})}\alpha_{12}\alpha_{11}|0\rangle,\quad
\sigma_{22}(\mathbf{x})=\langle\mathbf{p}-\mathbf{e}_2 | e^{H(\mathbf{x})}\alpha_{11[1]}\alpha_{11}|0\rangle.
\end{align*}
Since $\alpha_{12}\alpha_{11[1]}\alpha_{11}=(\lambda_1 - \mu_1) (\psi^{(1)}(\lambda_2)\psi^{(1)}(\lambda_1)\psi^{(2)}(\mu_1)+ \psi^{(1)}(\lambda_1)\psi^{(2)}(\mu_1)\psi^{(2)}(\mu_2))$, thus by \eqref{wicktheorem2} in Appendix, we can choose $\mathbf{p}=(1,2)$. So by the definition of $\langle\mathbf{p}|$ and \eqref{wicktheorem2}-\eqref{wicktheorem3} in Appendix, we have
\begin{align*}
&\tau_{21}(\mathbf{x})=\rho_{11}(\mathbf{x})=\rho_{21}(\mathbf{x})=\rho_{22}(\mathbf{x})=\sigma_{12}(\mathbf{x})=0,\\
&\tau_{11}(\mathbf{x})=\tau_{22}(\mathbf{x})=\mu_1(\lambda_1 - \mu_1)
e^{\xi(\mathbf{x}^{(1)},\lambda_1)
+ \xi(\mathbf{x}^{(2)},\mu_1)
+ \xi(\mathbf{x}^{(2)} - [\mu_1^{-1}],\mu_2)},\\
&\tau_{12}(\mathbf{x})=\lambda_2(\lambda_1 - \mu_1)e^{
\xi(\mathbf{x}^{(1)},\lambda_2)
+ \xi(\mathbf{x}^{(1)} - [\lambda_2^{-1}],\lambda_1)
+ \xi(\mathbf{x}^{(2)},\mu_2)},\\
&\rho_{12}(\mathbf{x})=-\lambda_2 \mu_2 (\lambda_1 - \mu_1)(\lambda_2 - \mu_2)e^{
\xi(\mathbf{x}^{(1)},\lambda_2)
+ \xi(\mathbf{x}^{(2)},\mu_2)
+ \xi(\mathbf{x}^{(1)} - [\lambda_2^{-1}],\lambda_1)
+ \xi(\mathbf{x}^{(2)} - [\mu_2^{-1}],\mu_1)},\\
&\sigma_{11}(\mathbf{x})=\mu_2 e^{\xi(\mathbf{x}^{(2)},\mu_2) + \xi(\mathbf{x}^{(2)} - [\mu_2^{-1}],\mu_1)},\\
&\sigma_{21}(\mathbf{x})= e^{\xi(\mathbf{x}^{(1)},\lambda_1) + \xi(\mathbf{x}^{(2)},\mu_2)}
- e^{\xi(\mathbf{x}^{(2)},\mu_1) + \xi(\mathbf{x}^{(1)},\lambda_2)},\\
&\sigma_{22}(\mathbf{x})=-(\lambda_1-\mu_1) e^{\xi(\mathbf{x}^{(1)},\lambda_1) + \xi(\mathbf{x}^{(2)},\mu_1)}.
\end{align*}
Finally by \eqref{QR}, we have
\begin{align*}
&Q_{11}({\bf t})=Q_{21}({\bf t})=Q_{22}({\bf t})=R_{12}({\bf t})=0,\\
&Q_{12}({\bf t})=(\lambda_2 - \lambda_1)(\lambda_2 - \mu_2) e^{\xi(\mathbf{t},\lambda_2)},\quad
R_{11}({\bf t})= -\frac{1}{\lambda_1-\mu_1}e^{-\xi(\mathbf{t},\lambda_1)},\\
&R_{21}({\bf t})=\frac{1}{(\lambda_1 - \mu_1)(\mu_1 - \mu_2)} \left(e^{-\xi(\mathbf{t},\mu_1)}-e^{\xi(\mathbf{t},\lambda_2)-\xi(\mathbf{t},\lambda_1)-\xi(\mathbf{t},\mu_2)} \right),\\
&R_{22}({\bf t})= \frac{1}{\mu_2 - \mu_1} e^{-\xi(\mathbf{t},\mu_2)}.
\end{align*}
\end{example}
\begin{example}
Next let us set $P=H=1$, $K=G=0=M_{11}=M_{12}=0$ and $N=2$ in Theorem \ref{theorem:solution}, then $\mathcal{M}=\mathcal{N}=2$. Here we will choose
\begin{align*}
\alpha_{11}=\psi^{(1)}(\lambda_{1})+\psi^{(1)}(\mu_{1}),\, \alpha_{12}=\psi^{(2)}(\lambda_{2})+\psi^{(2)}(\mu_{2}),\, \Phi^*_{11}=\psi^{*(2)}(z_{2}),\, \Phi^*_{12}=\psi^{*(1)}(z_{1}), \, \mathbf{p}=(0,0)
\end{align*}
then by Theorem \ref{theorem:solution}, we have
\begin{align*}
&\tau_{11}(\mathbf{x})=\tau_{22}(\mathbf{x})=\langle \mathbf{p} | e^{H(\mathbf{x})}\Phi^*_{12}\Phi^*_{11}\alpha_{12}\alpha_{11}|0\rangle,\\
&\tau_{12}(\mathbf{x})=\langle\mathbf{p}+\mathbf{e}_1 - \mathbf{e}_2 | e^{H(\mathbf{x})}\Phi^*_{12}\Phi^*_{11}\alpha_{12}\alpha_{11}|0\rangle,\quad
\tau_{21}(\mathbf{x})=\langle\mathbf{p}+\mathbf{e}_2 - \mathbf{e}_1 | e^{H(\mathbf{x})}\Phi^*_{12}\Phi^*_{11}\alpha_{12}\alpha_{11}|0\rangle,\\
&\rho_{11}(\mathbf{x})=\langle\mathbf{p}+\mathbf{e}_1 | e^{H(\mathbf{x})}\Phi^*_{12}\Phi^*_{11}\alpha_{12}\alpha_{11[1]}\alpha_{11}|0\rangle,\quad \rho_{12}(\mathbf{x})=\langle\mathbf{p}+\mathbf{e}_1 | e^{H(\mathbf{x})}\Phi^*_{12}\Phi^*_{11}\alpha_{12[1]}\alpha_{12}\alpha_{11}|0\rangle,\\
&\rho_{21}(\mathbf{x})=\langle\mathbf{p}+\mathbf{e}_2 | e^{H(\mathbf{x})}\Phi^*_{12}\Phi^*_{11}\alpha_{12}\alpha_{11[1]}\alpha_{11}|0\rangle,\quad
\rho_{22}(\mathbf{x})=\langle\mathbf{p}+\mathbf{e}_2 | e^{H(\mathbf{x})}\Phi^*_{12}\Phi^*_{11}\alpha_{12[1]}\alpha_{12}\alpha_{11}|0\rangle,\\
&\sigma_{11}(\mathbf{x})=\langle\mathbf{p}-\mathbf{e}_1 | e^{H(\mathbf{x})}\Phi^*_{12}\Phi^*_{11}\alpha_{12}|0\rangle,\quad
\sigma_{12}(\mathbf{x})=\langle\mathbf{p}-\mathbf{e}_1 | e^{H(\mathbf{x})}\Phi^*_{12}\Phi^*_{11}\alpha_{11}|0\rangle,\\
&\sigma_{21}(\mathbf{x})=\langle\mathbf{p}-\mathbf{e}_2 | e^{H(\mathbf{x})}\Phi^*_{12}\Phi^*_{11}\alpha_{12}|0\rangle,\quad
\sigma_{22}(\mathbf{x})=\langle\mathbf{p}-\mathbf{e}_2 | e^{H(\mathbf{x})}\Phi^*_{12}\Phi^*_{11}\alpha_{11}|0\rangle.
\end{align*}
Similarly by \eqref{wicktheorem2}-\eqref{wicktheorem3} in Appendix, we have
\begin{align*}
&\tau_{12}(\mathbf{x})=\tau_{21}(\mathbf{x})=\sigma_{12}(\mathbf{x})=\sigma_{21}(\mathbf{x})=\rho_{12}(\mathbf{x})=\rho_{21}(\mathbf{x})=0,\\
&\tau_{11}(\mathbf{x})=\tau_{22}(\mathbf{x})= e^{-\xi(\mathbf{x}^{(1)}, z_1)-\xi(\mathbf{x}^{(2)}, z_2)}
\left(
e^{\xi(\mathbf{x}^{(2)}+[z_2^{-1}], \lambda_2)}
+ e^{\xi(\mathbf{x}^{(2)}+[z_2^{-1}], \mu_2)}
\right)
\left(
e^{\xi(\mathbf{x}^{(1)}+[z_1^{-1}], \lambda_1)}
+ e^{\xi(\mathbf{x}^{(1)}+[z_1^{-1}], \mu_1)}
\right),\\
&\rho_{11}(\mathbf{x})=\frac{\lambda_1(\lambda_1-\mu_1)}{z_1}\cdot e^{-\xi(\mathbf{x}^{(1)}, z_1)-\xi(\mathbf{x}^{(2)}, z_2)+\xi(\mathbf{x}^{(1)}+[z_1^{-1}], \lambda_1)+\xi(\mathbf{x}^{(1)}+[z_1^{-1}]-[\lambda_1^{-1}], \mu_1)}
\left(
e^{\xi(\mathbf{x}^{(2)}+[z_2^{-1}], \lambda_2)}
+ e^{\xi(\mathbf{x}^{(2)}+[z_2^{-1}], \mu_2)}
\right),\\
&\rho_{22}(\mathbf{x})=-\frac{\lambda_2(\lambda_2-\mu_2)}{z_2}
\cdot e^{-\xi(\mathbf{x}^{(1)},z_1)-\xi(\mathbf{x}^{(2)},z_2)
+\xi(\mathbf{x}^{(2)}+[z_2^{-1}],\lambda_2)
+\xi(\mathbf{x}^{(2)}+[z_2^{-1}]-[\lambda_2^{-1}],\mu_2)}
\left(
e^{\xi(\mathbf{x}^{(1)}+[z_1^{-1}],\lambda_1)}
+ e^{\xi(\mathbf{x}^{(1)}+[z_1^{-1}],\mu_1)}
\right),\\
&\sigma_{11}(\mathbf{x})= z_1 \cdot e^{-\xi(\mathbf{x}^{(1)}, z_1)-\xi(\mathbf{x}^{(2)}, z_2)}
\left(
e^{\xi(\mathbf{x}^{(2)}+[z_2^{-1}], \lambda_2)}
+ e^{\xi(\mathbf{x}^{(2)}+[z_2^{-1}], \mu_2)}
\right),\\
&\sigma_{22}(\mathbf{x})=-z_2 \cdot e^{-\xi(\mathbf{x}^{(1)}, z_1)-\xi(\mathbf{x}^{(2)}, z_2)}
\left(
e^{\xi(\mathbf{x}^{(1)}+[z_1^{-1}], \lambda_1)}
+ e^{\xi(\mathbf{x}^{(1)}+[z_1^{-1}], \mu_1)}
\right).
\end{align*}
Finally by \eqref{QR},
\begin{align*}
&Q_{12}({\bf t})=Q_{21}({\bf t})=R_{12}({\bf t})=R_{21}({\bf t})=0,\\
&Q_{11}({\bf t})=\frac{(\lambda_1-\mu_1)^{2}}{(z_1-\mu_1)e^{-\xi({\bf t}, \mu_1)}
+ (z_1-\lambda_1)e^{-\xi({\bf t}, \lambda_1)}},\\
&Q_{22}({\bf t})=-\frac{(\lambda_2-\mu_2)^{2}}{(z_2-\mu_2)e^{-\xi({\bf t}, \mu_2)}
+(z_2-\lambda_2) e^{-\xi({\bf t}, \lambda_2)}},\\
&R_{11}({\bf t})=\frac{(z_{1}-\lambda_{1})(z_{1}-\mu_{1})}{(z_{1}-\mu_{1})e^{\xi({\bf t}, \lambda_1)}
+(z_{1}-\lambda_{1}) e^{\xi({\bf t}, \mu_1)}},\\
&R_{22}({\bf t})=-\frac{(z_{2}-\lambda_{2})(z_{2}-\mu_{2})}{(z_{2}-\mu_{2})e^{\xi({\bf t}, \lambda_2)}
+ (z_{2}-\lambda_{2})e^{\xi({\bf t}, \mu_2)}}.
\end{align*}
\end{example}
\section{Conclusions and discussions}
The purpose of this paper is to solve the constrained matrix KP hierarchy. The traditional method to do this is the quasi-determinants \cite{gilson2007,gelfand2005,wu2017,li2026}, but here instead we use the boson-fermion correspondence to construct solutions from the aspects of tau functions. In fact, the matrix KP hierarchy can be viewed as the reduction of the multi-component KP hierarchy by setting $\mathbf{x}^{(\alpha)}=\mathbf{t}$ for $1\leq \alpha\leq N$. Since there are tau functions for the multi-component KP hierarchy, we consider the corresponding reduction in the multi-component KP hierarchy for the constrained matrix KP hierarchy, which is called the constrained multi-component KP hierarchy, so that we can discuss the constrained matrix KP hierarchy from the aspects of tau functions. For the constrained multi-component KP hierarchy, we firstly construct the bilinear equations in terms of tau functions, and then construct the corresponding tau functions by boson-fermion correspondence. Finally by setting $\mathbf{x}^{(\alpha)}=\mathbf{t}$ for $1\leq \alpha\leq N$, we obtain the solutions for the constrained matrix KP hierarchy.

Just as we stated at the beginning of this section, as one kind of the non-commutative KP hierarchy, the matrix KP hierarchy can be solved by using the quasi-determinants. Therefore, we believe the results in this paper may be helpful in understanding non-commutative KP hierarchy in the framework of the multi-component KP hierarchy, and the relations between quasi-determinants and vacuum expectation values of free fermions.
\section*{Appendix}
In this appendix, let us give some formulas to compute the vacuum expectation value $\langle 0|a|0\rangle$ of $a\in\mathcal{A}$. Firstly given $w_1,\ldots,w_r\in V\oplus V^*$, the corresponding vacuum expectation value
\begin{align}\label{wicktheorem}
\langle 0|w_1\ldots w_r|0\rangle=
\begin{cases}
{\rm Pf}(W),&\text{if $r=2p$ for $p\geq 1$},\\
0,&\text{if $r=2p-1$ for $p\geq 1$},
\end{cases}
\end{align}
where ${\rm Pf}(W)$ is the pfaffian determinant \cite{hirota2004} of anti-symmetric matrix $W=(w_{i,j})_{1\leq i,j\leq 2p}$ given by
\begin{align*}
w_{i,j}=\begin{cases}
\langle 0|w_iw_j|0\rangle, & \text{if $i<j$,}\\
0,& \text{if $i=j$,}\\
-\langle 0|w_jw_i|0\rangle, & \text{if $i>j$.}
\end{cases}
\end{align*}
Here \eqref{wicktheorem} is referred as the so-called Wick theorem \cite{date1983,alexandrov2013}. Further notice that
\begin{align*}
\langle 0|\psi_i^{(\alpha)}\psi_j^{(\beta)}|0\rangle =\langle 0|\psi_i^{*(\alpha)}\psi_j^{*(\beta)}|0\rangle=0,\quad \langle 0|\psi_i^{(\alpha)}\psi_j^{*(\beta)}|0\rangle =\delta_{ij}\delta_{\alpha\beta} \cdot\theta(j<0),
\end{align*}
where $\theta(j>0)=1$ if $j>0$ is true, and $\theta(j>0)=0$ if $j>0$ is not true.

Based upon above facts, we have the following results. Given $w_1^{{(\gamma)}},\ldots,w_{2K_\gamma}^{(\gamma)}\in V^{{(\gamma)}}\oplus V^{{*(\gamma)}}$ ($1\leq\gamma\leq N$), we have
\begin{align}\label{wicktheorem2}
\left\langle 0\left|\prod_{\gamma=1}^N w_1^{(\gamma)}\ldots w_{2K_\gamma}^{(\gamma)} \right|0\right\rangle=\prod_{\gamma=1}^N\left\langle 0\left| w_1^{(\gamma)}\ldots w_{2K_\gamma}^{(\gamma)} \right|0\right\rangle.
\end{align}
Also for $\Theta_i^{(\gamma)}\in V^{(\gamma)}$ ($1\leq i\leq K_\gamma$) and $\Theta_j^{*(\gamma)}\in V^{*(\gamma)}$ ($1\leq j\leq L_\gamma$) that
\begin{align}\label{ptheta}
\left\langle {\bf p}\left|\prod_{\gamma=1}^N \Theta_1^{(\gamma)}\ldots\Theta_{K_\gamma}^{(\gamma)} \cdot\Theta_1^{(\gamma)*}\ldots\Theta_{L_\gamma}^{(\gamma)*} \right|0\right\rangle=0,\quad\text{if $p_\gamma\neq K_\gamma-L_\gamma$ for some $\gamma$}.
\end{align}
Besides \eqref{wicktheorem}, there are also the following determinant formulas\cite{alexandrov2013} for the vacuum expectation values
\begin{equation}
\begin{aligned}\label{wicktheorem3}
&\langle 0|\Theta_1\ldots\Theta_p\Theta_p^*\ldots\Theta_1^*|0\rangle=\det(\langle 0|\Theta_i\Theta_j^*|0\rangle)_{1\leq i,j\leq p},\\
&\langle 0|\Theta_1^*\ldots\Theta_p^*\Theta_p\ldots\Theta_1|0\rangle=\det(\langle 0|\Theta_i^*\Theta_j|0\rangle)_{1\leq i,j\leq p},
\end{aligned}
\end{equation}
where $\Theta_i\in V$ and $\Theta_{j}^*\in V^*$.

\bigskip
\noindent{\bf Acknowledgements}: \\
This work is supported by the National Natural Science Foundation of China
(Grant Nos. 12571271 and 12261072) and Huaqiao University Research Startup Funds (Grant No. 26BS117 ) .\\

\noindent{\bf Conflict of Interest}: \\
The author has no conflicts to disclose.\\

\noindent{\bf Data availability}: \\
Data sharing is not applicable to this article as no new data were created or analyzed in this study.\\

\end{document}